\newtheorem{theorem}{Theorem}[section]
\newtheorem{lemma}[theorem]{Lemma}
\newtheorem{corollary}[theorem]{Corollary}
\definecolor{mygreen}{rgb}{0, 0.7, 0}%
\definecolor{myblue}{rgb}{0.2, 0.2, 1}%
\newcommand*{\N}{\ensuremath{\mathbb{N}}}
\DeclareMathOperator{\dist}{dist}
\def\qed{$\hfill\Box$}
\def\concorde{\texttt{Concorde}}
\title{\mbox{Hard to Solve Instances of the} \\ Euclidean Traveling Salesman Problem}
\author{Stefan Hougardy and Xianghui Zhong\\
\small Research Institute for Discrete Mathematics\\
\small University of Bonn\\
\small Lenn\'estr.~2, 53113 Bonn, Germany\\[5mm]
            }
\date{\today} 
\begin{document}
\maketitle 

\begin{abstract}
The well known $4/3$ conjecture states that the integrality ratio of the subtour LP
is at most $4/3$ for metric Traveling Salesman instances.  
We present a family of Euclidean Traveling Salesman instances for which we prove that the 
integrality ratio of the subtour LP converges to $4/3$. These instances (using the rounded Euclidean norm) 
turn out to be hard to solve exactly
with \concorde, the fastest existing exact TSP solver. 
For a 200 vertex instance from our family of Euclidean Traveling Salesman instances \concorde\ needs 
several days of CPU time. This is more than 1,000,000
times the runtime for a TSPLIB instance of similar size. 
Thus our new family of Euclidean Traveling Salesman instances may serve as new benchmark instances for TSP algorithms.
\end{abstract}

{\small\textbf{keywords:} traveling salesman problem; Euclidean TSP, integrality ratio, subtour LP, exact TSP solver}

\section{Introduction}
The traveling salesman problem (TSP) is probably the most well-known problem in discrete optimization. 
An instance is given by $n$ vertices and their pairwise distances. The task is to find a shortest tour 
visiting each vertex exactly once. This problem is known to be NP-hard~\cite{Kar1972}. 

If the distances satisfy the triangle inequality, we obtain an important special case 
called \textsc{Metric TSP}. For this problem, no better algorithm than the $\frac{3}{2}$-approximation algorithm 
proposed by Christofides in 1976~\cite{Chr1976}, which was independently developed by Serdjukov~\cite{serdjukov}, is known. 
A well studied special case of the \textsc{Metric TSP} is the \textsc{Euclidean TSP}. Here an instance consists 
of points in the Euclidean plane and distances are defined by the $l_2$ norm. 
The \textsc{Euclidean TSP} is still NP-hard~\cite{GGJ1976, Pap1977} but is in some sense easier than 
the \textsc{Metric TSP}: For the \textsc{Euclidean TSP} there exists a PTAS~\cite{Aro1998} while for the
\textsc{Metric TSP} there cannot exist a $\frac{123}{122}$-approximation algorithm unless $P=NP$~\cite{KLS2015}.

The \emph{subtour LP} is a relaxation of a well known integer linear program for the TSP~\cite{DFJ1954}. If $K_n$ is a complete graph with non-negative edge costs $c_e$ for all $e\in E(K_n)$ the subtour LP is given by:
\begin{align*}
\min \sum_{e\in E(K_n)} &c_ex_e\\
0 \leq x_e&\leq 1 && \text{for all~} e\in E(K_n)\\
\sum_{e \in \delta(v)} x_e &=2 && \text{for all~} v \in V(K_n) \\
\sum_{e \in E(K_n[X])}x_e &\leq |X|-1 && \text{for all~} X \subset V(K_n).
\end{align*}

Although this LP has exponentially many inequalities, the separation problem and hence the LP itself can be solved in polynomial time~\cite{GLS1981}. 

The \emph{integrality ratio} of a TSP instance is the ratio of the length of an optimum tour to the length of an optimum
solution to the subtour LP. The integrality ratio of a TSP variant is the supremum over the integrality ratios of all instances
of this TSP variant. 
The exact integrality ratio of the \textsc{Metric TSP} is not yet known but it must lie between 
$4/3$~\cite{Wil1990} and $3/2$~\cite{Wol1980}.
It is conjectured that the exact value is $4/3$~\cite[page 35]{Wil1990} and this conjecture is known under the name \emph{$4/3$-Conjecture}. 
For the \textsc{Euclidean TSP} we also know only that the integrality ratio must lie between $4/3$~\cite{Hou2014} 
and $3/2$~\cite{Wol1980}. 
The lower bound of $4/3$ was proven in~\cite{Hou2014} by showing that for a certain 
family of \textsc{Euclidean TSP} instances the integrality ratio converges to $4/3$. In these instances all vertices lie 
on three parallel lines whose distances depend on the number of vertices. \medskip

\noindent\textbf{New results. } 
In this paper we present a new family of instances of the \textsc{Euclidean TSP}
which we call \emph{tetrahedron instances} as they arise as certain subdivisions of a 2-dimensional projection of the edges of a tetrahedron.
For these tetrahedron instances we prove that the integrality ratio of the subtour LP converges to $4/3$.
The rate of convergence is faster than for the instances constructed in~\cite{Hou2014}. Moreover, 
knowing structurally different families of  instances for the \textsc{Metric TSP} with integrality ratio converging to $4/3$
may be useful for attacking the $4/3$-Conjecture. 
Finding optimum solutions for the tetrahedron instances turns out to be much more difficult than for any known metric TSP instances of similar sizes:
When using \concorde~\cite{ABCC2003b}, the fastest known exact TSP solver, we observe that on instances with about 200 vertices, 
\concorde\ is more than 1,000,000 times slower than on TSPLIB instances~\cite{Rei1995} of similar size. 
Therefore, our tetrahedron instances may serve as new benchmark instances for the TSP 
and we provide them for download in TSPLIB format~\cite{Tnmurl}. \medskip

\noindent\textbf{Outline of the paper.} 
In Section~\ref{sec:TetrahedronInstances} we present the construction of the tetrahedron instances and
introduce a modification of the tetrahedron instances which results in instances with (up to symmetry) unique optimum tours. 
In Section~\ref{sec:structure} we prove some structural results 
of the optimum tours in these modified tetrahedron instances which allow us
to bound the length of an optimum TSP tour in these instances. We also compute a bound for an optimum solution of the subtour LP
for the modified tetrahedron instances. By combining these two results we can prove that 
for certain families of the modified tetrahedron instances the integrality ratio 
converges to $4/3$. We then show how to carry over this result to the (unmodified) tetrahedron instances.

In Section~\ref{sec:Experiments} we present runtime experiments with \concorde~\cite{ABCC2003b} on the tetrahedron instances. 
We compare the runtimes with the runtimes on the instances proposed in~\cite{Hou2014} and  
on instances of comparable size from the TSPLIB~\cite{Rei1995}.

\section{The Tetrahedron Instances and Their Structural Properties} \label{sec:TetrahedronInstances}

In this section we first define the tetrahedron instances and the modified tetrahedron instances. 
Then, we prove some general properties of the optimal tour and geometrical properties of these instances. 
With this preparation we show structure theorems for the optimal tour that determine it uniquely up to certain symmetries.

\subsection{Construction of the Tetrahedron Instances \boldmath $T_{n,m}$}
Now, we construct the instance $T_{n,m}$. Figure~\ref{fig:T(9,5)} shows as an example the instance $T_{9,5}$. 
Denote the Euclidean distance between two points $x$ and $y$ in the plane by $\dist(x,y)$.
Let $A,B,C$ be the vertices 
of an equilateral triangle with center $M$. The three sides of the triangle are called \emph{base sides}, 
the closed line segments connecting $A$, $B$, or $C$ to $M$ are called \emph{internal segments}. 
The vertex $M$ belongs to all three internal segments. 
Denote the base side opposite to $A,B$ respectively $C$ by $a,b$ respectively $c$ and the internal segments connecting $A,B$ 
respectively $C$ with $M$ by $e,f$ respectively $g$. 

Given such an equilateral triangle $ABC$ we define the \emph{tetrahedron instance} $T_{n,m}$ for $n, m \in\N$ as follows.
We refine each of the base sides $a$, $b$, and $c$ by $n-1$ equidistant vertices 
$a_1,\dots,a_{n-1}$, $b_1,\dots,b_{n-1}$, and $c_1,\dots,c_{n-1}$. Moreover we define
$a_0 := B$, $a_n := C$, $b_0 := C$, $b_n := A$, $c_0 := A$, and $c_n := B$.
For $i \in  \{0, \ldots, n\}$ the vertices $a_i$, $b_i$, and $c_i$ are called \emph{base vertices}.

Similarly, we refine each of the internal segments $e$, $f$, and $g$ by $m - 1$ equidistant vertices 
$e_1,\dots,e_{m-1}$, $f_1,\dots,f_{m-1}$, and $g_1,\dots,g_{m-1}$ numbered in ascending order from 
$A =: e_0$, $B =: f_0$,  respectively $C =: g_0$ to the center $M =:e_m = f_m = g_m$. 
For $i \in \{ 1,\dots, m\}$ the vertices $e_i$, $f_i$, and  $g_i$ are called \emph{internal vertices}. 

Finally, we rotate and scale all coordinates such that the side $c$ is parallel to the $x$-axis, the vertex $C$ is above
the side $c$, and 
the distance between two consecutive base vertices is~$1$:
\begin{equation}
\dist(a_i,a_{i+1})=\dist(b_i,b_{i+1})=\dist(c_i,c_{i+1})= 1  \quad \mbox{for} \ 0\leq i < n.
\end{equation}
This implies $\displaystyle \dist(A,M) = \dist(B,M) = \dist(C, M) = \frac{n}{\sqrt 3}$ and therefore
\begin{equation}
\dist(e_i,e_{i+1})=\dist(f_i,f_{i+1})=\dist(g_i,g_{i+1})= \frac{n}{\sqrt 3\cdot m} \quad \mbox{for} \ 0\leq i < m \label{dist_e_i_e_i+1}.
\end{equation}
The smallest possible distance between any two different internal vertices will be denoted by $\gamma$, i.e., we have
\begin{equation}
\gamma = \frac{n}{\sqrt 3\cdot m} . \label{gamma}
\end{equation}

In total, the instance $T_{n,m}$ has $3(n+m) - 2$ vertices and a possible way to assign explicit coordinates to all 
these vertices satisfying the above conditions is to assign for $i = 0, \ldots, n$ and for $j=0, \ldots, m$:

\begin{equation}\left.~~~~~
\begin{array}{rcrr}
 a_i & ~:=~ \big( &   n - i/2         ,~ & i\cdot \sqrt 3 /2 \big) \\[2mm]
 b_i & ~:=~ \big( & n/2 - i/2         ,~ & (n-i)\cdot \sqrt 3 / 2 \big) \\[2mm]
 c_i & ~:=~ \big( &        i          ,~ &  0 \vphantom{\sqrt 3 /2} \big) \\[2mm]
 e_j & ~:=~ \big( & j \cdot n/(2m)    ,~ & j \cdot n/(2 \sqrt 3 m) \big) \\[2mm]
 f_j & ~:=~ \big( & n - j \cdot n/(2m),~ & j \cdot n/(2 \sqrt 3 m)\big) \\[2mm]
 g_j & ~:=~ \big( & n/2               ,~ & n \cdot \sqrt 3 / 2 - j \cdot n /(\sqrt 3 m)\big) \\[2mm]
\end{array} ~~~~~~~~~~\right\}
\end{equation}

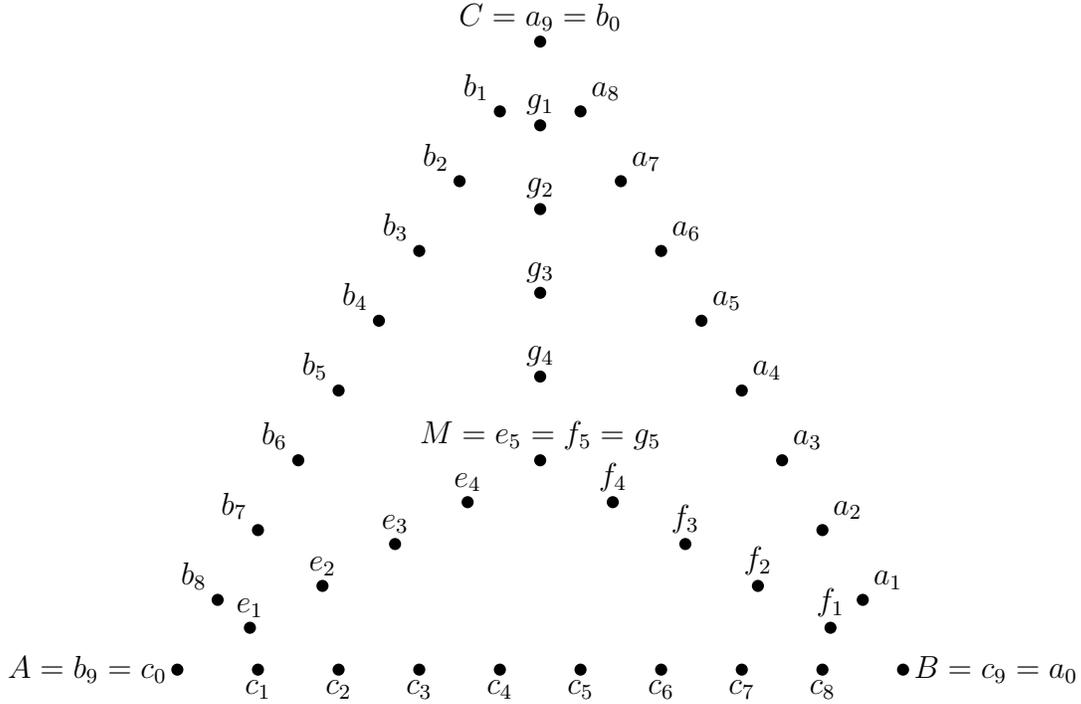
\begin{figure}[ht]
\centering
\begin{tikzpicture}[scale=1.072]
\def\n{9}
\def\m{5}
\def\sqrt3{1.73205}
\def\mycircle{circle(0.75mm)}

\pgfmathparse{\n - 1}
\foreach \i in {1,...,\pgfmathresult}
{
   \fill (\n     - \i / 2, \i * \sqrt3 / 2) \mycircle node[anchor = south west] {$a_{\i}$};
   \pgfmathparse{(\n - \i) * \sqrt3 / 2}
   \fill (\n / 2 - \i / 2,  \pgfmathresult) \mycircle  node[anchor = south east] {$b_{\i}$};
   \fill (             \i,               0) \mycircle  node[anchor = north] {$c_{\i}$};
}

\pgfmathparse{\m - 1}
\foreach \j in {1,...,\pgfmathresult}
{
   \fill(     \j * \n / 2 / \m, \j * \n / 2 / \sqrt3 / \m) \mycircle node[anchor = south] {$e_{\j}$};
   \fill(\n - \j * \n / 2 / \m, \j * \n / 2 / \sqrt3 / \m) \mycircle node[anchor = south] {$f_{\j}$};
   \fill(               \n / 2, \n * \sqrt3 / 2 - \j * \n / \sqrt3 / \m) \mycircle node[anchor = south] {$g_{\j}$};
}

\fill(\n / 2 ,  \n / 2 / \sqrt3) \mycircle node[anchor = south] {$M = e_\m = f_\m = g_\m$};
\fill(\n, 0) \mycircle node[anchor = west] {$B = c_\n = a_0$};
\fill(\n / 2,  \n * \sqrt3 / 2) \mycircle node[anchor = south] {$C = a_\n = b_0$};
\fill(0, 0) \mycircle node[anchor = east] {$A = b_\n = c_0$};
\end{tikzpicture}
\caption{The tetrahedron instance $T_{9,5}$.}
\label{fig:T(9,5)}
\end{figure}

Whenever we use the words left, right, above or below to express the relative position between two points
in an instance $T_{n,m}$, we always consider this instance to be embedded in such a way that $c$ is parallel to the $x$-axis and $A$, $B$, and $C$ are oriented counterclockwise.

\subsection{The Modified Tetrahedron Instances \boldmath $T'_{n,m}$}

To analyze the integrality ratio of the instances $T_{n,m}$ it turns out to be useful to introduce slightly
modified instances $T'_{n,m}$. The instance $T'_{n,m}$ is obtained from $T_{n,m}$ by removing all internal vertices 
whose distance to one of the vertices $A$, $B$, or $C$ is less than $\max\{10, 4 + 4 \gamma\}$. Figure~\ref{fig:T'(48,24)} shows the instance $T'_{48,24}$.
We have: 
\begin{equation}
\dist(A, e_j) \ge \max\{10, 4 + 4 \gamma\} \mbox{ for all } e_j\in T'_{n,m}. \label{mindistproperty}
\end{equation}
As $\dist(c_i, e_j) \ge \dist(A, e_j) \cdot \sin (30^\circ) = \frac12 \cdot \dist(A, e_j)$ we get 
\begin{equation}
\dist(c_i, e_j) \ge 5 \mbox{ for all } c_i, e_j\in T'_{n,m}. \label{mindistproperty2}
\end{equation}
We will need that an instance $T'_{n,m}$ contains at least four internal vertices. We therefore 
require that 
\begin{equation}
n \ge 40  \mbox{ ~and~ } m \ge 22.  \label{eq:nm}
\end{equation}
Using~(\ref{dist_e_i_e_i+1}) this implies 
$$ \dist(A, e_{m-1}) ~=~ \frac{(m-1)\cdot n}{\sqrt 3 \cdot m} ~\ge~ \frac {21}{22 \sqrt 3}\cdot n ~>~ \frac{n}{2} ~\ge~ 10$$
and using~(\ref{gamma}) this implies 
$$ \dist(A, e_{m-1}) ~=~ \frac{(m-5)\cdot n}{\sqrt 3 \cdot m} + \frac{4\cdot n}{\sqrt 3 \cdot m} \ge \frac {17}{22 \sqrt 3}\cdot n + 4 \gamma ~>~ 4+ 4 \gamma$$
and therefore $T'_{n,m}$ contains at least the four internal vertices $e_{m-1}$, $f_{m-1}$, $g_{m-1}$, and $M$ if 
$n \ge 40$  and  $m \ge 22$. 

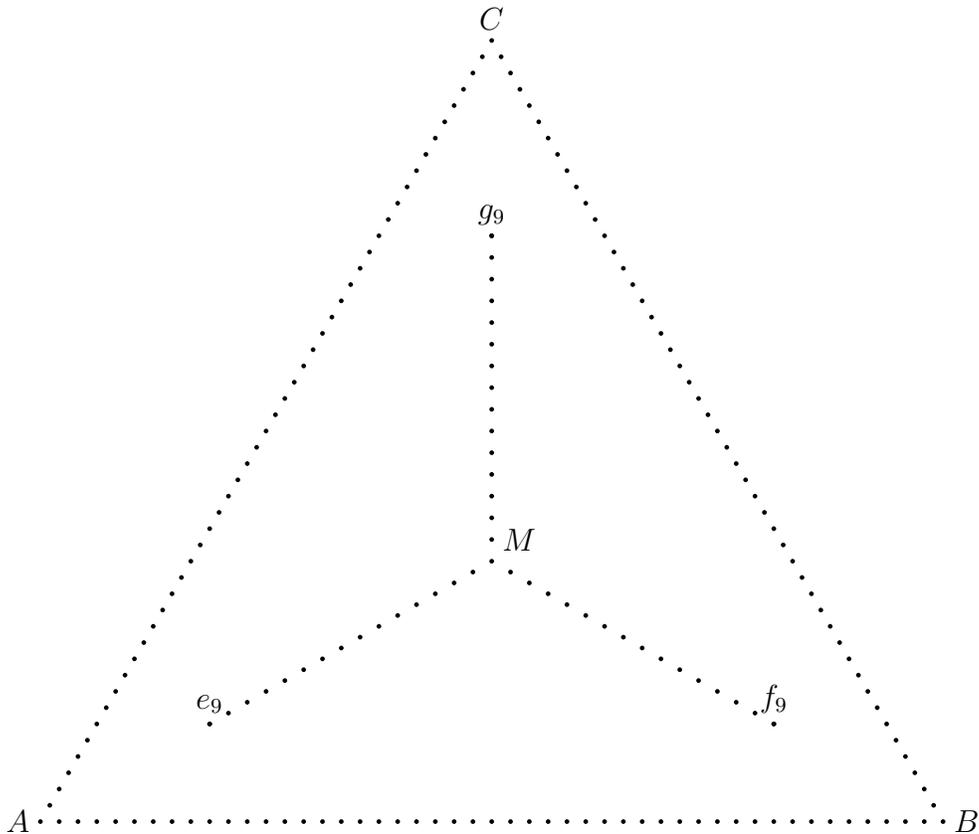
\begin{figure}[ht]
\centering
\begin{tikzpicture}[scale=0.25]

\def\n{48}
\def\m{24}
\def\sqrt3{1.73205}
\def\mycircle{circle(1.25mm)}

\pgfmathparse{(\n - 1)}
\foreach \i in {1,...,\pgfmathresult}
{
   \fill (\n     - \i / 2, \i * \sqrt3 / 2) \mycircle ;
   \pgfmathparse{(\n - \i) * \sqrt3 / 2}
   \fill (\n / 2 - \i / 2,  \pgfmathresult) \mycircle;
   \fill (             \i,               0) \mycircle;
}

\pgfmathparse{\m - 1}
\foreach \j in {9,...,\pgfmathresult}
{
   \fill(     \j * \n / 2 / \m, \j * \n / 2 / \sqrt3 / \m) \mycircle ;
   \fill(\n - \j * \n / 2 / \m, \j * \n / 2 / \sqrt3 / \m) \mycircle ;
   \fill(               \n / 2, \n * \sqrt3 / 2 - \j * \n / \sqrt3 / \m) \mycircle ;
}

\pgfmathparse{\m - 1}
\foreach \j in {9}
{
   \fill(     \j * \n / 2 / \m, \j * \n / 2 / \sqrt3 / \m) \mycircle node[anchor = south] {$e_{\j}$};
   \fill(\n - \j * \n / 2 / \m, \j * \n / 2 / \sqrt3 / \m) \mycircle node[anchor = south] {$f_{\j}$};
   \fill(               \n / 2, \n * \sqrt3 / 2 - \j * \n / \sqrt3 / \m) \mycircle node[anchor = south] {$g_{\j}$};
}

\fill(\n / 2 ,  \n / 2 / \sqrt3) \mycircle node[anchor = south west] {$M$};
\fill(\n, 0) \mycircle node[anchor = west] {$B$};
\fill(\n / 2,  \n * \sqrt3 / 2) \mycircle node[anchor = south] {$C$};
\fill(0, 0) \mycircle node[anchor = east] {$A$};
\end{tikzpicture}
\caption{The modified tetrahedron instance $T'_{48,24}$.}
\label{fig:T'(48,24)}
\end{figure}

\begin{figure}[ht]
\centering

\begin{tikzpicture}[scale=0.3]
\def\n{24}
\def\m{18}
\def\sqrt3{1.73205}
\def\mycircle{circle(1.25mm)}

\def\a#1{(\n     - #1 / 2, #1 * \sqrt3 / 2)}
\def\c#1{(#1, 0)}
\def\e#1{(#1 * \n / 2 / \m, #1 * \n / 2 / \sqrt3 / \m)}
\def\f#1{(\n - #1 * \n / 2 / \m, #1 * \n / 2 / \sqrt3 / \m)}
\def\g#1{(\n / 2, \n * \sqrt3 / 2 - #1 * \n / \sqrt3 / \m)}

\draw[red]  (0, 0) -- (\n / 2,  \n * \sqrt3 / 2) ;
\draw[red]  \c0 -- \c9;
\draw[red]  \c{10} -- \c{13};
\draw[red]  \c{14} -- \a0;
\draw[red]  \a{16} -- \g{15} -- \g{12} -- \e{12} -- \e9 -- \g9 -- \g{11} -- \a{17} -- \a{24};
\draw[red]  \a0 -- \a6 -- \f9 -- \f{12} -- \a7 -- \a{16};
\draw[red]  \c9 -- \e{13} -- \e{16} -- \c{10};
\draw[red]  \c{13} -- \e{17} -- \g{16} -- \g{18} -- \f{13} -- \c{14};

\pgfmathparse{(\n - 1)}
\foreach \i in {1,...,\pgfmathresult}
{
   \fill (\n     - \i / 2, \i * \sqrt3 / 2) \mycircle ;
   \pgfmathparse{(\n - \i) * \sqrt3 / 2}
   \fill (\n / 2 - \i / 2,  \pgfmathresult) \mycircle;
   \fill (             \i,               0) \mycircle;
}

\pgfmathparse{\m - 1}
\foreach \j in {9,...,\pgfmathresult}
{
   \fill(     \j * \n / 2 / \m, \j * \n / 2 / \sqrt3 / \m) \mycircle ;
   \fill(\n - \j * \n / 2 / \m, \j * \n / 2 / \sqrt3 / \m) \mycircle ;
   \fill(               \n / 2, \n * \sqrt3 / 2 - \j * \n / \sqrt3 / \m) \mycircle ;
}

\fill(\n / 2 ,  \n / 2 / \sqrt3) \mycircle ;%node[anchor = south west] {$M$};
\fill(\n, 0) \mycircle ;%node[anchor = west] {$B$};
\fill(\n / 2,  \n * \sqrt3 / 2) \mycircle ;%node[anchor = south] {$C$};
\fill(0, 0) \mycircle ;%node[anchor = east] {$A$};

\end{tikzpicture}\hspace*{5mm}
\begin{tikzpicture}[scale=0.3]
\def\n{24}
\def\m{18}
\def\sqrt3{1.73205}
\def\mycircle{circle(1.25mm)}

\def\a#1{(\n     - #1 / 2, #1 * \sqrt3 / 2)}
\def\c#1{(#1, 0)}
\def\e#1{(#1 * \n / 2 / \m, #1 * \n / 2 / \sqrt3 / \m)}
\def\f#1{(\n - #1 * \n / 2 / \m, #1 * \n / 2 / \sqrt3 / \m)}
\def\g#1{(\n / 2, \n * \sqrt3 / 2 - #1 * \n / \sqrt3 / \m)}

\draw[red]  (0, 0) -- (\n / 2,  \n * \sqrt3 / 2) ;
\draw[red]  \c0 -- \c9;
\draw[red]  \c{10} -- \a0;

\draw[red]  \a0 -- \a6 -- \f9 -- \f{12} -- \a7 -- \a{24};
\draw[red]  \c9 -- \e{13} -- \e{18} -- \c{10};
\draw[red]  \c{9} -- \e{9} -- \e{12} -- \g9 -- \g{17} -- \f{17} -- \f{13} -- \c{10};

\pgfmathparse{(\n - 1)}
\foreach \i in {1,...,\pgfmathresult}
{
   \fill (\n     - \i / 2, \i * \sqrt3 / 2) \mycircle ;
   \pgfmathparse{(\n - \i) * \sqrt3 / 2}
   \fill (\n / 2 - \i / 2,  \pgfmathresult) \mycircle;
   \fill (             \i,               0) \mycircle;
}

\pgfmathparse{\m - 1}
\foreach \j in {9,...,\pgfmathresult}
{
   \fill(     \j * \n / 2 / \m, \j * \n / 2 / \sqrt3 / \m) \mycircle ;
   \fill(\n - \j * \n / 2 / \m, \j * \n / 2 / \sqrt3 / \m) \mycircle ;
   \fill(               \n / 2, \n * \sqrt3 / 2 - \j * \n / \sqrt3 / \m) \mycircle ;
}

\fill(\n / 2 ,  \n / 2 / \sqrt3) \mycircle ;%node[anchor = south west] {$M$};
\fill(\n, 0) \mycircle ;%node[anchor = west] {$B$};
\fill(\n / 2,  \n * \sqrt3 / 2) \mycircle ;%node[anchor = south] {$C$};
\fill(0, 0) \mycircle ;%node[anchor = east] {$A$};

\end{tikzpicture}

\caption{A tour with four trips in the instance  $T'_{24,18}$ (left). A pseudo-tour in the instance $T'_{24,18}$ (right).}
\label{fig:T'(24,18)}
\end{figure}

Note that as long as the ratio $n/m$ is constant, the instances $T_{n,m}$ and $T'_{n,m}$ differ by a constant number of vertices only. 
Therefore, as we will see later, $T_{n,m}$ and $T'_{n,m}$ have asymptotically the same integrality ratio, if $n/m$ is constant.

\section{Structural Properties of Optimal Tours}\label{sec:structure}
In this section we show that the shape of the tour shown in Figure~\ref{fig:OptimumTour} is the unique optimal tour for $T'_{n,m}$ up to symmetry. To achieve this, we first summarize some previous results for optimal tours. Then, we introduce the concept of a pseudo-tour from which we derive structural properties of optimal tours in Theorem~\ref{long}. After that, we make further geometrical observations culminating in Theorem~\ref{thm:uniqueOpt} describing the unique optimal tour up to symmetry.

A \emph{tour} for a Euclidean Traveling Salesman instance is a polygon that contains all the given vertices.
A polygon is called \emph{simple} if no two of its edges intersect except for the common vertex of two consecutive edges.
As an immediate consequence from the triangle inequality we have

\begin{lemma}[Flood 1956~\cite{Flo1956}] \label{lemma:nocrossing}
Unless all vertices lie on one line, an optimal tour of a Euclidean Traveling Salesman instance is a simple polygon. \qed
\end{lemma}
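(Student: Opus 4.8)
The plan is to argue by contradiction using the classical \emph{$2$-opt uncrossing} move. Suppose the optimal tour $T$ is not simple. Since not all vertices lie on one line, $T$ has two edges that meet in a point which is not the shared endpoint of two consecutive edges. Writing $T$ as a cyclic sequence of vertices, I would pick two such edges $\{a,b\}$ and $\{c,d\}$, occurring in the cyclic order $a,b,\dots,c,d,\dots$ along $T$, and let $x$ be a point in their intersection. The tour then decomposes as $a,b,P,c,d,Q$, where $P$ is the subpath from $b$ to $c$ and $Q$ the subpath from $d$ back to $a$.

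First I would settle the generic case, in which the two edges cross \emph{transversally}: the point $x$ lies in the relative interior of both segments and the two edges lie on distinct lines. Reversing the subpath $b,P,c$ yields a new closed walk $a,c,\bar P,b,d,Q$, which is again a Hamiltonian cycle on the same vertex set, hence a valid tour $T'$; it differs from $T$ only in that the edges $\{a,b\}$ and $\{c,d\}$ are replaced by $\{a,c\}$ and $\{b,d\}$. Because $x$ lies on the segment from $a$ to $b$ and on the segment from $c$ to $d$, we have
\[
\dist(a,b) = \dist(a,x) + \dist(x,b), \qquad \dist(c,d) = \dist(c,x) + \dist(x,d).
\]
Since the two edges lie on distinct lines, $a,x,c$ are not collinear and $b,x,d$ are not collinear, so the two triangle inequalities are \emph{strict}, giving
\[
\dist(a,c) + \dist(b,d) < \dist(a,x) + \dist(x,c) + \dist(b,x) + \dist(x,d) = \dist(a,b) + \dist(c,d).
\]
Thus $T'$ is strictly shorter than $T$, contradicting optimality.

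It then remains to exclude the degenerate intersections, and this is where I expect the real work to lie, and where the hypothesis that the vertices are not all collinear becomes indispensable. There are two such configurations: an edge of $T$ passing through a third vertex, and two collinear edges of $T$ overlapping in a segment. In both, all the points involved lie on a common line $\ell$, and the triangle inequalities above become equalities, so the uncrossing move only guarantees a tour of equal length. To rule these out I would use that some vertex $z$ lies off $\ell$: one can reroute the tour through $z$, or reduce to the transversal case by a small perturbation of the degenerate configuration, to obtain a strict improvement and hence again a contradiction. The transversal case is immediate; the bookkeeping of these collinear degeneracies is the main obstacle, and it is precisely the clause \emph{unless all vertices lie on one line} that makes their exclusion possible.
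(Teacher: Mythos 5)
The paper itself does not prove this lemma: it is Flood's theorem, quoted with a citation and the remark that it is an ``immediate consequence of the triangle inequality'', so your proposal can only be judged against the standard uncrossing proof. Your generic (transversal) case is exactly that proof, done correctly: two transversally crossing edges are automatically vertex-disjoint (two segments sharing a vertex and a further point would be collinear), the unique cycle-preserving $2$-opt exchange replaces $\{a,b\},\{c,d\}$ by $\{a,c\},\{b,d\}$, and the two strict triangle inequalities through the crossing point give a strictly shorter tour.

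The degenerate collinear case, however, is a genuine gap --- it is a missing argument, not ``bookkeeping'' --- and neither of your suggested repairs works as stated. Perturbation is illegitimate: the point set \emph{is} the instance, a strict improvement after moving the points is a statement about a different instance, and strictness does not survive the limit back to the collinear position, so no contradiction with the optimality of $T$ is obtained. ``Rerouting through $z$'' is not a defined operation, since $z$ is already visited by the tour. Worse, the move your strategy rests on fails \emph{identically} in this case: put two tour edges on a line $\ell$ at positions $a=0$, $b=2$, $c=3$, $d=1$, the tour visiting $a,b,\dots,c,d,\dots$ in this cyclic order. The segments $[0,2]$ and $[1,3]$ overlap, yet the only cycle-preserving exchange gives $\dist(a,c)+\dist(b,d)=3+1=2+2=\dist(a,b)+\dist(c,d)$, i.e.\ zero gain, and the new edges $[0,3]$ and $[1,2]$ still overlap; iterating the exchange merely toggles between the two configurations. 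So overlapping collinear edges can never be excluded by $2$-opt on that pair alone; one needs a different idea. For instance: if a vertex $v$ lies in the interior of a tour edge $\{a,b\}\subset\ell$ and has a tour neighbour $w\notin\ell$, then \emph{both} pairings $\{a,v\},\{b,w\}$ and $\{a,w\},\{v,b\}$ are strictly improving (because $b,v,w$ resp.\ $a,v,w$ are not collinear), and one of them preserves the cycle; hence both tour edges at such a $v$ must lie in $\ell$, and propagating this traps the tour in $\ell$, contradicting the existence of a vertex off $\ell$. Some argument of this kind is exactly where the hypothesis ``unless all vertices lie on one line'' is consumed; without it, your proposal establishes the lemma only for tours with no collinear-overlapping edge pairs, which is precisely the case the hypothesis exists to handle.
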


An important consequence of Lemma~\ref{lemma:nocrossing} is the following result:

\begin{lemma}[\cite{DDR1994}, page 142] \label{lemma:convexhull}
An optimal tour of a Euclidean Traveling Salesman instance visits the vertices on the boundary of the convex hull 
of all vertices in their cyclic order. \qed
\end{lemma}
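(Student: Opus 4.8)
The plan is to reduce everything to the no-crossing property of Lemma~\ref{lemma:nocrossing} and then show that an out-of-order visit of the hull vertices would force the tour to cross itself. First I would dispose of the degenerate case: if all vertices are collinear the statement is vacuous, so I may assume the vertices are not all on one line. Then Lemma~\ref{lemma:nocrossing} applies and an optimal tour $T$ is a simple closed polygon, i.e.\ a Jordan curve. Since every vertex lies in the convex hull $H$ of the instance, the whole polygon $T$ is contained in the compact convex region $\bar H$, which is homeomorphic to a closed disk with boundary $\partial H$, and all hull vertices lie on $\partial H$.

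Next I would set up the combinatorial comparison. Let $h_1,\dots,h_k$ be the hull vertices listed in the cyclic order in which they occur along $\partial H$, and consider the cyclic order in which $T$ visits them. A routine fact about cyclic sequences says that two cyclic orders of the same $k$ elements agree up to rotation and reflection unless there is an interleaving quadruple, that is, four hull vertices occurring as $p,q,r,s$ along $\partial H$ but as $p,r,q,s$ along $T$. Arguing by contradiction, I assume such a quadruple exists.

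Now comes the geometric heart. In the tour's cyclic order $p,r,q,s$ the four points split $T$ into four arcs; in particular the sub-arc $\pi_{pr}$ of $T$ joining $p$ to $r$ and the sub-arc $\pi_{qs}$ joining $q$ to $s$ are non-adjacent arcs of the Jordan curve and share no endpoints, so by simplicity $\pi_{pr}\cap\pi_{qs}=\emptyset$. On the other hand, on the boundary circle $\partial H$ the points occur as $p,q,r,s$, so the pair $\{p,r\}$ separates $q$ from $s$. Both $\pi_{pr}$ and $\pi_{qs}$ are arcs inside the disk $\bar H$ whose endpoints are these four interleaved boundary points; by the Jordan curve theorem the arc $\pi_{pr}$ separates $\bar H$ into two parts with $q$ and $s$ lying in different parts, so any arc from $q$ to $s$ inside $\bar H$ — in particular $\pi_{qs}$ — must meet $\pi_{pr}$. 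This contradicts $\pi_{pr}\cap\pi_{qs}=\emptyset$ and finishes the proof.

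I expect the main obstacle to be exactly this last topological step: rigorously justifying that two interleaved chords, realized as arcs in $\bar H$, are forced to intersect, together with the care needed when the tour arcs touch $\partial H$ at additional hull vertices in their interior and when several vertices lie on a common hull edge (the collinear sub-case), where ``cyclic order along $\partial H$'' must be read as the order encountered while traversing the boundary. Everything else — the reduction to a simple polygon and the extraction of an interleaving quadruple — is either quoted from Lemma~\ref{lemma:nocrossing} or is routine bookkeeping about cyclic permutations.
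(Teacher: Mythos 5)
Your proposal is correct, but there is nothing in the paper to compare it against line by line: the paper does not prove this lemma at all, it quotes it from~\cite{DDR1994} as a known result, the only hint being the remark that it is ``an important consequence of Lemma~\ref{lemma:nocrossing}''. Your argument supplies exactly that missing derivation, and it is the standard one: by Lemma~\ref{lemma:nocrossing} an optimal tour is a simple closed polygon contained in the closed convex hull $\bar H$; if the hull vertices were not visited in their boundary order, one extracts an interleaving quadruple $p,q,r,s$ (this needs at least four hull vertices, but for three or fewer the claim is trivial); the tour arcs $\pi_{pr}$ and $\pi_{qs}$ are disjoint because a simple closed curve is injective, while $\pi_{pr}$, being an arc in the disk $\bar H$ with endpoints $p,r$ on $\partial H$, puts the two open arcs of $\partial H\setminus\{p,r\}$ into different components of $\bar H\setminus\pi_{pr}$, so $\pi_{qs}$ must meet it --- a contradiction. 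Two small remarks. First, the collinear case is degenerate rather than vacuous (the cyclic order along $\partial H$ is then ill defined); this costs nothing here since the tetrahedron instances have a nondegenerate triangular hull. Second, the topological care you flag at the end is indeed the only delicate point, but in this setting it is routine because everything is polygonal: since $\dist(\pi_{pr},\pi_{qs})>0$, both arcs can be perturbed into the interior of $\bar H$ (keeping their endpoints fixed) by less than half this distance, after which the separation step is the textbook statement about two disjoint arcs in a closed disk whose endpoint pairs cannot interleave on the boundary circle. So your proof is sound; what it buys over the paper's treatment is self-containedness, at the cost of the topological bookkeeping that the citation to~\cite{DDR1994} (and ultimately to Lemma~\ref{lemma:nocrossing} of~\cite{Flo1956}) sidesteps.
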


In case of the tetrahedron instances, the vertices on the boundary of the convex hull are
exactly the base vertices. 
From now on we fix the orientation of an optimal TSP tour of the tetrahedron instance 
such that the base vertices are visited in counterclockwise order.
We use the notation $(x, y)$ for an edge of an oriented optimum tour.

A \emph{subpath} of an oriented tour consists of vertices $v_1,\dots,v_l$, such that $v_{i+1}$ is visited 
by the tour immediately after $v_i$ for all $i=1,\dots, l-1$. 
A subpath of a tour starting and ending at base vertices and containing no other base vertices is called a \emph{trip} 
if it contains at least one internal vertex. The first and the last internal vertex of a trip (which may coincide)
are called the \emph{connection vertices} of the trip.
By Lemma~\ref{lemma:convexhull} the base vertices are visited counterclockwise. Therefore, the two end vertices of a trip
must be consecutive base vertices belonging to the same side of the triangle $ABC$. We call this side the
\emph{main side} of the trip.

Each optimum tour of a tetrahedron instance can be decomposed into a set of trips and a set
of edges connecting consecutive base vertices such that all internal vertices are contained in some trip and two different trips
intersect in at most one base vertex (see Figure~\ref{fig:T'(24,18)} left). A set of edges between consecutive base vertices together with a set of trips 
that contain all internal vertices is called
a \emph{pseudo-tour} if it is not a tour and satisfies the following property:  for any two consecutive base vertices  there exists at 
least one trip 
having these two vertices as end vertices if and only if these two vertices are not connected by an edge (see Figure~\ref{fig:T'(24,18)} right). \bigskip

The following result will be useful to prove that certain tours of the tetrahedron instances are not optimum.

\begin{lemma} A pseudo-tour in $T'_{n,m}$ is more than 1 longer than an optimum tour in $T'_{n,m}$.
\label{lemma:pseudo-tour}
\end{lemma}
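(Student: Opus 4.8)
The plan is to compare a pseudo-tour directly with an optimum tour by exhibiting an explicit modification that turns the pseudo-tour into a genuine tour while paying a bounded extra cost, and to show that this cost is strictly less than the amount by which the pseudo-tour already exceeds the optimum. First I would recall the defining feature of a pseudo-tour: it is a collection of trips and base-to-base edges that satisfies the same local parity/covering condition as a tour (for consecutive base vertices, an edge is present if and only if no trip connects them), yet it fails to be a single closed tour. The natural interpretation is that a pseudo-tour decomposes into several connected components, i.e.\ into $k \ge 2$ closed subtours. The quantitative claim ``more than $1$ longer'' should then come from the fact that merging $k$ components into one tour, or conversely that the optimum tour differs from such a disconnected configuration, forces at least one pair of long edges to be exchanged, and the geometry of $T'_{n,m}$ guarantees that the relevant edges have length well above what a cheap reconnection costs.

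Concretely, I would argue as follows. Take an optimum tour $\mathrm{OPT}$, which by Lemmas~\ref{lemma:nocrossing} and~\ref{lemma:convexhull} is simple and visits the base vertices counterclockwise, and compare it to the given pseudo-tour $P$. Since $P$ is not a tour, it has at least two components; I would pick two edges, one from $P$ and one from $\mathrm{OPT}$ (or two edges incident to the same base vertex in the two components of $P$), and perform a $2$-opt style exchange that reconnects two components of $P$ into one. The key geometric input is the separation estimate~(\ref{mindistproperty2}), namely $\dist(c_i,e_j)\ge 5$, together with~(\ref{mindistproperty}): any edge of a trip that leaves a base side to reach an internal vertex is long (at least $5$, or at least $\max\{10,4+4\gamma\}$ for edges touching $A,B,C$), whereas consecutive base vertices are at distance exactly $1$. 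This asymmetry is what produces a strict gap larger than $1$: replacing two internal-directed edges by short base edges, or rerouting through nearby base vertices, changes the length by an amount bounded in terms of these quantities, and the bound can be made to exceed $1$.

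The main obstacle will be bookkeeping the exchange so that the resulting object is genuinely a tour and the length accounting is tight enough to yield the strict ``$>1$'' bound rather than merely ``$\ge 1$''. In particular I must ensure that whatever reconnection I use preserves the degree-two condition at every base vertex and actually reduces the number of components, and I must verify that the triangle inequality applied to the swapped edges leaves a surplus strictly greater than $1$. I would handle this by choosing the two exchanged edges to include at least one internal-to-base edge, whose length is bounded below by~(\ref{mindistproperty2}) while the compensating base-to-base detour costs only a bounded number of unit-length steps, so that~(\ref{eq:nm}) and the explicit constants $5$, $10$, and $4+4\gamma$ force the difference past $1$. A secondary subtlety is that a pseudo-tour need not arise from deleting a single edge of a tour, so I cannot simply invoke a one-step uncrossing; instead I would set up the argument componentwise, showing that each component of $P$ contributes length that, when the components are optimally stitched together, still exceeds $\mathrm{OPT}+1$. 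I would then conclude by summing the local estimates and invoking the strict slack to finish.
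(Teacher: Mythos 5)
Your central structural premise is wrong, and the proof is built on it. In this paper a pseudo-tour is \emph{not} a disconnected union of closed subtours: by definition every pair of consecutive base vertices is joined either by an edge or by at least one trip, so a pseudo-tour is always connected (the connections around the triangle already link everything, and all internal vertices sit on trips attached to base vertices). What makes a pseudo-tour fail to be a tour is different: if every consecutive pair had exactly one connection, following these connections around the triangle would produce a single tour, so some pair of consecutive base vertices $c_i,c_{i+1}$ must be the endpoints of \emph{two or more trips}, and those base vertices then have degree at least~$3$ (see the right side of Figure~\ref{fig:T'(24,18)}). Consequently there are no ``components'' to stitch together, and your $2$-opt reconnection plan, as well as the concluding componentwise summation, has nothing to act on. The paper's proof starts from exactly the correct structural fact: it takes a pseudo-tour $T$ of \emph{minimum} length, locates two consecutive base vertices that end at least two trips, and performs a local exchange there; the minimality is the device that closes the argument, since the exchange yields either a shorter pseudo-tour (contradicting minimality) or a genuine tour that is more than~$1$ shorter than $T$, which is precisely the claim. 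Your plan to compare $P$ directly against an optimum tour has no analogous mechanism for handling the case where the modified object is still not a tour.

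The quantitative step is also genuinely harder than you allow. You assert that the triangle inequality together with the bounds~(\ref{mindistproperty2}) ``leaves a surplus strictly greater than $1$,'' but this fails: if the two trips leave $c_i$ toward $x_1$ and $y_1$ with $\angle x_1c_iy_1$ close to $180^\circ$ (both points at height roughly~$5$, one far to the left, one far to the right), then $\dist(x_1,y_1)\approx\dist(c_i,x_1)+\dist(c_i,y_1)$ and replacing $(c_i,x_1),(c_i,y_1)$ by $(x_1,y_1),(c_i,c_{i+1})$ saves essentially nothing. The paper needs real geometry here: using that a minimum pseudo-tour is crossing-free, it proves via the law of sines that the two angles at the shared endpoints satisfy $\angle x_1c_iy_1+\angle x_kc_{i+1}y_l\le 180^\circ+2\alpha$ with $\sin\alpha<1/5$, so one of the two angles is at most $90^\circ+\alpha$; only then do the law of cosines and the lower bound $\dist\ge 5$ from~(\ref{mindistproperty2}) give $\dist(x_1,y_1)+2<\dist(c_i,x_1)+\dist(c_i,y_1)$, i.e.\ a saving strictly exceeding~$1$ after paying for the unit edge $(c_i,c_{i+1})$. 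Without this two-endpoint angle argument the bound you want simply is not true edge by edge, so the gap in your proposal is not bookkeeping but a missing geometric idea.
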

\begin{proof}
Let $T$ be a pseudo-tour of minimum length in $T'_{n,m}$. There exist
two consecutive base vertices that are the end vertices of at least two trips. Wlog we may assume that the two base vertices
are $c_i$ and $c_{i+1}$. Let $c_i, x_1, \ldots, x_k, c_{i+1}$ be the vertices of the first trip and  
$c_i, y_1, \ldots, y_l, c_{i+1}$ be the vertices of a second trip with $k, l\ge 1$. By~(\ref{mindistproperty2}) we know that 
the distance from $c_i$ to $x_1$ or $y_1$ and from $c_{i+1}$ to $x_k$ or $y_l$ is at least~$5$.
Moreover, we have $\dist(c_i, c_{i+1}) = 1$ and $T$ is intersection free. If there is no $u\in\{x_k,y_l\}$ and $v\in\{x_1,y_1\}$ such that the rays $\overrightarrow{c_iu}$ and $\overrightarrow{c_{i+1}v}$ intersect, we have $\angle x_1c_iy_1+\angle x_kc_{i+1}y_l \leq 180^\circ$. Otherwise, consider the intersection $P$ of two of these rays such that no other intersection point lies on or in the triangle $c_ic_{i+1}P$ and let $Q$ be the intersection of the angle bisector of $\angle c_iPc_{i+1}$ with $c$. Then,
\begin{align*}
\angle x_1c_iy_1+\angle x_kc_{i+1}y_l&\leq 180^\circ-\angle c_{i+1}c_iP+180^\circ-\angle Pc_{i+1}c_i=360^\circ-(180^\circ-\angle c_iPc_{i+1})\\
&=180^\circ+\angle c_iPc_{i+1}.
\end{align*}
On the other hand either $\dist(c_i,Q)\leq \frac{1}{2}$ or $\dist(Q,c_{i+1})\leq \frac{1}{2}$ so wlog $\dist(c_i,Q)\leq \frac{1}{2}$.
Using the law of sines we get:

\begin{align*}
\sin(\frac{\angle c_iPc_{i+1}}{2})=\sin(\angle c_iPQ)=\frac{\dist(c_i,Q)}{\dist(c_i,P)}\sin(\angle PQc_i)\leq \frac{\frac{1}{2}}{5}<\frac{1}{5}.
\end{align*}
Therefore, the total sum of the two angles $x_1 c_i y_1$ and $x_k c_{i+1} y_l$ is at most $180^\circ + 2 \alpha$
with $\sin \alpha < 1/5$. 

Wlog we may assume that  the angle $x_1 c_i y_1$ is at most  $90^\circ + \alpha$. 
Set $x := \dist(c_i, x_1)$, $y := \dist(c_i, y_1)$, and $z := \dist(x_1, y_1)$. As by~(\ref{mindistproperty2}) $x, y \ge 5$  we get
$-4x + xy \ge xy/5$ and $-4y + xy \ge xy/5$ which implies
$$4 - 4 x - 4 y + 2 x y > 2 x y /5.$$
By adding $x^2 + y^2$ to both sides of this inequality and using $\frac{1}{5} > \sin(\alpha) = -\cos(90^\circ + \alpha)$ we get
with the law of cosines
$$(x+y-2)^2 > x^2 + y^2 + 2xy/5 > x^2 + y^2 - 2 x y \cos(90^\circ + \alpha) \ge z^2.$$
Therefore, we have $$z + 2 < x + y.$$
We now replace the edges $(c_i, x_1)$ and $(c_i, y_1)$ by the edges $(c_i, c_{i+1})$
and $(x_1, y_1)$ and shortcut the two edges $(c_i, c_{i+1})$ and $(c_{i+1}, y_l)$ by the edge $(c_i, y_l)$.
This yields either a pseudo-tour that is shorter than $T$, contradicting the choice of $T$. Or it yields a tour
that is by more than~$1$ shorter than $T$ which proves the result. 
\end{proof}

\begin{lemma} 
In an optimum tour for $T'_{n,m}$ a trip with end vertices $c_i$ and $c_{i+1}$ cannot contain the edge $(c_i, g_j)$ for 
$1 \le j \le m-1$. 
\label{lemma:nogi}
\end{lemma}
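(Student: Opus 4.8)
The plan is to argue by contradiction. Suppose some optimum tour $T$ of $T'_{n,m}$ contains a trip with end vertices $c_i$ and $c_{i+1}$ that uses the edge $(c_i,g_j)$ with $1\le j\le m-1$, so that $g_j$ is the connection vertex of this trip adjacent to $c_i$. The reflection $x\mapsto n-x$ is a symmetry of $T'_{n,m}$ that fixes every $g_j$ and interchanges the segments $e$ and $f$, so it suffices to treat the case $i\le n/2$; the case $i\ge n/2$ follows by the mirror argument with the role of $e$ replaced by $f$. The degenerate case $i=n/2$ (only possible for even $n$) is excluded immediately: then the edge $(c_i,g_j)$ runs along the vertical line $x=n/2$ and passes through the vertices $g_{j+1},\dots,g_m=M$, whose incident tour edges would intersect it, contradicting Lemma~\ref{lemma:nocrossing}.

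The key geometric observation for $i<n/2$ is that the edge $(c_i,g_j)$ necessarily crosses the internal segment $e$. Indeed, every $g_j$ with $j\le m-1$ lies strictly above $M$ and hence strictly above the line through $A$ and $M$ carrying the segment $e$, whereas $c_i$ lies strictly below that line; since both the edge and the segment $e$ are straight, they meet in a single point $P$ lying strictly between $A$ and $M$ on $e$. Let $e_k,e_{k+1}$ be the two consecutive vertices of $e$ bounding $P$. This crossing shows that entering the far internal segment $g$ directly from the base side $c$ forces the tour to reach across the segment $e$, which is what I will exploit to produce a strictly cheaper reconnection.

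From here I would perform a local exchange at the connection vertex $g_j$, in the spirit of the proof of Lemma~\ref{lemma:pseudo-tour}. Using Lemma~\ref{lemma:convexhull} and Lemma~\ref{lemma:nocrossing} to constrain the shape of the trip, I would delete the long edge $(c_i,g_j)$ and reconnect $g_j$ through the vertex of $e$ nearest to $P$, replacing the deleted edge by the two much shorter edges meeting at that vertex together with the base edge $(c_i,c_{i+1})$. That the net change strictly decreases the length follows from a law-of-cosines estimate of the same flavor as in Lemma~\ref{lemma:pseudo-tour}: the edge $(c_i,g_j)$ has length at least $n/(2\sqrt{3})$, the height of $M$, while the overhead of the detour is controlled because every present vertex of $e$ is at distance at least $5$ from $c_i$ by~(\ref{mindistproperty2}) and $\dist(c_i,c_{i+1})=1$. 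If the resulting object is a tour, its shorter length contradicts the optimality of $T$; if the reconnection instead yields a pseudo-tour, then by Lemma~\ref{lemma:pseudo-tour} it is more than $1$ longer than an optimum tour, so its length being at most that of $T$ is again a contradiction.

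I expect the main obstacle to be the case analysis that makes this exchange simultaneously valid and quantitatively strict. The difficulty is not the single long edge $(c_i,g_j)$ in isolation but the various ways the remainder of the trip can behave: in particular whether the trip descends the segment $g$ toward $M$, and what the second tour neighbor of $g_j$ is, since each configuration determines which edges are freed by the exchange and hence the exact length bookkeeping. Handling the boundary regime where $i$ is close to $n/2$, so that the edge $(c_i,g_j)$ is almost vertical and the crossing point $P$ approaches $M$, should be the most delicate part, and it is there that the explicit separation~(\ref{mindistproperty2}) together with the lower bounds on $\gamma$ coming from~(\ref{gamma}) and~(\ref{eq:nm}) must be used with care.
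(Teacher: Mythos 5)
Your geometric observation --- that for $1\le i<n/2$ the edge $(c_i,g_j)$ crosses the segment $e$, since $g_j$ with $j\le m-1$ lies strictly above the line $AM$ while $c_i$ lies strictly below it --- is correct, and your treatment of the degenerate case $i=n/2$ via Lemma~\ref{lemma:nocrossing} is fine. But the argument has two genuine gaps. First, a boundary hole: for $i=0$ the vertex $c_0=A$ lies \emph{on} the line $AM$, and the edge $(A,g_j)$ runs strictly above $e$ (and meets the line $BM$ only left of $M$), so the claimed crossing does not exist and a trip attached at the corner is not excluded; the mirror reduction does not rescue this case. Second, and more seriously, the local exchange that is supposed to produce the contradiction is never pinned down, and in the form you describe it cannot work. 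Replacing the single edge $(c_i,g_j)$ by two edges through a vertex $e_k$ of $e$ near the crossing point, plus the base edge $(c_i,c_{i+1})$, can only \emph{increase} length: the triangle inequality gives $\dist(c_i,e_k)+\dist(e_k,g_j)\ge \dist(c_i,g_j)$, and every repair of the degree violations you create at $e_k$ and $c_{i+1}$ costs still more. The law-of-cosines saving in Lemma~\ref{lemma:pseudo-tour} comes from \emph{two} edges fanning out of the pair $c_i,c_{i+1}$ with bounded opening angle; here you delete only one long edge, so that computation has nothing to act on. To gain length you would have to detach the whole trip (both end edges) and re-insert it elsewhere, which forces you to break an edge of the remaining tour at the insertion point; at $e_k$ that edge may be as short as $\gamma$ and is not under your control, and the far connection vertex must then be joined across the triangle. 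Finally, an object obtained by attaching a trip to an \emph{internal} vertex is neither a tour nor a pseudo-tour in the sense of the paper (pseudo-tours consist of base edges plus trips whose ends are base vertices), so Lemma~\ref{lemma:pseudo-tour} cannot be invoked in your fallback case.

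The idea you are missing is a reflection to the \emph{opposite side} of the triangle, which is how the paper gets a strict saving without touching any uncontrolled edges. The paper replaces $(c_i,g_j)$ and the last trip edge $(x,c_{i+1})$ by $(a_{n-i},g_j)$ and $(x,a_{n-(i+1)})$ and adds $(c_i,c_{i+1})$: one has $\dist(c_i,g_j)>\dist(a_{n-i},g_j)$ because $g_j$ with $j\le m-1$ lies strictly closer to side $a$ than to side $c$, and $\dist(x,c_{i+1})\ge \dist(x,a_{n-(i+1)})$ whenever $x$ lies on $f$ or $g$ (if $x$ lies on $e$, one reflects to side $b$ instead). If $(a_{n-(i+1)},a_{n-i})$ is a tour edge, deleting it yields a strictly shorter tour; otherwise the result is a pseudo-tour less than $1$ longer than the optimum, contradicting Lemma~\ref{lemma:pseudo-tour}. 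This keeps both reattachment points at base vertices, so the pseudo-tour machinery applies, and the strict inequality comes from symmetry rather than from a local detour.
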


\begin{proof}
Suppose there exists a trip with end vertices $c_i$ and $c_{i+1}$ that contains the edge $(c_i, g_j)$ for some $j\in \{1,\ldots, m-1\}$.
Let $(x, c_{i+1})$ be the last edge of the trip. 

If $x \in \{f_1, \ldots, f_m, g_1, \ldots, g_{m-1}\}$ then we can replace
the edges $(c_i, g_j)$ and $(x, c_{i+1})$ by $(a_{n-i}, g_j)$ and $(x, a_{n-(i+1)})$ and add the edge $(c_i, c_{i+1})$.
Note that $\dist(c_i, g_j) > \dist(a_{n-i}, g_j)$ and $\dist(x, c_{i+1}) \ge \dist(x, a_{n-(i+1)})$. 
If  $(a_{n-(i+1)}, a_{n-i})$ is an edge in the tour, then we can remove this edge and get a shorter tour, contradiction. 
Otherwise, we got a pseudo-tour that is at most~1 longer than an optimum tour. This contradicts Lemma~\ref{lemma:pseudo-tour}.

If $x \in \{e_1, \ldots, e_{m-1}\}$ we can replace
the edges $(c_i, g_j)$ and $(x, c_{i+1})$ by $(b_{n-i}, g_j)$ and $(x, b_{n-(i+1)})$ and use an analogous argument.
\end{proof}

From now on we denote by $x'$ the reflection of $x$ across $c$.

\begin{lemma} \label{firstlast}
Let $(c_i,x)$ and $(y,c_{i+1})$ be the first and last edge of a trip in an optimum tour for $T_{n,m}'$. 
Then $\dist(x,y')+1\geq\dist(c_i,x)+\dist(y,c_{i+1}) > \dist(x,y')-1$.
\end{lemma}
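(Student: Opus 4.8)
The plan is to reflect the last edge of the trip across the side $c$ and then compare the two resulting one-edge detours with the straight segment realizing $\dist(x,y')$, arranging matters so that the additive constant $1$ appears exactly as $\dist(c_i,c_{i+1})$. Two preliminary observations make the reflection useful. Since reflection across $c$ is an isometry fixing every point of $c$, and $c_{i+1}\in c$, we have $\dist(y,c_{i+1})=\dist(y',c_{i+1})$. Moreover $x$ and $y$ are internal vertices, hence lie strictly above the line $c$, so $x$ lies strictly above $c$ and $y'$ strictly below it. It therefore suffices to prove
\[
\dist(x,y')-1 \;<\; \dist(c_i,x)+\dist(c_{i+1},y') \;\le\; \dist(x,y')+1 .
\]

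For the lower bound I would apply the triangle inequality to the three-segment path from $x$ to $c_i$ to $c_{i+1}$ to $y'$. Because $\dist(c_i,c_{i+1})=1$, this gives $\dist(c_i,x)+1+\dist(c_{i+1},y')\ge \dist(x,y')$, i.e.\ $\dist(c_i,x)+\dist(c_{i+1},y')\ge \dist(x,y')-1$. The inequality is \emph{strict} since equality would force $x,c_i,c_{i+1},y'$ to be collinear, which is impossible: $x$ lies strictly above and $y'$ strictly below the line through $c_i$ and $c_{i+1}$. This settles the left inequality.

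For the upper bound I would let $R$ be the unique intersection of the segment $\overline{xy'}$ with the line $c$; it exists because $x$ and $y'$ lie on opposite sides of $c$. Then $\dist(x,y')=\dist(x,R)+\dist(R,y')$, and two applications of the triangle inequality yield
\[
\dist(c_i,x)+\dist(c_{i+1},y')\le \dist(x,y')+\dist(c_i,R)+\dist(c_{i+1},R).
\]
Thus the right inequality reduces to the geometric claim that $R$ lies on the segment between $c_i$ and $c_{i+1}$, for then $\dist(c_i,R)+\dist(c_{i+1},R)=\dist(c_i,c_{i+1})=1$.

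Establishing $R\in\overline{c_ic_{i+1}}$ is the step I expect to be the main obstacle. It is equivalent to saying that $c_i$ and $c_{i+1}$ lie on opposite sides of the line through $x$ and $y'$, i.e.\ that the diagonals $\overline{xy'}$ and $\overline{c_ic_{i+1}}$ of the quadrilateral $x,c_i,y',c_{i+1}$ cross. This cannot come from the triangle inequality alone: for an unconstrained placement of the four points the upper bound genuinely fails (take $x,y$ high above the axis but far to one side of $c_i,c_{i+1}$, so that $\overline{xy'}$ meets $c$ well outside $\overline{c_ic_{i+1}}$ and the detour is much longer than $\dist(x,y')+1$). Hence I would exploit that the tour is optimal, and in particular a \emph{simple} polygon whose base vertices occur in counterclockwise convex order (Lemma~\ref{lemma:nocrossing} and Lemma~\ref{lemma:convexhull}): the edges $(c_i,x)$ and $(c_{i+1},y')$ lie in opposite closed half-planes and meet $c$ only at $c_i$ and $c_{i+1}$ respectively, and I would argue that if $\overline{xy'}$ met $c$ outside $\overline{c_ic_{i+1}}$ then two edges of the tour would be forced to cross, contradicting simplicity. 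Ruling out this ``overshoot'' rigorously is the crux of the proof.
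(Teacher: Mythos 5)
Your lower bound is fine and is essentially the paper's own argument (triangle inequality plus the reflection identity $\dist(y,c_{i+1})=\dist(y',c_{i+1})$, with strictness from non-collinearity). The upper bound, however, has a genuine gap, and you have correctly located it yourself: everything hinges on the claim that the intersection point $R$ of $\overline{xy'}$ with $c$ lies on the segment $\overline{c_ic_{i+1}}$, and this claim is left unproven. Moreover, the mechanism you propose for closing it --- simplicity of the optimal tour forcing two edges to cross --- cannot work as stated. Neither $\overline{xy'}$ nor $\overline{c_ic_{i+1}}$ is an edge of the tour (when a trip runs from $c_i$ to $c_{i+1}$, the edge $(c_i,c_{i+1})$ is absent, and $\overline{xy'}$ even dips below the line $c$, where no tour edge can be), and one can draw perfectly simple tours containing an ``overshooting'' trip whose first and last edges reach far to one side of $c_i,c_{i+1}$ without crossing anything. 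What excludes such trips is not simplicity but optimality in a quantitative form.

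That quantitative form is the paper's actual argument, which sidesteps your claim entirely. Let $P$ be the intersection of $xy'$ with $c$ and let $(c_j,c_{j+1})$ be the consecutive base pair straddling $P$, possibly with $j\neq i$. If $\dist(c_j,x)+\dist(y,c_{j+1})<\dist(c_i,x)+\dist(y,c_{i+1})$, replace $(c_i,x),(y,c_{i+1})$ by $(c_j,x),(y,c_{j+1})$ and add the edge $(c_i,c_{i+1})$: if $(c_j,c_{j+1})$ was a tour edge, removing it yields a strictly shorter tour, contradicting optimality; otherwise the result is a pseudo-tour less than $1$ longer than the optimum, contradicting Lemma~\ref{lemma:pseudo-tour}. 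Hence $\dist(c_i,x)+\dist(y,c_{i+1})\le\dist(c_j,x)+\dist(y,c_{j+1})$, and for the straddling pair the triangle inequality through $P$ gives $\dist(c_j,x)+\dist(y,c_{j+1})\le\dist(x,y')+\dist(c_j,P)+\dist(P,c_{j+1})=\dist(x,y')+1$. Note that this tolerates equality between the two base pairs: the paper never shows, and never needs, that $P$ lies between $c_i$ and $c_{i+1}$. The missing idea in your proposal is precisely this exchange against the straddling pair, powered by the pseudo-tour lemma; without it (or some substitute using optimality rather than simplicity), the right-hand inequality does not follow.
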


\begin{proof}
By symmetry and the triangle inequality, we get
\begin{align*} \label{equa}
\dist(c_i,x)+\dist(y,c_{i+1})& > \dist(c_i,x)+\dist(y,c_{i})-1=\dist(c_i,x)+\dist(y',c_{i})-1\\
 &\geq \dist(x,y')-1.
\end{align*}
On the other hand, consider the intersection $P$ of $xy'$ with $c$ and let $c_j$ be the next vertex left of $P$. 
Since the trip belongs to an optimal tour we have $\dist(c_i,x)+\dist(y,c_{i+1})\leq \dist(c_j,x)+\dist(y,c_{j+1})$, 
otherwise we can replace $(c_i,x),(y,c_{i+1})$ by $(c_j,x),(y,c_{j+1})$ and $(c_i,c_{i+1})$ to get a pseudo-tour that is at most 1 longer than an optimum tour contradicting Lemma~\ref{lemma:pseudo-tour}
or remove in addition $(c_{j},c_{j+1})$ to get a shorter tour. Hence
\begin{align*}
\dist(c_i,x)+\dist(y,c_{i+1})&\leq \dist(c_j,x)+\dist(y,c_{j+1}) \\
&\leq \dist(x,P)+\dist(c_j,P)+\dist(P,y)+\dist(P,c_{j+1})\\
&=\dist(x,y')+1.
\end{align*} 
\end{proof}

\begin{figure}[ht]
\centering
\begin{tikzpicture}[scale=1]
\def\scale{9.5}
\def\xscale{0.45}
\def\yscale{0.65}

\def\sqrt3{1.732050808}

\def\dotMarkRightAngle[size=#1](#2,#3,#4){%
 \draw ($(#3)!#1!(#2)$) -- 
       ($($(#3)!#1!(#2)$)!#1!270:(#2)$) --
       ($(#3)!#1!(#4)$);
 %\path (#3) --node[circle,fill,inner sep=.5pt]{} ($($(#3)!#1!(#2)$)!#1!90:(#2)$);
}

\tikzstyle{vertex}=[black,fill,circle,draw,minimum size=4,inner sep=1pt]
\tikzstyle{tedge}=[line width=1.3]
\tikzstyle{sedge}=[line width=1.3, bend above]

\coordinate[vertex, label=left: \small{$A$}] (A) at (0,0);
\coordinate[vertex, label=right:\small{$B$}] (B) at (\scale,0);
\coordinate[vertex, label=above:\small{$M$}] (M) at (\scale / 2, \scale * \sqrt3 / 6) {};
\coordinate[vertex, label=above left: \small{$x$}] (x) at (\xscale * \scale / 2, \xscale * \scale * \sqrt3 / 6) {};
\coordinate[vertex, label=below left: \small{$x'$}] (xprime) at (\xscale * \scale / 2, - \xscale * \scale * \sqrt3 / 6) {};
\coordinate[vertex, label=above left: \small{$y$}] (y) at (\yscale * \scale / 2, \yscale * \scale * \sqrt3 / 6) {};
\coordinate[vertex, label=below left: \small{$y'$}] (yprime) at (\yscale * \scale / 2, - \yscale * \scale * \sqrt3 / 6) {};
\coordinate[vertex, label=below:\small{$M'$}] (Mprime) at (\scale / 2, - \scale * \sqrt3 / 6) {};

\coordinate[vertex, label=below left:\small{$P$}] (P) at ($(A)!(x)!(Mprime)$);
\dotMarkRightAngle[size=2.5mm](A,P,x);

\draw[-,  line width=0.75] (A) to node[below right] {\small{$e$}} (M);
\draw[-,  line width=0.75] (M) to node[below left] {\small{$f$}} (B);
\draw[-,  line width=0.75] (A) to (B);
\draw[-,  line width=0.75] (A) to (Mprime);
\draw[-,  line width= 0.5, gray] (P) to (x);
\draw[-, dashed, lightgray, line width= 0.5] (x) to (xprime);
\draw[-, dashed, lightgray, line width= 0.5] (y) to (yprime);
\draw[-, dashed, lightgray, line width= 0.5] (M) to (Mprime);
\draw[-, red, line width= 1] (x) to (yprime);

\end{tikzpicture} 
 \caption{The situation in the proof of Lemma~\ref{righttriangle}.}
  \label{trickfig}
\end{figure}

\begin{lemma} \label{righttriangle}
For $x$ and $y$ lying on the same internal segment we have:
\begin{align*}
\dist(x,y') \geq \dist(x,y)+2+2\gamma.
\end{align*}
\end{lemma}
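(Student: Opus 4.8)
My plan is to reduce the inequality to comparing the hypotenuse with a leg of a right triangle, exploiting the $60^\circ$ angle that an internal segment forms with its reflection across $c$ at the apex — precisely the configuration drawn in Figure~\ref{trickfig}. First I would note that reflection across $c$ is an isometry, so $\dist(x,y')=\dist(x',y)$, and hence the asserted inequality is symmetric in $x$ and $y$. I may therefore assume without loss of generality that $x$ is at least as close to the apex of the internal segment as $y$. By the symmetry of the instance it then suffices to treat the segment $e$ with apex $A$ (the segment $f$ is handled identically with apex $B$). Writing $s:=\dist(A,x)$ and $t:=\dist(A,y)$ with $s\le t$, I would let $P$ be the foot of the perpendicular from $x$ onto the line containing the reflected segment $e'$, which also passes through $A$ and $y'$.

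The key observation is that $A$ lies on $c$ and is thus fixed by the reflection, while $e$ bisects the $60^\circ$ interior angle at $A$ and so meets $c$ at $30^\circ$; consequently the rays $Ax$ and $Ay'$ enclose an angle of exactly $60^\circ$. This gives $\dist(A,P)=s\cos 60^\circ=s/2$, and since $\dist(A,y')=t\ge s\ge s/2$ the point $P$ lies between $A$ and $y'$, so that
\[
\dist(P,y')=\dist(A,y')-\dist(A,P)=t-\tfrac{s}{2}=(t-s)+\tfrac{s}{2}=\dist(x,y)+\tfrac{s}{2}.
\]
As $\dist(x,y')$ is the hypotenuse and $\dist(P,y')$ a leg of the right triangle $xPy'$, I obtain $\dist(x,y')\ge\dist(P,y')=\dist(x,y)+s/2$, and since $s=\dist(A,x)\ge 4+4\gamma$ by~(\ref{mindistproperty}) this yields the claimed bound. (The law of cosines would instead give the identity $\dist(x,y')^2=s^2+t^2-st$, from which the same estimate follows after squaring, but the projection avoids all algebra.)

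The step I expect to require separate care is the third internal segment $g$, whose apex $C$ does \emph{not} lie on $c$, so the $60^\circ$ argument does not apply. Here $g$ is perpendicular to $c$, both $x$ and $y$ have first coordinate $n/2$, and $\dist(x,y')$ is simply the sum of their heights while $\dist(x,y)$ is their difference; the gain is thus twice the height of the lower point, which is at least $2\dist(M,c)=n/\sqrt3$ since $M$ is the lowest vertex of $g$. It then remains to check $n/(2\sqrt3)\ge 1+\gamma$ directly from $\gamma=n/(\sqrt3\,m)$ in~(\ref{gamma}) together with $n\ge 40$ and $m\ge 22$ from~(\ref{eq:nm}), which holds with room to spare. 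Besides this case distinction, the only subtlety is the initial reduction to the point nearer the apex, which is exactly what guarantees that $P$ falls between $A$ and $y'$.
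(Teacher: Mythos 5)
Your proof is correct, and for the segments $e$ and $f$ it is essentially the paper's own argument: you drop the same perpendicular foot $P$ from $x$ onto the reflected segment through $A$ and $y'$, use the $60^\circ$ angle at the apex $A$ to get $\dist(A,P)=\tfrac12\dist(A,x)$, bound the hypotenuse $\dist(x,y')$ of the right triangle $xPy'$ below by the leg $\dist(P,y')=\dist(x,y)+\tfrac12\dist(A,x)$, and finish with $\dist(A,x)\ge 4+4\gamma$ from~(\ref{mindistproperty}); the paper phrases the same computation via the equilateral triangle $Ax'x$ rather than via $\cos 60^\circ$, but the points, the right triangle, and the final estimate coincide. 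The one genuine difference is your separate treatment of the segment $g$. The paper simply says ``wlog $x,y$ are on $e$,'' but the rotation carrying $g$ to $e$ does not fix the side $c$ and hence does not commute with the reflection $x\mapsto x'$, and indeed the $60^\circ$ configuration is absent for $g$, whose apex $C$ does not lie on $c$. Your direct computation for $g$ --- the gain $\dist(x,y')-\dist(x,y)$ equals twice the height of the lower point, hence is at least $2\dist(M,c)=n/\sqrt3\ge 2+2\gamma$ under~(\ref{eq:nm}) --- supplies exactly what that wlog glosses over. In the paper this is harmless, since Lemma~\ref{righttriangle} is only ever invoked for points on $e$ (or, by the honest symmetry, $f$), but as stated the lemma also covers $g$, so your version is the more complete one.
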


\begin{proof} Wlog we may assume that $x,y$ are on $e$ and $x$ is to the left of $y$.
Let $P$ be the perpendicular foot of $x$ on $AM'$ (Figure~\ref{trickfig}). 
For reasons of symmetry, we get $\dist(A,x)=\dist(A,x')$ and $\angle x'Ax=2\cdot \angle BAM=\angle BAC=60^\circ$. Therefore, the triangle $Ax'x$ is equilateral and we have $\dist(P,x')=\frac{1}{2}\dist(A,x)$. Together with the triangle inequality and as the triangle $Py'x$ has the hypotenuse $xy'$ 
and~(\ref{mindistproperty}) we get:
\begin{align*}
\dist(x,y')> \dist(P,y')
&=\dist(P,x')+\dist(x',y')=\dist(x,y)+\frac{1}{2}\dist(A,x) \\
&\geq \dist(x,y)+\frac{1}{2}(4+4\gamma)=\dist(x,y)+2+2\gamma.
\end{align*}
\end{proof}

\begin{figure}[ht]
\centering
\begin{tikzpicture}[scale=1]
\def\scale{9.5}
\def\ciscale{0.35}
\def\ciiscale{0.45}

\def\yscale{0.75}
\def\xscale{0.45}
\def\pscale{0.65}
\def\qscale{0.50}
\def\sscale{0.43} % unlabeld point on MC
\def\tscale{0.25} % unlabeled point on AM

\def\sqrt3{1.732050808}

\def\dotMarkRightAngle[size=#1](#2,#3,#4){%
 \draw ($(#3)!#1!(#2)$) -- 
       ($($(#3)!#1!(#2)$)!#1!90:(#2)$) --
       ($(#3)!#1!(#4)$);
 %\path (#3) --node[circle,fill,inner sep=.5pt]{} ($($(#3)!#1!(#2)$)!#1!90:(#2)$);
}

\tikzstyle{vertex}=[black,fill,circle,draw,minimum size=4,inner sep=1pt]
\tikzstyle{tedge}=[line width=1.3]
\tikzstyle{sedge}=[line width=1.3, bend above]

\coordinate[vertex, label=left: \small{$A$}] (A) at (0,0);
\coordinate[vertex, label=right:\small{$B$}] (B) at (\scale,0);
\coordinate(C) at (\scale / 2 , \scale * 0.7 * \sqrt3 / 2);  % // invisible point in direction of C

\coordinate[vertex] (s) at (\scale / 2, \sscale * \scale) {};
\coordinate[vertex, label=above left:\small{$M$}] (M) at (\scale / 2, \scale * \sqrt3 / 6) {};
\coordinate[vertex, label=below:\small{$c_i$}] (ci) at (\ciscale * \scale, 0) {};
\coordinate[vertex, label=below:\small{$c_{i+1}$}] (cii) at (\ciiscale * \scale, 0) {};
\coordinate[vertex, label=above left: \small{$y$}] (y) at (\yscale * \scale / 2, \yscale * \scale * \sqrt3 / 6) {};
\coordinate[vertex, label=above left: \small{$x$}] (x) at (\xscale * \scale / 2, \xscale * \scale * \sqrt3 / 6) {};
\coordinate[vertex] (t) at (\tscale * \scale / 2, \tscale * \scale * \sqrt3 / 6) {};
\coordinate[vertex, label=above right: \small{$p$}] (p) at (\scale - \pscale * \scale / 2, \pscale * \scale * \sqrt3 / 6) {};
\coordinate[vertex, label=above right: \small{$q$}] (q) at (\scale - \qscale * \scale / 2, \qscale * \scale * \sqrt3 / 6) {};

\draw[-,  line width=0.75] (A) to (t);
\draw[-,  line width=0.75] (q) to (B);
\draw[-,  line width=0.75] (M) to (C);
\draw[-,  line width=0.75] (M) to (y);
\draw[-,  line width=0.75] (M) to (p);
\draw[-,  line width=0.75] (A) to (ci);
\draw[-,  line width=0.75] (B) to (cii);

\draw[-, red, line width= 1] (y) to (p) to (s) to (t) to (x);
\draw[-, myblue, line width= 1] (ci) to (x);
\draw[-, myblue, line width= 1] (cii) to (y);
\draw[-, mygreen, line width= 1] (p) to (q);
\draw[-, mygreen, line width= 1] (p) to[bend right=40] (q);
\draw[-, mygreen, line width= 1] (x) to (y);
\draw[-, mygreen, line width= 1] (ci) to (cii);

\end{tikzpicture} 
  		\caption{Illustration of the proof of Theorem~\ref{long}. The trip consists of the red and blue edges. The blue edges are replaced by the green edges to get an upper bound on the length of the trip.}
  		\label{figlongc1}
\end{figure}

\begin{theorem} \label{long}
An optimum tour for $T'_{n,m}$ does not contain a trip where the two connection vertices lie on the same internal segment.
\end{theorem}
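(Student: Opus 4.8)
The plan is to argue by contradiction: assuming an optimum tour of $T'_{n,m}$ contains a trip whose two connection vertices lie on the same internal segment, I will exhibit either a strictly shorter tour or a pseudo-tour of length at most one more than the optimum, the latter contradicting Lemma~\ref{lemma:pseudo-tour}. By the threefold symmetry of the construction I may assume the main side of the trip is $c$, so the trip reads $c_i, x, \dots, y, c_{i+1}$ with first edge $(c_i,x)$ and last edge $(y,c_{i+1})$, and $x,y$ are the two connection vertices, which by assumption lie on a common internal segment. Up to the reflection across the perpendicular bisector of $c$ (which swaps $e$ and $f$) there are two cases, $x,y\in e$ and $x,y\in g$; I would treat the former in detail and indicate at the end the modifications needed for $g$.

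The engine of the proof is a single inequality expressing that the two edges attaching the trip to the base are expensive. Combining the right-hand inequality of Lemma~\ref{firstlast} with Lemma~\ref{righttriangle} gives
\begin{align*}
\dist(c_i,x)+\dist(y,c_{i+1}) > \dist(x,y')-1 \ge \bigl(\dist(x,y)+2+2\gamma\bigr)-1 = \dist(x,y)+1+2\gamma.
\end{align*}
Thus replacing the two attaching edges by the base edge $(c_i,c_{i+1})$ of length $1$ together with the short chord $(x,y)$ already lowers the total length by more than $2\gamma$.

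Carrying out this replacement, however, disconnects the tour: deleting $(c_i,x)$ and $(y,c_{i+1})$ splits the cycle into the internal path $x,\dots,y$ and a second path whose ends are $c_i$ and $c_{i+1}$; adding $(c_i,c_{i+1})$ closes the second path into a cycle $R$ containing every base vertex, while adding $(x,y)$ closes the internal path into a cycle $Z$ on the internal vertices of the trip. To repair this into a single closed tour (or a pseudo-tour) I would reconnect $Z$ to $R$ near a point where the trip reaches a segment other than $e$, exactly the rerouting drawn in green in Figure~\ref{figlongc1}: one splits $Z$ at a vertex $p$ lying on such a segment and reattaches it to the base through an adjacent internal vertex $q$, at an extra cost that the doubled short edge $(p,q)$ of length $\gamma$ bounds by $2\gamma$. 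Since the replacement already saved more than $2\gamma$, the net change is strictly negative, so the resulting object is shorter than the optimum; if the repair produced a genuine tour this contradicts optimality directly, and if it produced a pseudo-tour it contradicts Lemma~\ref{lemma:pseudo-tour}.

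The main obstacle is precisely this repair step: one must check that a suitable reconnection vertex $p$ always exists and that the reattachment can be realized within the $2\gamma$ budget for \emph{every} possible routing of the trip. The delicate subcases are when the internal path never leaves segment $e$, so that no vertex $p$ on another segment is available and one must instead reattach the path near $A$ and bound its cost separately, and the case $x,y\in g$, where reflection across $c$ carries $g$ to a segment lying entirely on the far side of $c$; here Lemma~\ref{righttriangle} must be re-examined, since its proof is written for a segment incident to $c$ (as $e$ and $f$ are, but $g$ is not). Once these cases are dispatched with the same length bookkeeping, the contradiction is complete.
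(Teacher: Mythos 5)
Your first step coincides with the paper's: the inequality $\dist(c_i,x)+\dist(y,c_{i+1}) > \dist(x,y)+1+2\gamma$ obtained from Lemma~\ref{firstlast} and Lemma~\ref{righttriangle}, and the green rerouting of Figure~\ref{figlongc1}. But you misstate the condition under which that rerouting actually reconnects anything. In the paper, the doubled edge $(p,q)$ is taken with $p$ visited by the trip and $q$ \emph{not} visited by the trip; it is precisely because $q$ lies on the remainder of the tour that adding $(p,q)$, $(q,p)$, $(x,y)$, $(c_i,c_{i+1})$ gives a connected spanning Eulerian structure, hence after shortcutting a strictly shorter tour. Your condition --- $p$ lying on a segment other than $e$, $q$ ``an adjacent internal vertex'' --- does not guarantee $q\notin Z$, so the doubled edge may join two vertices of the same cycle $Z$ and repair nothing. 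The correct dichotomy is not ``does the trip leave $e$'' but ``does the trip visit every internal vertex'': since the internal vertices form a connected graph under distance-$\gamma$ adjacency (the three segments meet at $M$), a usable pair $(p,q)$ exists exactly when the trip misses some internal vertex. Note also that your worry about $x,y\in g$ is unnecessary: Lemma~\ref{lemma:nogi} forbids the edge $(c_i,g_j)$ in a trip with main side $c$, so that case is excluded at the outset, with no need to re-examine Lemma~\ref{righttriangle}.

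The genuine gap is the remaining case, in which the trip visits \emph{all} internal vertices and is therefore the only trip. Then every internal vertex lies in $Z$, your primary repair has no vertex $q$ to attach to, and any reconnection of $Z$ to the base cycle $R$ must use two edges between an internal vertex and a base vertex, each of length at least $5$ by~(\ref{mindistproperty2}), i.e.\ a net cost of at least $2\cdot 5-1-\gamma$ after deleting one edge from each cycle. The savings your inequality guarantees is only about $\frac12\dist(A,x)-2$ (roughly $3$ when $x$ is near $e_{i_0}$ and $\gamma$ is small), so it does not cover this cost, and your fallback ``reattach near $A$'' comes with no accounting that closes the deficit; it is moreover aimed at the wrong subcase, since a trip confined to $e$ necessarily misses the internal vertices of $f$ and $g$, where the boundary-pair argument already applies. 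The paper needs a genuinely different idea here: since the trip spans all internal vertices, the first edge $(u,v)$ on which it leaves $e$ must go to $g$ (an edge to $f$ would cross $(y,c_{i+1})$); since there is only one trip, $(b_{n-i},b_{n-(i+1)})$ is a tour edge, and replacing $(c_i,x)$, $(y,c_{i+1})$, $(b_{n-i},b_{n-(i+1)})$ by $(b_{n-i},x)$, $(y,b_{n-(i+1)})$, $(c_i,c_{i+1})$ preserves the total length by reflection symmetry across $e$, yielding another optimum tour in which $(y,b_{n-(i+1)})$ crosses $(u,v)$ --- contradicting Lemma~\ref{lemma:nocrossing}. Without this (or an equivalent) argument for the spanning case, your proof does not go through.
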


\begin{proof}
Assume such a trip $t$ exists. Wlog let $t$ have $c$ as main side, starting at $c_i$ and ending at $c_{i+1}$. Let $x$ resp.\ $y$ be the first resp.\ last connection vertex of the trip. By Lemma~\ref{lemma:nogi} the vertices $x$ and $y$ cannot belong to $\{g_1, \ldots, g_{m-1}\}$. We may assume that $x$ and $y$ lie wlog on $e$. 

Suppose there are internal vertices $p$ and $q$ with $\dist(p,q)=\gamma$ and $p$ is visited by the trip $t$, but $q$ is not. In this case, we replace $(c_i,x)$ and $(y,c_{i+1})$ by $(p,q)$, $(q,p)$, $(y,x)$, and $(c_i,c_{i+1})$ (Figure~\ref{figlongc1}). 
Since $q$ is not visited by the trip $t$, we get an upper bound for the length of an optimal tour. By Lemma~\ref{firstlast} and Lemma~\ref{righttriangle} we have 
\begin{align*}
\dist(c_i,x)+\dist(y,c_{i+1}) & >  \dist(x,y)+ 1 + 2 \gamma \\
                              & =  \dist(x,y) + \dist(c_i,c_{i+1}) + \dist(p,q) + \dist(q,p).
\end{align*}
Thus, the modification makes the tour shorter, contradiction. Hence, there is no such vertex $p$ visited by the trip with a neighbor $q$ at distance $\gamma$
not visited by the trip. It follows that the trip visits all internal vertices and it is the only trip.

Since the trip visits all internal vertices, there is an edge $(u, v)$ leaving the internal segment $e$ the first time where $u$ lies to the left of $y$. Furthermore $v$ cannot lie on the internal segment $f$, since otherwise $(u,v)$ would intersect the edge $(y,c_{i+1})$
Therefore, $v$ lies on the internal segment $g$. Consider $b_{n-i}$, the reflection of $c_{i}$ with respect to $e$. Because there is only one trip, $b_{n-i}$ is not a starting vertex of a trip and the edge $(b_{n-i},b_{n-(i+1)})$ is part of the tour. 
We replace the edges $(c_{i},x)$, $(y,c_{i+1})$ and $(b_{n-i},b_{n-(i+1)})$ by $(b_{n-i},x)$, $(y,b_{n-(i+1)})$ and $(c_i,c_{i+1})$. 
We have by symmetry:
\begin{align*}
\dist(c_{i},x)+\dist(y,c_{i+1})+\dist(b_{n-i},b_{n-(i+1)})\\
= \dist(b_{n-i},x)+\dist(y,b_{n-(i+1)})+\dist(c_i,c_{i+1}).
\end{align*}
Therefore, the new tour is optimum as well. But $(y,b_{n-(i+1)})$ intersects $(u, v)$, contradicting the optimality.
\end{proof}

\begin{corollary} \label{lageconn}
Let $t$ be a trip  with main side $c$ in an optimum tour for $T'_{n,m}$.
Then the first connection vertex of $t$ lies on $e$ and the second connection vertex lies on $f$.
\end{corollary}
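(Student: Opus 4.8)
The plan is to pin down the two connection vertices in two stages: first restrict which internal segments they can lie on using the results already proved, and then use a crossing argument to fix their left--right roles. Let $(c_i,x)$ and $(y,c_{i+1})$ be the first and last edge of $t$, so that $x$ and $y$ are its first and second connection vertex. By Lemma~\ref{lemma:nogi} the vertex $x$ is none of $g_1,\dots,g_{m-1}$, and by the mirror symmetry of $T'_{n,m}$ across the perpendicular bisector of $c$ (which interchanges $e$ and $f$, fixes $g$, and reverses the orientation of the tour), the same statement applied to the reversed trip shows that $y$ is none of $g_1,\dots,g_{m-1}$ either. Moreover, neither $x$ nor $y$ can be the center $M$: since $M=e_m=f_m=g_m$ lies on all three internal segments, a connection vertex equal to $M$ would share an internal segment with the other connection vertex, which Theorem~\ref{long} forbids. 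Hence $x$ and $y$ are proper internal vertices of $e$ or $f$, and by Theorem~\ref{long} they lie on \emph{different} segments; so one of them lies on $e$ and the other on $f$.

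It then remains to rule out the configuration with $x$ on $f$ and $y$ on $e$, after which the corollary follows. The hard part is a clean crossing argument, which I would phrase inside the convex triangle $ABM$. Both $x\in f=BM$ and $y\in e=AM$ lie on the sides of this triangle, while $c_i$ and $c_{i+1}$ lie on its base $AB$ with $c_i$ strictly between $A$ and $c_{i+1}$. The chord $(c_{i+1},y)$ joins a point of $AB$ to a point of $AM$ and therefore cuts $ABM$ into the triangle $A\,c_{i+1}\,y$ and a complementary region. The point $c_i$ lies on the side $A\,c_{i+1}$ of the first piece, whereas $x\in BM$ lies in the complementary region; since the whole picture is contained in the convex triangle $ABM$, the edges $(c_i,x)$ and $(y,c_{i+1})$ must then cross in their interiors. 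As these are two edges of an optimal tour, this contradicts Lemma~\ref{lemma:nocrossing}, so the configuration is impossible and we conclude that $x$ lies on $e$ and $y$ lies on $f$.

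The one step that needs care is the separation claim, namely that $c_i$ and $x$ really lie on opposite sides of the line through $c_{i+1}$ and $y$. I expect this to reduce to the elementary observation that, seen from $c_{i+1}$, the directions toward $x\in BM$ and toward $c_i\in A\,c_{i+1}$ fall on opposite sides of the direction toward $y$, after which convexity of $ABM$ forces the crossing. I would also check the boundary case $c_{i+1}=B$ (that is, $i+1=n$), where the chord starts at the vertex $B$; the same separation still holds because $c_i$ remains strictly interior to $AB$. Notably, no length estimates enter this argument: it relies only on Lemma~\ref{lemma:nogi} (and its mirror image), Theorem~\ref{long}, and the simplicity of optimal tours from Lemma~\ref{lemma:nocrossing}.
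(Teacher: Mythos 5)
Your proof is correct, and its skeleton is the paper's own: Theorem~\ref{long} forces the two connection vertices onto different internal segments, and Lemma~\ref{lemma:nogi} excludes the vertices $g_1,\dots,g_{m-1}$. What you add is exactly what the paper's two-sentence proof leaves implicit. You spell out that Lemma~\ref{lemma:nogi} also applies to the last edge, via the reflection in the perpendicular bisector of $c$ (the paper silently uses this symmetric reading of the lemma inside the proof of Theorem~\ref{long}, where it excludes $g$-vertices for both $x$ and $y$), and you note correctly that $M=e_m=f_m=g_m$ is excluded by Theorem~\ref{long} itself, since $M$ lies on every internal segment. More substantially, the paper never argues why the roles cannot be swapped, i.e.\ why $x$ on $f$ and $y$ on $e$ is impossible; your crossing argument is precisely the missing justification, and it is sound. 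To close the one gap you flag yourself: every point of the segment $c_ix$ other than $c_i$ lies strictly above the line $AB$, and the whole segment lies on the $B$-side of the line $AM$ (strictly, except possibly at $c_i$ in the case $i=0$); hence this segment, which starts in the closed triangle $Ac_{i+1}y$ and ends at $x$ outside it (that triangle meets the line $BM$ at most in the point $B$, and $x\neq B$), can first leave the triangle only through the relative interior of the chord $c_{i+1}y$, because the other two sides lie on the lines $AB$ and $AM$. This produces an interior-to-interior intersection of two tour edges sharing no endpoint, contradicting Lemma~\ref{lemma:nocrossing}, and it covers the boundary cases $c_{i+1}=B$ and $c_i=A$ uniformly. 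So yours is not a genuinely different route but a completed version of the paper's; if anything, the corollary as printed needs your final crossing step to be a full proof.
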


\begin{proof}
By Theorem~\ref{long}, the two connection vertices lie on different internal segments. 
By Lemma~\ref{lemma:nogi} the first connection vertex has to lie on $e$ and the second connection vertex on $f$.
\end{proof}

The next step is to show that the optimal tour consists of only one trip of a certain shape. Let $i_0$ be the smallest integer such that $e_{i_0}$ is a vertex in $T'_{n,m}$.

\begin{lemma} \label{ei0pos}
We have 
\begin{align*}
\frac{\dist(A,e_{i_0})}{\dist(A,M)} < \frac{1}{2}.
\end{align*}
\end{lemma}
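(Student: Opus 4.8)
The plan is to turn this into an elementary inequality for the index $i_0$. Every internal vertex $e_j$ lies on the segment from $A$ to $M$ with consecutive spacing $\gamma$, so by~(\ref{dist_e_i_e_i+1}) we have $\dist(A,e_j)=j\gamma$, while $\dist(A,M)=m\gamma=n/\sqrt3$. The claimed inequality $\dist(A,e_{i_0})/\dist(A,M)<1/2$ is therefore equivalent to $i_0<m/2$, and it suffices to bound $i_0$ from above.

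First I would pin down which internal vertices survive the passage from $T_{n,m}$ to $T'_{n,m}$. The point is that $A$ is the nearest of the three corners to every $e_j$: the perpendicular bisectors of $AB$ and of $AC$ both pass through the circumcenter $M$, and the whole segment $AM$ runs from $A$ (strictly on the $A$-side of each bisector) to $M$ (on both bisectors), hence stays on the closed $A$-side of each, so $\dist(P,A)\le\min\{\dist(P,B),\dist(P,C)\}$ for every $P$ on $e$. Consequently $e_j$ is removed precisely when $\dist(A,e_j)=j\gamma<\max\{10,4+4\gamma\}$, so $i_0$ is the smallest index with $i_0\gamma\ge\max\{10,4+4\gamma\}$. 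By minimality $(i_0-1)\gamma<\max\{10,4+4\gamma\}$, which gives
\[
\dist(A,e_{i_0})=i_0\gamma<\max\{10,4+4\gamma\}+\gamma.
\]

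It then remains to show $\max\{10,4+4\gamma\}+\gamma<\tfrac12\dist(A,M)=\frac{n}{2\sqrt3}$. I would split into two cases according to which term attains the maximum, using in each the bound $\gamma=\frac{n}{\sqrt3\,m}\le\frac{n}{22\sqrt3}$ coming from~(\ref{gamma}) and $m\ge22$ in~(\ref{eq:nm}). If the maximum equals $10$ (i.e.\ $\gamma\le 3/2$) the inequality reduces to $10+\frac{n}{22\sqrt3}<\frac{n}{2\sqrt3}$, that is $n>22\sqrt3\approx38.1$; if it equals $4+4\gamma$ it reduces to $4+\frac{5n}{22\sqrt3}<\frac{n}{2\sqrt3}$, that is $n>\frac{44\sqrt3}{3}\approx25.4$. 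Both follow from $n\ge40$.

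The only genuinely geometric ingredient is the observation that $A$ is the closest corner to each internal vertex on $e$; everything else is the spacing identity together with arithmetic under~(\ref{eq:nm}). Accordingly, the main obstacle is the tightness of the first case: the bound there requires $n>22\sqrt3\approx38.1$ and is only barely secured by $n\ge40$, so the constants entering the definition of $T'_{n,m}$ (the threshold $10$) and the bound $m\ge22$ in~(\ref{eq:nm}) must be exploited sharply rather than with room to spare.
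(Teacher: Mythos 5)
Your proof is correct and follows essentially the same route as the paper: both bound $\dist(A,e_{i_0})\le\max\{10,4+4\gamma\}+\gamma$, use $\dist(A,M)=m\gamma=n/\sqrt3$, and finish with the same two-case arithmetic under $n\ge 40$, $m\ge 22$. The only difference is that you explicitly justify (via the perpendicular bisectors) that $A$ is the nearest corner to every vertex on $e$, a fact the paper absorbs into the phrase ``by definition of the instances $T'_{n,m}$.''
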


\begin{proof}
By definition of the instances$T_{n,m}'$ we 
have $\dist(A,e_{i_0})\leq \max\{10,4+4\gamma\}+\gamma$ and $\dist(A,M)=\frac{n}{\sqrt{3}} = m\cdot \gamma$.
Moreover $\frac{10+\gamma}{m\cdot \gamma} = \frac{10\cdot \sqrt{3}}{n} + \frac{1}{m} < \frac{1}{2}$ and
$\frac{4+5\cdot \gamma}{m\cdot \gamma} 
 = \frac{4\cdot\sqrt{3}}{n} + \frac{5}{m} < \frac{1}{2}$ as by assumption~(\ref{eq:nm}) we have $n\ge 40$ and $m\ge 22
$.
\end{proof}

Let $\mathcal{T}$ be the set of trips that occur in any optimum tour. For $t\in \mathcal{T}$ let $g(t)$ be the sum of the lengths of the first and last edge of $t$, and $\delta := \min_{t\in \mathcal{T}}(g(t))$. 
Let $M_c$ be the center of $c$.

\begin{figure}[ht]
\centering
\begin{tikzpicture}[scale=1]
\def\scale{9.5}
\def\fmscale{0.9}
\def\yscale{0.65}
\def\qscale{1.4}   % invisible point extending B-M'
\def\eiscale{0.3}  

\def\sqrt3{1.732050808}

\def\dotMarkRightAngle[size=#1](#2,#3,#4){%
 \draw ($(#3)!#1!(#2)$) -- 
       ($($(#3)!#1!(#2)$)!#1!90:(#2)$) --
       ($(#3)!#1!(#4)$);
 %\path (#3) --node[circle,fill,inner sep=.5pt]{} ($($(#3)!#1!(#2)$)!#1!90:(#2)$);
}

\tikzstyle{vertex}=[black,fill,circle,draw,minimum size=4,inner sep=1pt]
\tikzstyle{tedge}=[line width=1.3]
\tikzstyle{sedge}=[line width=1.3, bend above]

\coordinate[vertex, label=left: \small{$A$}] (A) at (0,0);
\coordinate[vertex, label=right:\small{$B$}] (B) at (\scale,0);

\coordinate[vertex, label=above:\small{$M$}] (M) at (\scale / 2, \scale * \sqrt3 / 6) {};
\coordinate[vertex, label=below:\small{$M'$}] (Mprime) at (\scale / 2, - \scale * \sqrt3 / 6) {};
\coordinate[vertex, label=above right: \small{$y$}] (y) at (\scale - \yscale * \scale / 2, \yscale * \scale * \sqrt3 / 6) {};
\coordinate[vertex, label=below right: \small{$y'$}] (yprime) at (\scale - \yscale * \scale / 2, - \yscale * \scale * \sqrt3 / 6) {};
\coordinate (q) at (\scale - \qscale * \scale / 2, - \qscale * \scale * \sqrt3 / 6) {};
\coordinate[vertex, label=above left: \small{$e_{i_0}$}] (ei) at (\eiscale * \scale / 2, \eiscale * \scale * \sqrt3 / 6) {};
\coordinate[vertex, label=above right: \small{$f_{m-1}$}] (fm) at (\scale - \fmscale * \scale / 2, \fmscale * \scale * \sqrt3 / 6) {};

\coordinate[vertex, label=above left:\small{$P$}] (P) at ($(A)!(Mprime)!(M)$);
\coordinate (eifoot) at ($(Mprime)!(ei)!(q)$);
\dotMarkRightAngle[size=2.5mm](ei,eifoot,q);
\dotMarkRightAngle[size=2.5mm](Mprime,P,M);

\draw[-,  line width=0.75] (A) to (M);
\draw[-,  line width=0.75] (M) to (B);
\draw[-,  line width=0.75] (A) to (B);
\draw[-,  line width=0.75] (A) to (Mprime);
\draw[-,  line width=0.75] (B) to (Mprime);
\draw[-,  line width=0.3]  (q) to (Mprime);

\draw[-,  line width= 0.5, gray] (P) to (Mprime);
\draw[-,  line width= 0.5, gray] (ei) to (eifoot);
\draw[-, dashed, lightgray, line width= 0.5] (y) to (yprime);
\draw[-, dashed, lightgray, line width= 0.5] (M) to (Mprime);
\draw[-, dashed, red, line width= 1] (ei) to (fm);

\end{tikzpicture} 
 \caption{The situation in the proof of Lemma~\ref{new}.}
\label{fig:new}
\end{figure}

\begin{lemma} \label{new}
Let $t$ be a trip in an optimum tour of $T'_{n,m}$ such that $t$ contains at least one of $\{e_{i_0}, f_{i_0}, g_{i_0}\}$ 
as a connection vertex. Then the total length of the first and last edge in $t$ is at least $\dist(e_{i_0},f_{m-1})+1$.
\end{lemma}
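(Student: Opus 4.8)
The plan is to reduce to one canonical configuration and then bound the quantity $\dist(c_i,x)+\dist(y,c_{i+1})$ from below by combining Lemma~\ref{firstlast}, Lemma~\ref{righttriangle}, and the position estimate Lemma~\ref{ei0pos}. First I would use the threefold rotational symmetry and the reflection symmetry of $T'_{n,m}$ to assume that the main side of $t$ is $c$; note that the target value $\dist(e_{i_0},f_{m-1})$ is invariant under these symmetries. By Corollary~\ref{lageconn} the first connection vertex $x$ then lies on $e$ and the last connection vertex $y$ lies on $f$, so no connection vertex lies on $g$. Hence the hypothesis forces $x=e_{i_0}$ or $y=f_{i_0}$, and after reflecting across the perpendicular bisector of $c$ (which swaps $e\leftrightarrow f$ and reverses the trip) I may assume $x=e_{i_0}$. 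Thus $y=f_j$ for some $i_0\le j\le m$, and its reflection $f_j'$ across $c$ lies on the segment from $B$ to $M'$. By the lower bound in Lemma~\ref{firstlast} it then suffices to prove $\dist(e_{i_0},f_j')\ge \dist(e_{i_0},f_{m-1})+2$, since this gives $\dist(c_i,e_{i_0})+\dist(f_j,c_{i+1})>\dist(e_{i_0},f_j')-1\ge \dist(e_{i_0},f_{m-1})+1$.

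The geometric heart of the argument is the claim $\dist(e_{i_0},f_j')\ge\dist(e_{i_0},M')$ for every such $j$. As $f_j'$ ranges over the segment $[B,M']$, this amounts to showing that $M'$ is the point of this segment closest to $e_{i_0}$, equivalently that the perpendicular foot of $e_{i_0}$ on the line through $B$ and $M'$ lies strictly beyond $M'$, i.e.\ $\angle e_{i_0}M'B>90^\circ$. Here I would exploit that the line $e$ and the reflected line through $B$ and $M'$ are parallel (both have slope $1/\sqrt{3}$) and that the foot $P$ of the perpendicular from $M'$ onto $e$ satisfies $\dist(A,P)=\tfrac12\dist(A,M)$ (Figure~\ref{fig:new}). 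Lemma~\ref{ei0pos} places $e_{i_0}$ strictly between $A$ and $P$, and by the parallelism this forces the foot of the perpendicular from $e_{i_0}$ onto the reflected line to lie strictly beyond $M'$; the obtuse angle then yields $\dist(e_{i_0},f_j')\ge\dist(e_{i_0},M')$ via the law of cosines. (Alternatively one computes directly that the inner product of $e_{i_0}-M'$ and $B-M'$ equals $\tfrac{n}{6}\bigl(2\sqrt{3}\,\dist(A,e_{i_0})-n\bigr)$, which is negative exactly when $\dist(A,e_{i_0})<\dist(A,M)/2$, i.e.\ precisely under Lemma~\ref{ei0pos}.)

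It remains to compare $\dist(e_{i_0},M')$ with $\dist(e_{i_0},f_{m-1})$. Since $e_{i_0}$ and $M=e_m$ lie on the same internal segment $e$, Lemma~\ref{righttriangle} applied to the pair $(e_{i_0},M)$ gives $\dist(e_{i_0},M')\ge\dist(e_{i_0},M)+2+2\gamma$, while the triangle inequality together with $\dist(M,f_{m-1})=\gamma$ yields $\dist(e_{i_0},M)\ge\dist(e_{i_0},f_{m-1})-\gamma$. Combining these, $\dist(e_{i_0},M')\ge\dist(e_{i_0},f_{m-1})+2+\gamma$, which is even stronger than the required $+2$; chaining the three estimates then finishes the proof.

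I expect the main obstacle to be the claim of the second paragraph, that $M'$ is the point of the segment $[B,M']$ closest to $e_{i_0}$. This is exactly where the cutoff in the definition of $T'_{n,m}$, condensed into Lemma~\ref{ei0pos}, is essential, and carrying out the angle/position bookkeeping correctly (rather than the routine law-of-cosines and triangle-inequality estimates) is the delicate part of the argument.
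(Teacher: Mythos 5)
Your proposal is correct and follows essentially the same route as the paper's proof: reduce by symmetry to the case $x=e_{i_0}$ with main side $c$, show via Lemma~\ref{ei0pos} and the parallelism of $AM$ and $M'B$ that the perpendicular foot of $e_{i_0}$ on $M'B$ lies beyond $M'$ (so $\dist(e_{i_0},y')\ge\dist(e_{i_0},M')$), and then chain Lemma~\ref{firstlast}, Lemma~\ref{righttriangle} applied to the pair $(e_{i_0},M)$, and the triangle inequality with $\dist(M,f_{m-1})=\gamma$. Your explicit symmetry justification and the inner-product computation are just more detailed write-ups of steps the paper treats tersely.
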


\begin{proof}
Let wlog $(c_i,e_{i_0})$ and $(y,c_{i+1})$ be the first and last edge of such a trip and $P$ be the perpendicular foot of $M'$ to $AM$ 
(Figure~\ref{fig:new}). By Corollary~\ref{lageconn} we know that $y$ lies on $f$. 
Since the triangle $AM'M$ is equilateral, $P$ is the center of $AM$.
Therefore by Lemma~\ref{ei0pos} $e_{i_0}$ lies left of $P$. Hence the perpendicular foot of $e_{i_0}$ to $M'B$ lies left of $M'$ 
and we have $\dist(e_{i_0},y')\geq \dist(e_{i_0},M')$. 
By Lemma~\ref{firstlast} and Lemma~\ref{righttriangle} we get 
\begin{align*}
\dist(c_i,e_{i_0})+\dist(y,c_{i+1}) \geq & \dist(e_{i_0},y')-1 \geq \dist(e_{i_0},M')-1 \\
\geq &\dist(e_{i_0},M)+1+2\gamma> \dist(e_{i_0},f_{m-1})+1.
\end{align*}
\end{proof}

\begin{lemma} \label{firstlastedgeoftrip}
We have
$\delta\geq \dist(P,M')-1$ where $P$ is the perpendicular foot of $M'$ to $AM$.
\end{lemma}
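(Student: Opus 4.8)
The plan is to bound $g(t)$ from below for an arbitrary trip $t\in\mathcal{T}$ and then take the minimum. By the three-fold symmetry of $T'_{n,m}$ it suffices to treat a trip $t$ with main side $c$: the corresponding perpendicular-foot distances obtained when the main side is $a$ or $b$ all equal $\dist(P,M')$, so the resulting lower bound is the same in each case. So let $(c_i,x)$ and $(y,c_{i+1})$ be the first and last edge of $t$. By Corollary~\ref{lageconn} the first connection vertex $x$ lies on the internal segment $e=AM$ and the last connection vertex $y$ lies on $f=BM$.

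The key geometric observation I would establish is that the line through $A$ and $M$ is parallel to the line through $B$ and $M'$, where $M'$ is the reflection of $M$ across $c$. Since $M$ is the center of the equilateral triangle $ABC$, both internal segments $e$ and $f$ meet the base side $c$ at an angle of $30^\circ$; reflecting $f$ across $c$ turns $BM$ (rising at $30^\circ$ towards $M$) into $BM'$ (descending at $30^\circ$ towards $M'$), which therefore has the same slope as $AM$. I would confirm this directly from the explicit coordinates, where both lines have slope $1/\sqrt3$.

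Consequently $\dist(x,y')\geq\dist(P,M')$ for the reflection $y'$ of $y$ across $c$: the point $x$ lies on line $AM$, the point $y'$ lies on line $BM'$ (the reflection of $f$), and the distance between these two parallel lines equals the distance from $M'$ — which is itself a point of line $BM'$ — to line $AM$, and this is exactly $\dist(P,M')$ because $P$ is the perpendicular foot of $M'$ on $AM$. Combining this with Lemma~\ref{firstlast} gives
\[
g(t)=\dist(c_i,x)+\dist(y,c_{i+1})>\dist(x,y')-1\geq\dist(P,M')-1 .
\]
Since this holds for every $t\in\mathcal{T}$, taking the minimum yields $\delta\geq\dist(P,M')-1$, as claimed.

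The only genuinely non-routine step is the parallelism of $AM$ and $BM'$ together with the identification of the inter-line distance with $\dist(P,M')$; everything else is a direct appeal to Corollary~\ref{lageconn} and Lemma~\ref{firstlast}. I would also take care to compare $x$ and $y'$ as points of the full lines rather than of the segments $e$ and $BM'$, so that the inter-line distance is a valid lower bound for $\dist(x,y')$ no matter where the connection vertices happen to sit along the internal segments.
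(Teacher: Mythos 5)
Your proposal is correct and follows essentially the same argument as the paper: apply Lemma~\ref{firstlast} to bound the first and last edge lengths of a trip by $\dist(x,y')-1$, then use Corollary~\ref{lageconn} together with the fact that $AM$ and $M'B$ are parallel (with inter-line distance $\dist(P,M')$) to conclude $\dist(x,y')\geq\dist(P,M')$. The extra details you supply — the explicit symmetry reduction to main side $c$ and the verification of parallelism via slopes — are exactly what the paper leaves implicit.
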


\begin{proof}
Let $x$ and $y$ be the first and last connection vertex of a trip contained in an optimum tour of $T_{n,m}'$. 
By Lemma~\ref{firstlast} the total length of the first and last edge of this trip is at least $\dist(x,y')-1$. 
By Corollary~\ref{lageconn} $\dist(x,y')\geq \dist(P,M')$, since $\dist(P,M')$ is the distance between the parallel segments $AM$ and $M'B$.
\end{proof}

\begin{lemma} \label{new2}
We have
$\dist(g_{i_0},f_{m-1})+\delta > \min_i(\dist(c_i,e_{i_0})+\dist(f_{i_0},c_{i+1}))+\gamma$
\end{lemma}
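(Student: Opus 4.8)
The plan is to bound both ingredients of the left-hand side from below and the minimum on the right-hand side from above, collapsing everything to a single inequality in the parameter $d_0 := \dist(A,e_{i_0}) = i_0\gamma$. Write $R := \dist(A,M) = n/\sqrt{3}$ and $u := R - d_0 = \dist(M,e_{i_0}) = \dist(M,f_{i_0}) = \dist(M,g_{i_0})$. First I would invoke Lemma~\ref{firstlastedgeoftrip} together with the fact that $AMM'$ is equilateral with side $R$, so its altitude satisfies $\dist(P,M') = \tfrac{\sqrt{3}}{2}R = n/2$; this gives $\delta \ge n/2 - 1$. For the right-hand side I would reuse the unfolding idea from the upper-bound part of Lemma~\ref{firstlast}: since $e_{i_0}$ and $f_{i_0}$ have equal height above $c$ and are mirror images across the axis through $M$ and $M_c$, the segment $e_{i_0}f_{i_0}'$ meets $c$ exactly at its center $M_c$, so routing through the two base vertices $c_j,c_{j+1}$ flanking $M_c$ yields $\min_i(\dist(c_i,e_{i_0}) + \dist(f_{i_0},c_{i+1})) \le \dist(e_{i_0},f_{i_0}') + 1$. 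Combining the two bounds, it suffices to prove
\[ \dist(g_{i_0}, f_{m-1}) + \tfrac{n}{2} > \dist(e_{i_0}, f_{i_0}') + 2 + \gamma. \]

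Next I would replace the two remaining distances by closed forms. Since the internal segments meet at $M$ at angles of $120^\circ$, and $\dist(M,g_{i_0}) = u$, $\dist(M,f_{m-1}) = \gamma$, the law of cosines gives $\dist(g_{i_0},f_{m-1}) = \sqrt{u^2 + u\gamma + \gamma^2}$. For $\dist(e_{i_0},f_{i_0}')$ I would note that $e_{i_0}$ and $f_{i_0}$ lie at common height $d_0/2$ above $c$ and at horizontal distance $\sqrt{3}\,u$ apart, so reflecting $f_{i_0}$ across $c$ gives $\dist(e_{i_0},f_{i_0}') = \sqrt{3u^2 + d_0^2}$. With $n/2 = \tfrac{\sqrt{3}}{2}(u + d_0)$, the target becomes an elementary inequality in $u, d_0, \gamma$.

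To finish I would apply the elementary estimates $\sqrt{u^2 + u\gamma + \gamma^2} \ge u + \tfrac{\gamma}{2}$ and $\sqrt{3u^2 + d_0^2} \le \sqrt{3}\,u + \tfrac{d_0^2}{2\sqrt{3}\,u}$. After substituting and cancelling the $\sqrt{3}\,u$ terms it remains to show
\[ u\Bigl(1 - \tfrac{\sqrt{3}}{2}\Bigr) + \tfrac{\sqrt{3}}{2}\,d_0 - \tfrac{d_0^2}{2\sqrt{3}\,u} - 2 - \tfrac{\gamma}{2} > 0. \]
Here Lemma~\ref{ei0pos} is the crucial input: it gives $d_0 < R/2$, hence $d_0 < u$, so $\tfrac{d_0^2}{2\sqrt{3}\,u} < \tfrac{d_0}{2\sqrt{3}}$ and the two $d_0$-terms collapse to $\tfrac{d_0}{\sqrt{3}}$. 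The left-hand side is then bounded below by $u(1 - \tfrac{\sqrt{3}}{2}) + \tfrac{d_0}{\sqrt{3}} - 2 - \tfrac{\gamma}{2}$, whose first summand is nonnegative; using $d_0 \ge 4 + 4\gamma$ from~(\ref{mindistproperty}) gives $\tfrac{d_0}{\sqrt{3}} - \tfrac{\gamma}{2} \ge \tfrac{4}{\sqrt{3}} + \gamma\bigl(\tfrac{4}{\sqrt{3}} - \tfrac12\bigr) > 2$, which closes the argument.

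The main obstacle is calibrating the estimates rather than any single deep step: naive triangle-inequality bounds on $\dist(e_{i_0},f_{i_0}')$ against $\dist(g_{i_0},f_{m-1})$ are too lossy (the resulting slack can exceed $n/2 - 2 - \gamma$), so one must carry the exact law-of-cosines and reflection expressions and exploit both inequalities $d_0 < u$ (Lemma~\ref{ei0pos}) and $d_0 \ge 4 + 4\gamma$ simultaneously. The only other subtlety lies in the reduction, namely ensuring that the crossing point $M_c$ of $e_{i_0}f_{i_0}'$ with $c$ genuinely lies between two consecutive base vertices; this is immediate since $M_c \in c = [c_0,c_n]$.
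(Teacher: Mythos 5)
Your proof is correct, and it reaches the paper's own intermediate reduction by the same two steps: Lemma~\ref{firstlastedgeoftrip} together with $\dist(P,M')=n/2$ (the altitude of the equilateral triangle $AMM'$), and the unfolding bound $\min_i(\dist(c_i,e_{i_0})+\dist(f_{i_0},c_{i+1}))\le\dist(e_{i_0},f'_{i_0})+1$. From that point on you genuinely diverge. The paper discards the law-of-cosines gain at the $120^\circ$ angle (it replaces $\dist(g_{i_0},f_{m-1})$ by the smaller $\dist(g_{i_0},M)$) and then argues synthetically: it places an auxiliary point $S$ on the ray $AM$ beyond $M$ with $\dist(M,S)=\dist(P,M')-2-\gamma$ and proves $\dist(e_{i_0},f'_{i_0})<\dist(e_{i_0},S)$ via the sine law in the triangle $e_{i_0}M_cS$, a monotonicity argument combined with concavity of the sine, and an explicit check of the two extreme positions $e_{i_0}=A$ and $e_{i_0}=P$; this is where assumption~(\ref{eq:nm}) enters numerically. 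You instead keep exact closed forms, $\dist(g_{i_0},f_{m-1})=\sqrt{u^2+u\gamma+\gamma^2}$ and $\dist(e_{i_0},f'_{i_0})=\sqrt{3u^2+d_0^2}$, and finish with the elementary estimates $\sqrt{u^2+u\gamma+\gamma^2}\ge u+\gamma/2$ and $\sqrt{3u^2+d_0^2}\le\sqrt3\,u+d_0^2/(2\sqrt3\,u)$, needing only $d_0<u$ (Lemma~\ref{ei0pos}) and $d_0\ge 4+4\gamma$ (property~(\ref{mindistproperty})). I verified the algebra: $\tfrac{\sqrt3}{2}d_0-\tfrac{d_0}{2\sqrt3}=\tfrac{d_0}{\sqrt3}$, the coefficient $\tfrac{4}{\sqrt3}-\tfrac12$ of $\gamma$ is positive, $\tfrac{4}{\sqrt3}>2$, and strictness survives, so the lemma follows. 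Comparing the two: your argument is more elementary, avoids both the case distinction and any direct appeal to~(\ref{eq:nm}), and makes transparent where each hypothesis on $T'_{n,m}$ is used; in particular your choice of $d_0\ge 4+4\gamma$ rather than $d_0\ge 10$ is exactly right, since $\gamma$ is not bounded in this lemma and the loss term $\gamma/2$ must be beaten by something growing linearly in $\gamma$. The paper's argument, in exchange, never needs coordinates, and its convexity trick of reducing a one-parameter family of inequalities to its two endpoint cases is reusable in situations where closed forms are unwieldy.
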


\begin{proof}
By Lemma~\ref{firstlastedgeoftrip} it suffices to prove that 
$\dist(g_{i_0},f_{m-1})+\dist(P,M')-1> \min_i(\dist(c_i,e_{i_0})+\dist(f_{i_0},c_{i+1}))+\gamma$.
As by the triangle inequality $\min_i(\dist(c_i,e_{i_0})+\dist(f_{i_0},c_{i+1})) = \min_i(\dist(c_i,e_{i_0})+\dist(f_{i_0},c_{i+1})+\dist(c_{i},c_{i+1}))-1 < \dist(e_{i_0}, f_{i_0}') + 1$
the latter inequality follows from the inequality 
$\dist(g_{i_0},M)+\dist(P,M') -2 -\gamma > \dist(e_{i_0},f'_{i_0})$ which we are now going to prove.

Consider the point on $S$ on $AM$ with $\dist(M,S)=\dist(P,M')-2-\gamma$ and that lies on the opposite side of $M$ than $A$. It is enough to show that $\frac{\dist(e_{i_0},f'_{i_0})}{\dist(e_{i_{0}},S)}< 1$. By symmetry, we have $2\cdot \dist(e_{i_0},M_c)=\dist(e_{i_0},f'_{i_0})$. 
By the sine law in the triangle $e_{i_0}M_cS$, we have $\frac{\dist(e_{i_0},M_c)}{\dist(e_{i_0},S)}=\frac{\sin(\angle e_{i_0}SM_c)}{\sin(\angle SM_ce_{i_0})}$. If we move $e_{i_0}$ towards $A$, $\angle e_{i_0}SM_c$ is fixed and $\angle SM_ce_{i_0}$ is monotonically increasing. Since the sine is concave in $[0,\pi]$, it is enough to verify the statement for the cases where $e_{i_0}$ is as far and as near as possible to $A$. 
By Lemma~\ref{ei0pos} 
$e_{i_0}$ lies between $A$ and $P$. For $e_{i_0}=A$, we have using~(\ref{eq:nm}): 
\begin{align*}
\dist(e_{i_0},S)&=\dist(e_{i_0},M)+\dist(M,S)=\frac{\sqrt{3}}{3}n+\frac{1}{2}n-2-\gamma\\
&\geq \frac{\sqrt{3}}{3}n+\frac{1}{2}n -2-\frac{\sqrt{3}}{66}n > n+\frac{1}{20}n-2\geq n > \dist(e_{i_0},f'_{i_0}).
\end{align*}
For $e_{i_0}=P$, we have 
\begin{align*}
\dist(e_{i_0},S) = \dist(e_{i_0},M)+\dist(M,S)& \geq\frac{1}{2\sqrt{3}}n + \frac{1}{2}n-2-\gamma\\
&\geq \frac{1}{\sqrt{3}}n+\frac{\sqrt{3}-1}{2\sqrt{3}}n-2-\frac{\sqrt{3}}{66}n \\
& > \frac{1}{\sqrt{3}}n = \dist(A,M) = \dist(e_{i_0},f'_{i_0}).
\end{align*}
\end{proof}

\begin{lemma} \label{lemma:optstart}
$c_{\lfloor \frac{n-1}{2} \rfloor}$ resp.\ $c_{\lfloor \frac{n-1}{2}\rfloor+1}$ and $c_{\lceil \frac{n-1}{2} \rceil}$ resp.\ $c_{\lceil \frac{n-1}{2}\rceil+1}$ are the optimal starting resp.\ ending vertices of a trip with main side $c$ and connection vertices $e_{i_0}$ and $f_{i_0}$.
\end{lemma}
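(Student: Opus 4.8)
The plan is to reduce the statement to a one-dimensional minimization. Fix a trip with main side $c$, first connection vertex $e_{i_0}$ and last connection vertex $f_{i_0}$; such a trip runs from some base vertex $c_i$ to the consecutive base vertex $c_{i+1}$, and its interior (the part between $e_{i_0}$ and $f_{i_0}$) does not depend on the choice of $i$. Since every base edge has length $1$, moving the trip to a different consecutive pair of base vertices replaces one base edge by another of equal length, so the tour length changes only through the sum $\dist(c_i,e_{i_0})+\dist(f_{i_0},c_{i+1})$ of its first and last edge. Hence it suffices to show that this sum, as a function of $i\in\{0,\dots,n-1\}$, is minimized exactly at $i=\lfloor\frac{n-1}{2}\rfloor$ and at $i=\lceil\frac{n-1}{2}\rceil$.

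The first tool is the reflection $\sigma\colon(x,y)\mapsto(n-x,y)$ across the perpendicular bisector of $AB$. It is an isometry of the instance that fixes $C$ and $M$ and satisfies $\sigma(A)=B$, $\sigma(e_j)=f_j$ and $\sigma(c_j)=c_{n-j}$; in particular $\sigma(e_{i_0})=f_{i_0}$. Writing $\phi(t):=\dist\big((t,0),e_{i_0}\big)$, the isometry gives $\dist\big((s,0),f_{i_0}\big)=\dist\big(\sigma(s,0),\sigma(f_{i_0})\big)=\phi(n-s)$, so that $\dist(c_i,e_{i_0})+\dist(f_{i_0},c_{i+1})=\phi(i)+\phi(n-1-i)$, an expression that is visibly symmetric under $i\mapsto n-1-i$.

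I would then extend this to the real function $G(t):=\phi(t)+\phi(n-1-t)$. Each summand has the form $\sqrt{(t-u)^2+v^2}$ with $v\neq0$, because $e_{i_0}$ lies strictly above the side $c$; such a function is strictly convex, hence so is $G$, and $G$ inherits the symmetry $G(t)=G(n-1-t)$. A strictly convex function symmetric about $\frac{n-1}{2}$ decreases strictly to the left of $\frac{n-1}{2}$ and increases strictly to its right, so $t^\ast=\frac{n-1}{2}$ is its unique real minimizer, and by~(\ref{eq:nm}) it lies in the interior of $[0,n-1]$. Consequently the integer minimizers of $\phi(i)+\phi(n-1-i)$ are precisely the integer(s) nearest to $t^\ast$: when $n$ is odd this is the single value $\frac{n-1}{2}=\lfloor\frac{n-1}{2}\rfloor=\lceil\frac{n-1}{2}\rceil$, and when $n$ is even the two integers $n/2-1=\lfloor\frac{n-1}{2}\rfloor$ and $n/2=\lceil\frac{n-1}{2}\rceil$ are equidistant from the half-integer $t^\ast$ and therefore give the same minimal value. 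In both cases the optimal starting resp.\ ending vertices are exactly $c_{\lfloor\frac{n-1}{2}\rfloor}$ resp.\ $c_{\lfloor\frac{n-1}{2}\rfloor+1}$ and $c_{\lceil\frac{n-1}{2}\rceil}$ resp.\ $c_{\lceil\frac{n-1}{2}\rceil+1}$. The only point requiring a little care is the passage from the continuous minimizer to the integer minimizers, but strict convexity together with the exact symmetry about $\frac{n-1}{2}$ settles it immediately.
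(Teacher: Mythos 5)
Your proof is correct, and while it starts from the same reduction as the paper, it finishes by a genuinely different device. Both arguments observe that only the sum $\dist(c_i,e_{i_0})+\dist(f_{i_0},c_{i+1})$ varies with $i$ and minimize it. The paper then symmetrizes by a unit translation rather than by your reflection: it replaces $f_{i_0}$ by the point $\overline{f}_{i_0}$ obtained by shifting $f_{i_0}$ to the left by $1$, so that $\dist(f_{i_0},c_{i+1})=\dist(\overline{f}_{i_0},c_i)$ and the quantity to minimize becomes the sum of distances from the single moving point $c_i$ to the two fixed foci $e_{i_0}$ and $\overline{f}_{i_0}$; it then argues via nested confocal ellipses that this sum is monotone in the distance of $c_i$ from the point $\overline{M}$ where the perpendicular bisector of $e_{i_0}\overline{f}_{i_0}$ meets $c$, and notes that $\overline{M}$ equals $c_{(n-1)/2}$ for odd $n$ and the midpoint of $c_{\lfloor (n-1)/2\rfloor}$ and $c_{\lceil (n-1)/2\rceil}$ for even $n$. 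Since $e_{i_0}=(u,v)$ and $f_{i_0}=(n-u,v)$, the unit shift and your reflection produce literally the same one-variable function $G(t)=\phi(t)+\phi(n-1-t)$, so the two proofs differ only in the final monotonicity step: ellipse nesting in the paper versus strict convexity of $\sqrt{(t-u)^2+v^2}$ plus symmetry in your write-up. Your version is somewhat more self-contained and makes the passage from the continuous minimizer $\frac{n-1}{2}$ to the integer minimizers fully rigorous (strict convexity and symmetry give that $G$ is strictly increasing in $\bigl|t-\frac{n-1}{2}\bigr|$, which also justifies the tie for even $n$), a point the paper compresses into the assertion that the sum is monotone in the distance to $\overline{M}$; the paper's version, in exchange, needs no calculus and exhibits the optimal point $\overline{M}$ purely geometrically.
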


\begin{proof}
We have to show $\dist(c_i,e_{i_0})+\dist(f_{i_0},c_{i+1})$ is minimal for 
$i\in\{\lfloor \frac{n-1}{2} \rfloor, \lceil \frac{n-1}{2} \rceil\}$. Let $\overline{f}_{i_0}$ be the point obtained by shifting $f_{i_0}$ to the left by 1. Let $\overline{M}$ be the point of intersection of the perpendicular bisector of $e_{i_0}\overline{f}_{i_0}$ and $c$. We have $\dist(c_i,e_{i_0})+\dist(f_{i_0},c_{i+1})=\dist(c_i,e_{i_0})+\dist(\overline{f}_{i_0},c_{i})$.
The trace of points $X$ satisfying $\dist(X,e_{i_0})+\dist(\overline{f}_{i_0},X)=k$ for a fixed $k$ is an ellipsoid with focal points $e_{i_0}$ and $\overline{f}_{i_0}$. The size of the ellipsoid is strictly monotonically decreasing with $k$. Consider the ellipsoid through $c_i$ with focal points $e_{i_0}$ and $\overline{f}_{i_0}$. The size of the ellipsoid is strictly monotonically decreasing with the distance of $c_i$ to $\overline{M}$. Thus, the sum $\dist(c_i,e_{i_0})+\dist(\overline{f}_{i_0},c_{i})$ is strictly monotonically decreasing with the distance of $c_i$ to $\overline{M}$. For odd $n$, $\overline{M}$ is $c_{\frac{n-1}{2}}$, for even $n$, it is the midpoint of $c_{\lfloor \frac{n-1}{2} \rfloor}$ and $c_{\lceil \frac{n-1}{2} \rceil}$. This proves the statement. 
\end{proof}

\begin{lemma} \label{key}
Let $x$ resp.\ $y$ be vertices on $e$ resp.\ $f$, $c_i$ be any base vertex. 
If the edge $(c_i,x)$ is part of an optimum tour in $T_{n,m}'$, then there exists no $z$ on $e$ and to the left of $x$ such that $(z,g_k)$ is part of the tour for any vertex $g_k$. 
Similarly if $(y,c_{i})$ is part of the tour, then there exists no $z$ on $f$ and to the right of $y$ such that $(g_k,z)$ is part of the tour for any vertex $g_k$.
\end{lemma}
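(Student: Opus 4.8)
The plan is to prove the first statement and obtain the second from the mirror symmetry of $T'_{n,m}$ across the perpendicular bisector of $c$: this reflection fixes $g$, exchanges $A\leftrightarrow B$ and $e\leftrightarrow f$, swaps left and right, and reverses the tour orientation. Under it a first trip edge $(c_i,x)$ with $x$ on $e$ maps to a last trip edge $(y,c_j)$ with $y$ on $f$, and a forward edge $(z,g_k)$ maps to a forward edge $(g_k,z)$, so the two parts are mirror images of one another. I therefore focus on the first part: assuming both directed edges $(c_i,x)$ and $(z,g_k)$ lie in an optimum tour, with $z$ on $e$ strictly left of $x$ and $g_k$ on $g$, I derive a contradiction.

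The first thing to notice is that the two edges $(c_i,x)$ and $(z,g_k)$ do \emph{not} themselves cross, so simplicity (Lemma~\ref{lemma:nocrossing}) cannot be invoked directly: writing $\ell$ for the line through $e$ (the line $AM$), the point $c_i$ lies strictly below $\ell$ while $g_k$ lies weakly above it, so $(c_i,x)$ stays in the closed lower half-plane of $\ell$ meeting $\ell$ only at $x$, and $(z,g_k)$ stays in the closed upper half-plane meeting $\ell$ only at $z\neq x$. The key device is instead to reflect $c_i$ across $\ell$; a short computation (the same reflection used in Theorem~\ref{long}) shows this reflection equals the base vertex $b_{n-i}$ on side $b$. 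Since reflection across $\ell$ fixes every point of $e$, it preserves $\dist(c_i,x)=\dist(b_{n-i},x)$ and $\dist(c_i,z)=\dist(b_{n-i},z)$.

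I then compare the tour with the one produced by the 2-opt move deleting $(c_i,x)$ and $(z,g_k)$ and inserting $(c_i,z)$ and $(x,g_k)$. Because both edges are directed consistently with the tour orientation, traversing from $c_i$ gives the cyclic order $c_i,x,\dots,z,g_k,\dots$, so the move simply reverses the subpath from $x$ to $z$ and yields a single valid tour. By the reflection it suffices to show $\dist(b_{n-i},z)+\dist(x,g_k)<\dist(b_{n-i},x)+\dist(z,g_k)$. For this I use that $b$, $g$, $e$ are exactly the three sides $AC$, $CM$, $MA$ of the triangle $ACM$: the four points $b_{n-i}\in AC$, $g_k\in CM$, and $x,z\in MA$ (with $z$ nearer $A$) appear on the boundary of the convex region $ACM$ in the cyclic order $b_{n-i},g_k,x,z$. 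Hence $(b_{n-i},x)$ and $(z,g_k)$ are the two diagonals of a convex quadrilateral, and the strict diagonal-versus-opposite-sides inequality gives precisely the claim. Transporting back through the reflection yields $\dist(c_i,z)+\dist(x,g_k)<\dist(c_i,x)+\dist(z,g_k)$, so the new tour is strictly shorter, contradicting optimality.

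The main obstacle is exactly this crossing claim, and the subtle point is that the two \emph{original} edges do not cross, so one must first reflect $c_i$ across $e$ to convert the length comparison into a genuine crossing; once that is done, placing the four points on the boundary of $ACM$ makes convex position and the inequality immediate. The remaining care concerns degenerate positions that would push a point to a vertex of $ACM$. If $c_i=A$ then $(c_i,x)$ runs along $\ell$ through the interior vertex $z$, and if $g_k=M$ then $(z,g_k)$ runs along $\ell$ through the interior vertex $x$ (the coincidence $x=g_k=M$ is impossible, since it would make $M$ the successor of both $c_i$ and $z$). In either degenerate case the tour has a vertex in the relative interior of a non-incident edge and so is not simple, again contradicting Lemma~\ref{lemma:nocrossing}. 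Thus in every case either simplicity or the convex-quadrilateral exchange produces a contradiction, which proves the lemma.
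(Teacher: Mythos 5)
Your proof is correct and takes essentially the same route as the paper's: a 2-opt exchange replacing $(c_i,x)$ and $(z,g_k)$ by $(c_i,z)$ and $(x,g_k)$ (valid because the directed edges give the cyclic order $c_i,x,\dots,z,g_k$), made strictly improving by a reflection across the line through $e$ that turns the non-crossing configuration into crossing diagonals of a convex quadrilateral. The only cosmetic differences are that you reflect $c_i$ to $b_{n-i}$ while the paper reflects $g_k$ to $f_k$ (the mirror image of the identical argument), and that you explicitly dispose of the degenerate collinear cases, which the paper leaves implicit.
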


\begin{proof}
Assume there is a leaving edge $(z,g_k)$ with $z$ to the left of $x$. Then, $(c_i, z)$ is not in the tour, since this would result in
a pseudo-tour and $(x,g_k)$ is not in the tour since otherwise the edge $(z,g_k)$ would be oriented $(g_k,z)$.
Hence, we can replace the edges $(c_i, x)$ and $(z,g_k)$ by $(c_i, z)$ and $(x, g_k)$. 
If the old tour was $(c_i,x)X(z,g_k)Y$ where $X, Y$ are subpaths, we get the new tour $(c_i,z)\overline{X}(x,g_k)Y$. Here $\overline{X}$
denotes the reversed subpath of $X$. Since the segments $c_ix$ and $f_kz$ intersect, $\dist(c_i,z)+\dist(f_k,x) < \dist(c_i, x)+\dist(f_k, z)$
follows from the (strict) triangle inequality applied to the triangles $zc_is$ and $sf_kx$, where $s$ is the point of intersection of $c_ix$
and $f_kz$. By symmetry we have $\dist(f_k,x)=\dist(g_k,x)$ and $\dist(f_k, z)=\dist(g_k, z)$ and thus 
$\dist(c_i,z)+\dist(g_k,x) < \dist(c_i, x)+\dist(g_k, z)$.  
Hence, the new tour is shorter, which is a contradiction to the optimality. Analogously we can prove the second statement.
\end{proof}

\begin{lemma}
For all internal vertices, the distance to the nearest vertex is $\gamma$, for all other vertices it is 1.
\end{lemma}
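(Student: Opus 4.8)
The plan is to prove the two assertions separately — that every internal vertex has nearest-vertex distance $\gamma$ and every base vertex has nearest-vertex distance $1$ — and for each vertex to both exhibit a neighbour realising the claimed value and rule out any closer vertex. By the threefold rotational symmetry of the construction it suffices to treat a representative internal vertex $e_j$ on $e$ and a representative base vertex $c_i$ on $c$. The decisive preliminary step is to separate the two vertex types: since the segment $e=AM$ bisects the $60^\circ$ angle at $A$, it meets the line through the side $c$ at angle $\angle MAB=30^\circ$, so every surviving internal vertex $e_j$ has perpendicular distance $\dist(A,e_j)\sin 30^\circ\ge\tfrac12(4+4\gamma)=2+2\gamma$ to that line, using~(\ref{mindistproperty}). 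By the reflection symmetry across $AM$ the same bound holds for the other side meeting $A$, and the side opposite $A$ is farther still, so $\dist(c_i,e_j)\ge 2+2\gamma$. In particular every base–internal distance strictly exceeds both $1$ and $\gamma$, so a nearest neighbour of a base vertex is itself a base vertex and a nearest neighbour of an internal vertex is itself internal.

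For the base vertices, $c_i$ has a consecutive same-side neighbour at distance exactly $1$ by construction, and a corner such as $A=c_0=b_n$ has the two neighbours $c_1,b_{n-1}$ at distance $1$. No two base vertices are closer than $1$: two on the same side are at distance $\ge 1$, with equality exactly for consecutive ones; two on different sides meet at a corner of the triangle, e.g.\ $a_i$ and $c_{n-j}$ subtend the $60^\circ$ angle at $B$ with $i,j\ge 1$, so the law of cosines gives $\dist(a_i,c_{n-j})=\sqrt{i^2+j^2-ij}\ge 1$ with equality at $i=j=1$. Hence the nearest-vertex distance of every base vertex equals $1$.

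For the internal vertices I first supply a neighbour at distance $\gamma$. Consecutive vertices on one segment are at distance exactly $\gamma$ by~(\ref{dist_e_i_e_i+1}) and~(\ref{gamma}), so I only check that such a neighbour survives the modification: if $e_j$ survives then $\dist(A,e_{j+1})>\dist(A,e_j)\ge\max\{10,4+4\gamma\}$, so $e_{j+1}$ survives whenever $j<m$, while the surviving vertex $M=e_m$ has the surviving neighbour $e_{m-1}$ guaranteed by the computation following~(\ref{eq:nm}). To see nothing is closer, note that non-consecutive vertices on a single segment are at distance $\ge 2\gamma$, and that two vertices on different internal segments, say $e_j$ and $f_k$ with $j,k\le m-1$, lie at distances $p\gamma$ and $q\gamma$ from their common endpoint $M$ with $p=m-j\ge 1$, $q=m-k\ge 1$, enclosing there the $120^\circ$ angle between the segments; the law of cosines yields $\dist(e_j,f_k)=\gamma\sqrt{p^2+q^2+pq}\ge\sqrt 3\,\gamma>\gamma$. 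Combined with the base–internal bound of the first paragraph, this shows the nearest-vertex distance of every internal vertex equals $\gamma$.

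All computations are elementary law-of-cosines estimates; the two points requiring care are re-deriving the base–internal bound $2+2\gamma$ rather than the weaker $\ge 5$ recorded in~(\ref{mindistproperty2}), since $\gamma$ may exceed $5$, and confirming that the same-segment neighbour realising the value $\gamma$ is not among the internal vertices deleted near a corner. Both are handled above using the monotonicity of $\dist(A,\cdot)$ along a segment together with hypothesis~(\ref{eq:nm}).
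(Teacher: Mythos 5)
Your proof is correct and takes essentially the same route as the paper: the only non-trivial step in both arguments is separating the two vertex classes via the angle bound $\dist(c_i,e_j)\ge \sin(30^\circ)\cdot\dist(A,e_j)\ge \tfrac12\max\{10,4+4\gamma\}=\max\{5,2+2\gamma\}$, which exceeds both $1$ and $\gamma$. Your additional law-of-cosines checks (base vertices on different sides, internal vertices on different segments) and the verification that a consecutive neighbour at distance $\gamma$ survives the modification are exactly the details the paper compresses into ``by definition,'' so you have merely made its short argument fully explicit.
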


\begin{proof}
By definition the smallest distance from an internal vertex to another internal vertex is $\gamma$, and the smallest distance from a
noninternal vertex to another noninternal vertex is 1. On the other hand, the smallest distance between an internal vertex and a noninternal
vertex is at least $\sin(30^\circ)\cdot \dist(A,e_{i_0})\ge \frac{1}{2}\max\{10,4+4\gamma\}$ which is larger than 1 and $\gamma$.
\end{proof}

Consider the tour $T^*$ that only contains one trip which visits $e_{i_0}$ first and $f_{i_0}$ last and visits adjacent internal vertices
except for the edge $(g_{i_0},f_{m-1})$. 
Moreover, the first and last vertex of the trip in $T^*$ are the vertices 
$c_{\lfloor \frac{n-1}{2} \rfloor}$ and $c_{\lfloor \frac{n-1}{2}\rfloor+1}$ (Figure~\ref{fig:OptimumTour}).
We show that up to rotations and reflections the tour $T^*$ is the only
optimum tour in $T_{n,m}'$. 

\begin{figure}[ht]
\centering

\begin{tikzpicture}[scale=0.3]
\def\n{24}
\def\m{18}
\def\sqrt3{1.73205}
\def\mycircle{circle(1.25mm)}

\def\a#1{(\n     - #1 / 2, #1 * \sqrt3 / 2)}
\def\c#1{(#1, 0)}
\def\e#1{(#1 * \n / 2 / \m, #1 * \n / 2 / \sqrt3 / \m)}
\def\f#1{(\n - #1 * \n / 2 / \m, #1 * \n / 2 / \sqrt3 / \m)}
\def\g#1{(\n / 2, \n * \sqrt3 / 2 - #1 * \n / \sqrt3 / \m)}

\draw[red]  \c0 -- \c{12};
\draw[red]  \c{12} -- \e{9} -- \e{18} -- \g{9} -- \f{17} -- \f{9} -- \c{13};
\draw[red]  \c{13} -- \a{0} -- \a{24} -- \c{0};

\pgfmathparse{(\n - 1)}
\foreach \i in {1,...,\pgfmathresult}
{
   \fill (\n     - \i / 2, \i * \sqrt3 / 2) \mycircle ;
   \pgfmathparse{(\n - \i) * \sqrt3 / 2}
   \fill (\n / 2 - \i / 2,  \pgfmathresult) \mycircle;
   \fill (             \i,               0) \mycircle;
}

\pgfmathparse{\m - 1}
\foreach \j in {9,...,\pgfmathresult}
{
   \fill(     \j * \n / 2 / \m, \j * \n / 2 / \sqrt3 / \m) \mycircle ;
   \fill(\n - \j * \n / 2 / \m, \j * \n / 2 / \sqrt3 / \m) \mycircle ;
   \fill(               \n / 2, \n * \sqrt3 / 2 - \j * \n / \sqrt3 / \m) \mycircle ;
}

\fill(\n / 2 ,  \n / 2 / \sqrt3) \mycircle ;%node[anchor = south west] {$M$};
\fill(\n, 0) \mycircle ;%node[anchor = west] {$B$};
\fill(\n / 2,  \n * \sqrt3 / 2) \mycircle ;%node[anchor = south] {$C$};
\fill(0, 0) \mycircle ;%node[anchor = east] {$A$};

\end{tikzpicture}

\caption{An optimum tour for the instance  $T'_{24,18}$.}
\label{fig:OptimumTour}
\end{figure}

Let $K := 3n -1 + (3(m-i_0)-1)\gamma$. This is the maximum total length of all pairwise point distances that are $1$ or $\gamma$
that can be contained in a tour in $T_{n,m}'$.
Note that the tour $T^*$ has length
 $\dist(c_{\lfloor \frac{n-1}{2} \rfloor},e_{i_0})+\dist(f_{i_0},c_{\lfloor \frac{n-1}{2}\rfloor+1})+\dist(g_{i_0},f_{m-1})+ K$.

Let the \emph{long edges} of a TSP tour be the edges that are incident to one of $\{e_{i_0},f_{i_0},g_{i_0}\}$
and have length larger than $\gamma$.
A long edge is called \emph{internal} if it connects two internal vertices, it is called \emph{externalizing} if it is the 
starting or ending edge of a trip.

\begin{theorem}
Up to symmetry the tour $T^*$ is the unique optimum tour for $T'_{n,m}$.
\label{thm:uniqueOpt}
\end{theorem}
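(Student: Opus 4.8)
The plan is to give a global lower bound on the length of an arbitrary optimum tour $T$ and to show that equality with $\mathrm{length}(T^*)$ forces $T$ to be $T^*$ up to a symmetry of the triangle $ABC$. I classify every edge of $T$ into three types: \emph{base edges} of length exactly $1$, which by the preceding lemma can only join two consecutive base vertices; \emph{spider edges} of length exactly $\gamma$, which can only join two consecutive internal vertices on a common internal segment; and all remaining \emph{expensive edges}. Let $k$ be the number of trips of $T$. By Lemma~\ref{lemma:convexhull} the base vertices are visited in convex-hull order, so between two consecutive base vertices $T$ runs either a base edge or a single trip; hence $T$ has exactly $3n-k$ base edges and no edge joins two non-consecutive base vertices. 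Since the internal vertices of a trip form one contiguous subpath whose two end edges reach base vertices and are therefore expensive by~(\ref{mindistproperty2}), the expensive edges are exactly the $2k$ trip-end edges together with the $\ell$ expensive internal edges, and the number of spider edges equals $3(m-i_0)+1-k-\ell$.

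The $3(m-i_0)+1$ internal vertices together with the spider edges form a $Y$-shaped tree with leaves $e_{i_0},f_{i_0},g_{i_0}$ and centre $M$; covering its vertices by vertex-disjoint paths needs at least two paths, so at most $3(m-i_0)-1$ spider edges can occur and in particular $k+\ell\ge 2$. Collecting the three contributions gives the identity
\[ \mathrm{length}(T)=K+(1-k)+(2-k-\ell)\gamma+E+I, \]
where $E$ is the total length of the trip-end edges and $I$ that of the expensive internal edges. Writing $D:=\min_i(\dist(c_i,e_{i_0})+\dist(f_{i_0},c_{i+1}))$, the tour $T^*$ realises $k=1$, $\ell=1$, $E=D$ (by Lemma~\ref{lemma:optstart}) and $I=\dist(g_{i_0},f_{m-1})$, so $\mathrm{length}(T^*)=K+D+\dist(g_{i_0},f_{m-1})$. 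Since for the optimum tour $T$ every trip $t$ satisfies $g(t)\ge\delta$, we have $E\ge k\delta$, and the task becomes to show that any choice other than $(k,\ell)=(1,1)$ with the $T^*$-attachment strictly increases the right-hand side.

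For $k=\ell=1$ the single trip has, by Corollary~\ref{lageconn}, its first connection vertex on $e$ and its last on $f$, while its unique expensive internal edge joins the two loose ends of the $Y$-tree left after deleting one spider edge incident to $M$. If the deleted edge were $Mg_{m-1}$, a connection vertex would be forced onto $g$, contradicting Corollary~\ref{lageconn}; after deletion of $Me_{m-1}$, Lemma~\ref{key} excludes the attachment in which the low vertex $e_{i_0}$ reaches $g$, leaving only the reflection of $T^*$. Thus, up to the symmetry exchanging $e$ and $f$, the deleted edge is $Mf_{m-1}$, the spider paths are $e_{i_0}\cdots M\cdots g_{i_0}$ and $f_{i_0}\cdots f_{m-1}$, and the two admissible attachments are the edge $g_{i_0}f_{m-1}$ (giving $T^*$) and the edge $g_{i_0}f_{i_0}$ with connection vertices $e_{i_0},f_{m-1}$; the latter is ruled out because its trip-end edge reaching the far vertex $f_{m-1}$ is too long by Lemma~\ref{firstlast} and Lemma~\ref{new}. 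Finally Lemma~\ref{lemma:optstart} pins the base endpoints to $c_{\lfloor(n-1)/2\rfloor}$ and $c_{\lfloor(n-1)/2\rfloor+1}$, which identifies $T^*$ up to symmetry.

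The \textbf{main obstacle} is the case analysis for $(k,\ell)\neq(1,1)$, where the base/spider budget is no longer tight and one must verify that the expensive edges overcompensate. For $k=1,\ \ell\ge2$ an extra expensive internal edge is traded for one more spider edge, a net loss since every expensive edge exceeds $\gamma$; for $k\ge2$ each further trip forces two additional trip-end edges of length at least $\delta$ against a gain of only one base edge and one spider edge, and the sharp comparison here is exactly Lemma~\ref{new2} in the form $\dist(g_{i_0},f_{m-1})+\delta>D+\gamma$, which certifies that the single expensive internal edge $g_{i_0}f_{m-1}$ of $T^*$ is cheaper than any splitting or any trip end reaching up to $f_{m-1}$. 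I also note that several alternative single-trip shapes share the length of $T^*$ by the threefold symmetry; these are precisely the dihedral images of $T^*$ and are covered by the phrase ``up to symmetry''. Throughout, Lemma~\ref{lemma:nocrossing} discards configurations whose expensive edges would cross, and Lemma~\ref{lemma:pseudo-tour} guarantees that the local edge replacements used above yield genuine tours or provably longer pseudo-tours.
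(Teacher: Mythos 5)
Your accounting framework is sound as far as it goes: the base/spider/expensive classification, the identity $\mathrm{length}(T)=K+(1-k)+(2-k-\ell)\gamma+E+I$, and the Y-tree argument giving $k+\ell\ge 2$ are all correct, and your treatment of $(k,\ell)=(1,1)$ follows the right lines. The genuine gap is in every other case, where the inequality you need is \emph{macroscopic} while the only lower bound your framework supplies for an expensive internal edge is the microscopic ``$>\gamma$''. Concretely, for $k=1,\ \ell\ge 2$ you must show $E+I-(\ell-1)\gamma>D+\dist(g_{i_0},f_{m-1})$; since an expensive internal edge can be as short as $\sqrt{3}\gamma$ (for example $(e_{m-1},g_{m-1})$), the bound $I>\ell\gamma$ only puts $E+\gamma$ on the left, missing the target by roughly $\dist(g_{i_0},f_{m-1})\approx n/\sqrt 3-10$; moreover $E\ge D$ is only guaranteed by Lemma~\ref{lemma:optstart} when the connection vertices are $e_{i_0},f_{i_0}$, which you cannot assume here. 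For $k\ge 2$ the leanest configuration is $k=2,\ \ell=1$ with $I=\dist(g_{i_0},f_{m-1})$, where the required inequality is $2\delta>D+1+\gamma$; Lemma~\ref{new2} as you quote it, $\dist(g_{i_0},f_{m-1})+\delta>D+\gamma$, is far weaker (its left side carries the huge summand $\dist(g_{i_0},f_{m-1})$) and does not imply it --- one needs $\delta\ge\dist(P,M')-1=n/2-1$ from Lemma~\ref{firstlastedgeoftrip} together with an upper bound on $D$. The configuration $(k,\ell)=(2,0)$ cannot be beaten by any such length count at all; it must be excluded combinatorially (one of its trips has both connection vertices on a single internal segment, contradicting Theorem~\ref{long}), which you never do. Even inside $(1,1)$, the attachment $(g_{i_0},f_{i_0})$ with connection vertices $e_{i_0},f_{m-1}$ is not killed by Lemma~\ref{new}: that lemma only gives $E\ge\dist(e_{i_0},f_{m-1})+1$, and for $n=40$, $m=22$ the resulting lower bound is about $0.5$ \emph{below} the length of $T^*$, so it rules nothing out; you need the stronger bound $E\ge\dist(e_{i_0},f'_{m-1})-1$ from Lemma~\ref{firstlast} plus an explicit computation.

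What is missing conceptually is exactly the device on which the paper's proof rests. The paper restricts attention to the \emph{long edges}, i.e.\ edges of length $>\gamma$ incident to one of the three extremal vertices $e_{i_0},f_{i_0},g_{i_0}$; it first proves that every optimum tour contains at least three of them (simplicity from Lemma~\ref{lemma:nocrossing} forces one at each extremal vertex, and the case of exactly two is excluded via Lemma~\ref{key}, Corollary~\ref{lageconn} and a crossing argument), and then exploits strong per-edge bounds: an \emph{internal} long edge has length at least $\dist(g_{i_0},f_{m-1})$ (a skip along one segment such as $(g_{i_0},g_{i_0+2})$ would pass through the vertex $g_{i_0+1}$ and violate simplicity, so the edge must jump to another branch), while externalizing long edges are controlled through Lemmas~\ref{lemma:optstart}, \ref{new}, \ref{firstlastedgeoftrip} and~\ref{new2}. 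With these geometric bounds the finitely many configurations (which correspond to your $(k,\ell)$ cases) all close; with ``$>\gamma$'' they do not. Repairing your proof means importing precisely these bounds, at which point your argument becomes the paper's proof with different bookkeeping.
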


\begin{proof} Let $T$ be an optimum tour. First we prove that there exist at least three long edges in $T$. 
Consider the neighbors of $e_{i_0}$ in $T$. As the tour $T$ is a simple polygon by Lemma~\ref{lemma:nocrossing} 
one of the two edges incident to  $e_{i_0}$ is either an internal or externalizing long edge. 
With the same argument for $f_{i_0}$ and $g_{i_0}$ we get at least two long edges incident to $\{e_{i_0},f_{i_0},g_{i_0}\}$. 
Assume that we get exactly two long edges. 
Then, we have at least one long edge connecting two of $\{e_{i_0},f_{i_0},g_{i_0}\}$, wlog the edge $\{e_{i_0},g_{i_0}\}$. 
In this case $e_{i_0}$ or $g_{i_0}$ cannot be connection vertices, since otherwise we would get three long edges. 
Hence, the trip containing the edge $\{e_{i_0},g_{i_0}\}$ cannot have $b$ as main side, since this would intersect the edge. By symmetry, assume that $(c_i,x)$ and $(y,c_{i+1})$ are the first and last edge of the trip, respectively. By Corollary~\ref{lageconn} $x$ lies on $e$. If the orientation of the edge $\{e_{i_0},g_{i_0}\}$ is $(e_{i_0},g_{i_0})$ then $(c_i,x)$ and $(e_{i_0},g_{i_0})$ contradict Lemma~\ref{key}, since $e_{i_0}$ and $g_{i_0}$ are not connection vertices. 
If the orientation is $(g_{i_0},e_{i_0})$, the part of the tour connecting $c_{i+1}$ and $e_{i_0}$ has to intersect that part connecting $c_{i}$ and $g_{i_0}$. This contradicts the optimality of the tour. 
Hence, we get at least three long edges. 

Given a set of long edges, we can get a lower bound of the tour by bounding the length of the remaining edges by the distance of the vertices 
to the closest neighbors. If two of them are externalizing long edges sharing a trip and the third is internal, then the tour has at least 
length 
$K + \dist(c_{\lfloor \frac{n-1}{2} \rfloor},e_{i_0}) + \dist(f_{i_0},c_{\lfloor \frac{n-1}{2}\rfloor+1}) + \dist(g_{i_0},f_{m-1})$ 
which is the length of $T^*$.
If the third edge is an externalizing edge, the tour is longer than $K - 1 + \dist(c_i,e_{i_0}) + \dist(f_{i_0},c_{i+1})
+ \dist(e_{i_0},f_{m-1}) + 1$
by Lemma~\ref{new}. This is more than the length of $T^*$. It remains the case where no two of these three edges belong to the same trip. 
If there is one externalizing and two internal edges the tour has length at least $K - \gamma + 2\dist(g_{i_0},f_{m-1})+\delta$.
By Lemma~\ref{new2} this is larger than the length of $T^*$. 
If all three long edges are internal edges, we get an even higher lower bound for the length of $T$ of value 
$K - 2\gamma + 3 \dist(g_{i_0},f_{m-1}) + \delta$ since every tour has at least one trip. 
Tours where at least two of the three long edges are externalizing edges have at least length 
$\min(K - 1 - \gamma + \dist(g_{i_0},f_{m-1}) + 2\delta, K - 2 - \gamma + 3\delta)$ which is larger than the 
length of $T^*$ by Lemma~\ref{new} and Lemma~\ref{new2}. 
Together with Lemma~\ref{lemma:optstart} the tour $T^*$ is unique up to symmetry.
\end{proof}

Optimum tours in $T_{n,m}$ are not unique and contain several trips. An example of an optimum tour in $T_{n,m}$ is shown in 
Figure~\ref{fig:opt_tour_in_Tnm}.

\begin{figure}[ht] \center
\includegraphics[width=0.6\textwidth]{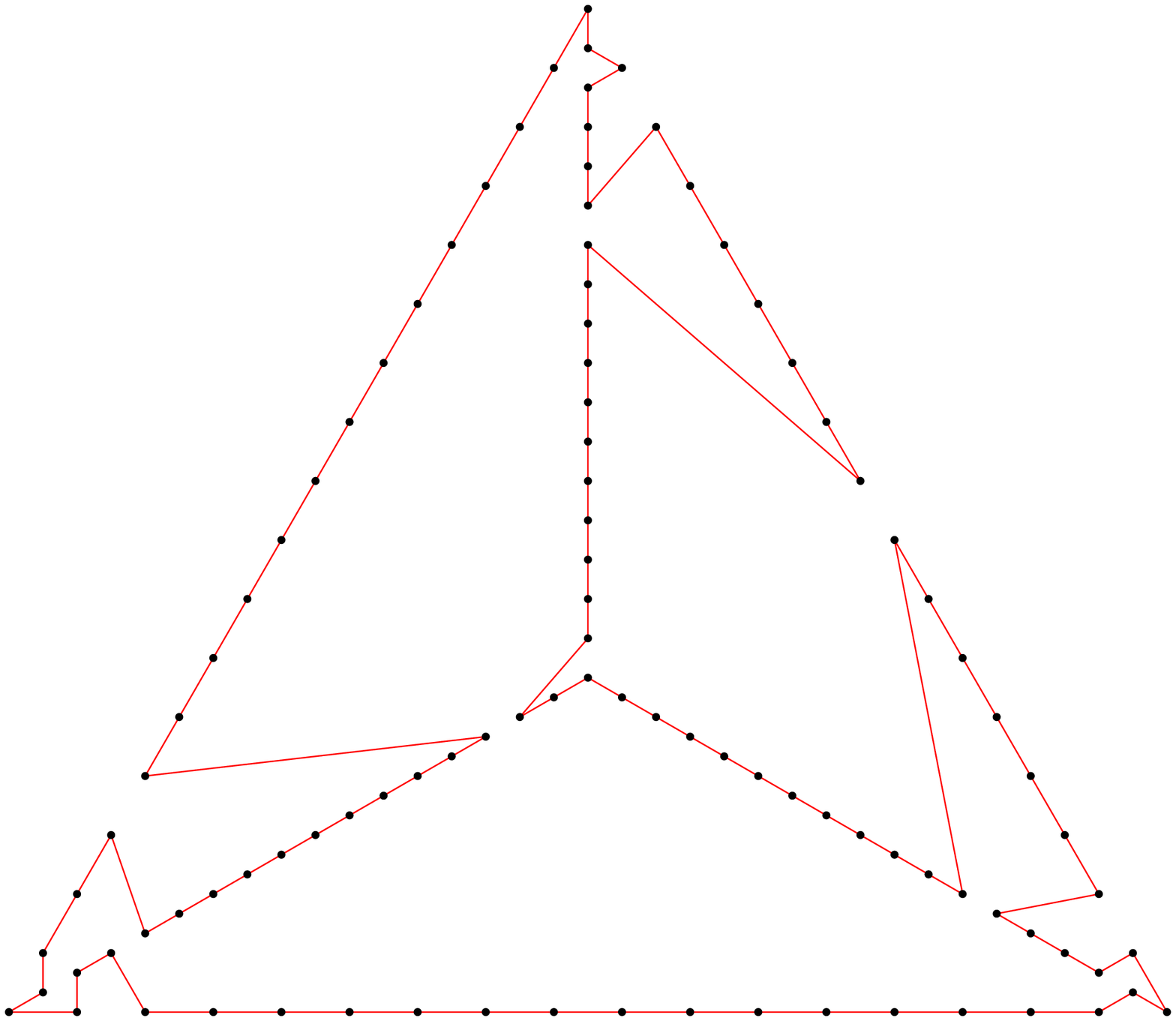}
\caption{\small An optimum solution for the instance $T_{17,17}$.}
\label{fig:opt_tour_in_Tnm}
\end{figure}

\subsection{Asymptotic Value of the Integrality Ratio}

To get a lower bound on the integrality ratio of the instances $T_{n,m}$ we first compute a lower bound on the length
of an optimum TSP tour of the instance $T'_{n,m}$ and an upper bound on an optimum solution to the subtour LP for the
instance $T'_{n,m}$.

\begin{theorem} \label{thm:OptTourBound}
For $n \le 3/2 \cdot m$ an optimum TSP tour of the instance $T'_{n,m}$ has length at least $ 4n + 4 n / \sqrt 3 - 69 $
and at most $ 4n + 4 n / \sqrt 3 - 17$.
\end{theorem}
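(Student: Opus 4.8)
The plan is to invoke Theorem~\ref{thm:uniqueOpt}, which identifies the optimum tour (up to symmetry) as the explicit tour $T^*$ whose length is
\[
L = \dist\!\big(c_{\lfloor \frac{n-1}{2}\rfloor},e_{i_0}\big)+\dist\!\big(f_{i_0},c_{\lfloor \frac{n-1}{2}\rfloor+1}\big)+\dist(g_{i_0},f_{m-1})+K,
\qquad K=3n-1+(3(m-i_0)-1)\gamma,
\]
so the whole statement reduces to estimating these four quantities. First I would exploit the hypothesis $n\le\frac32 m$: it gives $\gamma=\frac{n}{\sqrt3\,m}\le\frac{\sqrt3}{2}<1$, hence $\max\{10,4+4\gamma\}=10$, so $i_0$ is the smallest index with $i_0\gamma\ge 10$ and therefore $d:=\dist(A,e_{i_0})=i_0\gamma\in[10,\,10+\gamma)\subseteq[10,11)$; by symmetry the same holds for $\dist(B,f_{i_0})$ and $\dist(C,g_{i_0})$. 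Using $3m\gamma=\sqrt3\,n$ and $i_0\gamma=d$ I would rewrite $K=3n-1+\sqrt3\,n-3d-\gamma$, isolating its two leading terms $3n$ and $\sqrt3\,n$.

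Next I would bound the two genuinely geometric contributions from the explicit coordinates given in the construction. For the trip edges I would apply Lemma~\ref{firstlast} with $x=e_{i_0}$, $y=f_{i_0}$, giving
\[
\dist\!\big(c_{\lfloor\frac{n-1}{2}\rfloor},e_{i_0}\big)+\dist\!\big(f_{i_0},c_{\lfloor\frac{n-1}{2}\rfloor+1}\big)\in\big(\dist(e_{i_0},f_{i_0}')-1,\ \dist(e_{i_0},f_{i_0}')+1\big],
\]
and then compute $\dist(e_{i_0},f_{i_0}')$: placing $e_{i_0}=(\tfrac{\sqrt3}{2}d,\tfrac12 d)$ and $f_{i_0}'=(n-\tfrac{\sqrt3}{2}d,-\tfrac12 d)$ yields $\dist(e_{i_0},f_{i_0}')=\sqrt{(n-\sqrt3\,d)^2+d^2}$, which via $a\le\sqrt{a^2+b^2}\le a+\frac{b^2}{2a}$ is bracketed between $n-\sqrt3\,d$ and $n-\sqrt3\,d+\frac{d^2}{2(n-\sqrt3\,d)}$; with $n\ge 40$ and $d<11$ the correction is below $3$. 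Similarly I would read off $g_{i_0}=(\tfrac{n}{2},\tfrac{\sqrt3}{2}n-d)$ and $f_{m-1}=(\tfrac{n}{2}+\tfrac{\sqrt3}{2}\gamma,\tfrac{n}{2\sqrt3}-\tfrac{\gamma}{2})$ and obtain
\[
\dist(g_{i_0},f_{m-1})=\sqrt{\tfrac34\gamma^2+\big(\tfrac{n}{\sqrt3}-d+\tfrac{\gamma}{2}\big)^2},
\]
which lies between $V:=\tfrac{n}{\sqrt3}-d+\tfrac{\gamma}{2}$ and $V+\tfrac{3\gamma^2}{8V}$, the correction being negligible since $V>12$.

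Finally I would assemble the four estimates. The leading terms combine as $3n$ (from $K$) $+\,n$ (from the trip edges) $=4n$, and $\sqrt3\,n$ (from $K$) $+\,\tfrac{n}{\sqrt3}$ (from $\dist(g_{i_0},f_{m-1})$) $=\tfrac{4n}{\sqrt3}$, producing exactly the announced main term $4n+\tfrac{4n}{\sqrt3}$. The remaining constant contribution is
\[
-1-3d-\gamma\ -\ \sqrt3\,d\ -\ d\ +\ \tfrac{\gamma}{2}\ =\ -1-(4+\sqrt3)d-\tfrac{\gamma}{2},
\]
perturbed by the bounded slacks collected above (the $\pm1$ from Lemma~\ref{firstlast} together with the two square-root corrections, jointly well under $4$ in absolute value). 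Substituting $d\in[10,11)$ and $\gamma\in(0,\tfrac{\sqrt3}{2}]$ makes $-(4+\sqrt3)d$ range over $\big(-11(4+\sqrt3),-10(4+\sqrt3)\big]\approx(-63.1,-57.3]$; adding the $-1$, the $-\tfrac{\gamma}{2}$, and the slacks keeps the constant safely inside $[-69,-17]$, yielding both claimed inequalities.

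The only mildly delicate point will be controlling the sublinear error terms — the $\pm1$ from Lemma~\ref{firstlast} and the corrections $\tfrac{d^2}{2(n-\sqrt3 d)}$ and $\tfrac{3\gamma^2}{8V}$ — uniformly in $n$ and $m$; here the hypotheses $n\ge 40$ and $n\le\tfrac32 m$ (which also forces $m\ge 27$) keep each of them below a small absolute constant. Since the target window $[-69,-17]$ has width $52$, there is no sharpness to fight for: once $d\in[10,11)$ is pinned down, every estimate has ample room, so the genuine content is simply the bookkeeping that turns the coordinate distances into the clean main term $4n+\tfrac{4n}{\sqrt3}$.
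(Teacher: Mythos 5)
Your proposal is correct, and its skeleton is the same as the paper's proof: both invoke Theorem~\ref{thm:uniqueOpt} to reduce the claim to estimating the length of $T^*$, written as $K$ plus the three distances $\dist(c_{\lfloor (n-1)/2\rfloor},e_{i_0})$, $\dist(f_{i_0},c_{\lfloor (n-1)/2\rfloor+1})$, $\dist(g_{i_0},f_{m-1})$, and both use the hypothesis $n\le\tfrac32 m$ only to force $\gamma\le\tfrac{\sqrt3}{2}<1$, hence $\max\{10,4+4\gamma\}=10$ and $d:=\dist(A,e_{i_0})\in[10,10+\gamma)$. Where you genuinely differ is in how the three distances are bounded. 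The paper stays coordinate-free: it brackets $\dist(M,e_{i_0})$ and $\dist(g_{i_0},f_{m-1})$ between $n/\sqrt3-10-\gamma$ and $n/\sqrt3-10+\gamma$, and bounds the two trip edges between $n-21-2\gamma$ and $n+19+2\gamma$ purely by triangle inequalities anchored at $A$ and $B$. You instead route the trip edges through Lemma~\ref{firstlast} (legitimate, since $T^*$ is an optimum tour by Theorem~\ref{thm:uniqueOpt}, so the lemma applies to its unique trip) and then evaluate $\dist(e_{i_0},f_{i_0}')$ and $\dist(g_{i_0},f_{m-1})$ from the explicit coordinates using $a\le\sqrt{a^2+b^2}\le a+\tfrac{b^2}{2a}$; your algebra and the resulting error budget (the $\pm1$ from the lemma, the correction $<3$ for $\dist(e_{i_0},f_{i_0}')$, and the negligible correction for $\dist(g_{i_0},f_{m-1})$) all check out, and your constant term lands in roughly $(-66,-54)$, comfortably inside the target $[-69,-17]$. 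The trade-off: your coordinate computation gives slightly sharper constants and makes the main term $4n+4n/\sqrt3$ emerge very transparently, at the cost of more bookkeeping and a detour through Lemma~\ref{firstlast}; the paper's triangle-inequality version is shorter and never needs coordinates. The difference is one of technique, not of substance.
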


\begin{proof}
From the proof of Theorem~\ref{thm:uniqueOpt} it follows that an optimum TSP tour for $T'_{n,m}$ has length
$$3n - 1 + 3\cdot \dist(M, e_{i_0}) - \gamma  + \dist(g_{i_0}, f_{m-1}) + \dist(c_{\lfloor \frac{n-1}{2}\rfloor}, e_{i_0}) + \dist(c_{\lfloor \frac{n-1}{2}\rfloor + 1}, f_{i_0}), $$
where $i_0$ is the smallest index such that $e_{i_0}$ is a vertex of $T'_{n,m}$. 
For $n \le 3/2 \cdot m$ we have $ \gamma = \frac{n}{\sqrt 3\cdot m} \le \frac{\sqrt 3}{2}$ and therefore $\max\{10, 4 + 4 \gamma\} = 10$.
This implies  $10 \le \dist(A,e_{i_0}) < 10 + \gamma$
and using the triangle inequality we get
$$n/\sqrt 3 - 10 + \gamma > \dist(g_{i_0}, f_{m-1}) > \dist(M, e_{i_0}) > n/\sqrt 3 - 10 - \gamma .$$
Applying the triangle inequality once more we have:
$$ \dist(A,c_{\lfloor \frac{n-1}{2}\rfloor}) + (10 + \gamma)  > \dist(A,c_{\lfloor \frac{n-1}{2}\rfloor}) + \dist(A,e_{i_0}) > 
\dist(c_{\lfloor \frac{n-1}{2}\rfloor}, e_{i_0})  
$$ and
$$ 
\dist(c_{\lfloor \frac{n-1}{2}\rfloor}, e_{i_0})  
 \ge \dist(A,c_{\lfloor \frac{n-1}{2}\rfloor}) - \dist(A,e_{i_0}) 
 \ge \dist(A,c_{\lfloor \frac{n-1}{2}\rfloor}) - (10 + \gamma). 
$$
Combining these inequalities with similar inequalities for $\dist(c_{\lfloor \frac{n-1}{2}\rfloor + 1}, f_{i_0})$ we get:
\begin{eqnarray*}
\dist(c_{\lfloor \frac{n-1}{2}\rfloor}, e_{i_0}) + \dist(c_{\lfloor \frac{n-1}{2}\rfloor + 1}, f_{i_0}) & \ge & 
\dist(A,c_{\lfloor \frac{n-1}{2}\rfloor}) + \dist(c_{\lfloor \frac{n-1}{2}\rfloor + 1}, B)  - 2 (10 + \gamma) 
\\ & = & \dist(A,B) - 21 - 2 \gamma
\\ & = & n - 21 - 2 \gamma
\end{eqnarray*}  
and
\begin{eqnarray*}
\dist(c_{\lfloor \frac{n-1}{2}\rfloor}, e_{i_0}) + \dist(c_{\lfloor \frac{n-1}{2}\rfloor + 1}, f_{i_0}) & \le & 
\dist(A,c_{\lfloor \frac{n-1}{2}\rfloor}) + \dist(c_{\lfloor \frac{n-1}{2}\rfloor + 1}, B)  + 2 (10 + \gamma) 
\\ & = & \dist(A,B) + 19 + 2\gamma
\\ & = & n + 19 + 2 \gamma .
\end{eqnarray*}  

Combining all these inequalities and using $\gamma < 1$ for  $n \le 3/2 \cdot m$ we get that 
$$ 4n + 4 n / \sqrt 3 - 69 $$
is a lower bound and
$$ 4n + 4 n / \sqrt 3 - 17 $$
is an upper bound on the length of an optimum TSP tour in $T'_{n,m}$.
\end{proof}

\begin{theorem}\label{thm:SubTourBound}
For $n \le 3/2 \cdot m$ an optimum solution to the subtour LP of the instance $T'_{n,m}$ has length at most $ 3n + 3 n / \sqrt 3$
and at least $3n + 3 n/ \sqrt 3 - 33$. 
\end{theorem}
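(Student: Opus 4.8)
The plan is to prove the two bounds by different means: the lower bound by a ``two nearest neighbours'' estimate that is valid for \emph{every} feasible LP solution, and the upper bound by exhibiting one explicit feasible solution whose cost lies below $3n+3n/\sqrt3$. Throughout I use that for $n\le\frac32 m$ we have $\gamma\le\sqrt3/2<1$, hence $\max\{10,4+4\gamma\}=10$ and $10\le\dist(A,e_{i_0})<10+\gamma$, together with the fact from~(\ref{mindistproperty2}) that base and internal vertices are at distance at least~$5$.

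For the lower bound I would use that every feasible solution satisfies $\sum_{e\in\delta(v)}x_e=2$ and $0\le x_e\le1$, so for each vertex $v$ the quantity $\sum_{e\in\delta(v)}c_ex_e$ is minimised by putting weight~$1$ on the two shortest edges at $v$. Writing $d_1(v),d_2(v)$ for the two smallest distances from $v$ and halving the double count gives $\sum_e c_ex_e\ge\frac12\sum_v(d_1(v)+d_2(v))$. The two nearest neighbours of any base vertex are its two perimeter neighbours at distance~$1$, so the $3n$ base vertices contribute $6n$. On each median every pair of consecutive internal vertices is at distance $\gamma$, so each internal vertex contributes $2\gamma$, except the three segment ends $e_{i_0},f_{i_0},g_{i_0}$ whose second nearest neighbour is at distance $2\gamma$ and which therefore contribute $3\gamma$; a short count yields $6\gamma(m-i_0)+5\gamma$ from the internal vertices. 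Thus the optimum is at least $3n+3\gamma(m-i_0)+\frac52\gamma$. Since $\gamma(m-i_0)=\dist(e_{i_0},M)=n/\sqrt3-\dist(A,e_{i_0})$ and $\dist(A,e_{i_0})<10+\gamma$, this is at least $3n+3n/\sqrt3-30-\frac12\gamma>3n+3n/\sqrt3-33$, giving the claim with room to spare.

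For the upper bound I would build a feasible $x$ from three parts. On the triangle boundary I put weight~$1$ on every perimeter edge, except that near each corner one perimeter edge is reduced to weight $\frac12$. On the medians I put weight~$1$ on every edge joining consecutive internal vertices up to $e_{m-1},f_{m-1},g_{m-1}$; to bring $M$ to degree~$2$ I put weight $\frac23$ on each of $e_{m-1}M$, $f_{m-1}M$, $g_{m-1}M$ and weight $\frac16$ on each edge of the small triangle $e_{m-1}f_{m-1}g_{m-1}$. Finally I attach each of $e_{i_0},f_{i_0},g_{i_0}$ with weight $\frac12$ to each of the two consecutive base vertices nearest to it, taking these to be precisely the endpoints of the reduced perimeter edge at that corner. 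Then every vertex has degree exactly~$2$: $M$ gets $3\cdot\frac23$, each $e_{m-1}$ gets $1+\frac23+2\cdot\frac16$, each segment end gets $1+\frac12+\frac12$, and each base vertex incident to a $\frac12$-edge stays at degree~$2$ by the local balance of the gadget. For the cost, the boundary contributes $3n-\frac32$, the median part contributes $3(m-i_0)\gamma-\gamma+\frac{\sqrt3}{2}\gamma\le 3n/\sqrt3-30$ (using $\dist(A,e_{i_0})\ge10$), and each of the six attachment edges has length below $5.5$, so their total weighted cost is below $17$; the sum stays comfortably below $3n+3n/\sqrt3$.

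It remains to verify the exponentially many subtour constraints, which I would do in the equivalent cut form $x(\delta(S))\ge2$ for all $\emptyset\ne S\subsetneq V$. Since every vertex has degree~$2$, single vertex cuts are tight, and it suffices to examine the cuts crossing the low weight edges: separating all internal vertices from the base gives the six $\frac12$-weight attachment edges, value~$3$; isolating one median leg gives its two attachment edges together with the $\frac23$ edge to $M$ and the two $\frac16$ triangle edges, value exactly~$2$; and a boundary arc ending at a reduced perimeter edge always has the missing $\frac12$ compensated by an attachment edge leaving the same endpoint. I expect this cut analysis, together with pinning the additive constants in the cost so that the total provably stays at or below $3n+3n/\sqrt3$, to be the main technical work; the leading terms are in any case forced, the perimeter giving $3n$ and the three full medians giving $3n/\sqrt3$.
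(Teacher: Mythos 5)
Your overall strategy is the same as the paper's: prove the lower bound by a per-vertex nearest-neighbour estimate that only uses the degree constraints, and prove the upper bound by exhibiting an explicit fractional solution. Your construction is essentially the paper's (half-weight triangles at the three corners, full weight on the perimeter and median edges); only your centre gadget (weight $\frac23$ on the three edges to $M$, weight $\frac16$ on the triangle $e_{m-1}f_{m-1}g_{m-1}$) differs from the paper's asymmetric one, and both have identical cost $2\gamma+\frac{\sqrt3}{2}\gamma$. Your lower bound is correct and in fact slightly sharper than the paper's, which only charges each vertex its single nearest-neighbour distance.

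However, your cost verification of the upper bound contains a genuine error. The six attachment edges do \emph{not} have length below $5.5$: by~(\ref{mindistproperty}) we have $\dist(A,e_{i_0})\ge 10$, so each attachment edge has length at least roughly $10$ (you appear to have turned the lower bound of $5$ from~(\ref{mindistproperty2}) into an upper bound). Their total weighted cost is at least $\frac12\cdot 6\cdot 10=30$, not below $17$. This breaks your accounting, because you decoupled the two parts: once the median part is bounded by $3n/\sqrt3-30$ (which uses $\dist(A,e_{i_0})\ge 10$) and the attachments by the correct $\frac32\bigl(\dist(A,e_{i_0})+\dist(c_1,e_{i_0})\bigr)<3\dist(A,e_{i_0})+\frac32<31.5+3\gamma$, the total becomes
\begin{align*}
\Bigl(3n-\tfrac32\Bigr)+\Bigl(\tfrac{3n}{\sqrt3}-30\Bigr)+31.5+3\gamma
~=~3n+\tfrac{3n}{\sqrt3}+3\gamma,
\end{align*}
which overshoots the target by up to $3\gamma$. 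The construction itself is fine; the repair is to keep $d:=\dist(A,e_{i_0})$ symbolic and use the exact cancellation: the median part equals $3n/\sqrt3-3d+\bigl(\tfrac{\sqrt3}{2}-1\bigr)\gamma$, the attachments cost less than $3d+\frac32$, and the $-\frac32$ from the three half-weight perimeter edges absorbs the $+\frac32$, giving a total below $3n+3n/\sqrt3+\bigl(\tfrac{\sqrt3}{2}-1\bigr)\gamma<3n+3n/\sqrt3$. Finally, your cut-form feasibility check is only sketched, but since the paper itself asserts feasibility of its pictured solution without proof, that part is not a gap relative to the paper.
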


\begin{proof}
A feasible solution to the subtour LP is shown in Figure~\ref{fig:OptSubtour}. Its total length is at most $ 3n + 3 n / \sqrt 3$.
This proves the upper bound. 

\begin{figure}[ht]
\centering

\begin{tikzpicture}[scale=0.3]
\def\n{24}
\def\m{18}
\def\sqrt3{1.73205}
\def\mycircle{circle(1mm)}

\def\a#1{(\n     - #1 / 2, #1 * \sqrt3 / 2)}
\def\b#1{(\n / 2 - #1 / 2, \n * \sqrt3 / 2 - #1 * \sqrt3 / 2)}
\def\c#1{(#1, 0)}
\def\e#1{(#1 * \n / 2 / \m, #1 * \n / 2 / \sqrt3 / \m)}
\def\f#1{(\n - #1 * \n / 2 / \m, #1 * \n / 2 / \sqrt3 / \m)}
\def\g#1{(\n / 2, \n * \sqrt3 / 2 - #1 * \n / \sqrt3 / \m)}

\draw[blue] \c1 -- \c{\n};
\draw[blue] \a1 -- \a{\n};
\draw[blue] \b1 -- \b{\n};
\draw[blue] \e{9} -- \e{17};
\draw[blue] \f{9} -- \f{17};
\draw[blue] \g{9} -- \g{18};
\draw[red]  \c{0} -- \e{9} -- \c{1} -- \c{0};
\draw[red]  \a{0} -- \f{9} -- \a{1} -- \a{0};
\draw[red]  \b{0} -- \g{9} -- \b{1} -- \b{0};
\draw[red]  \e{17} -- \g{18} -- \f{17} -- \e{17};

\pgfmathparse{(\n - 1)}
\foreach \i in {1,...,\pgfmathresult}
{
   \fill (\n     - \i / 2, \i * \sqrt3 / 2) \mycircle ;
   \pgfmathparse{(\n - \i) * \sqrt3 / 2}
   \fill (\n / 2 - \i / 2,  \pgfmathresult) \mycircle;
   \fill (             \i,               0) \mycircle;
}

\pgfmathparse{\m - 1}
\foreach \j in {9,...,\pgfmathresult}
{
   \fill(     \j * \n / 2 / \m, \j * \n / 2 / \sqrt3 / \m) \mycircle ;
   \fill(\n - \j * \n / 2 / \m, \j * \n / 2 / \sqrt3 / \m) \mycircle ;
   \fill(               \n / 2, \n * \sqrt3 / 2 - \j * \n / \sqrt3 / \m) \mycircle ;
}

\fill(\n / 2 ,  \n / 2 / \sqrt3) \mycircle ;%node[anchor = south west] {$M$};
\fill(\n, 0) \mycircle ;%node[anchor = west] {$B$};
\fill(\n / 2,  \n * \sqrt3 / 2) \mycircle ;%node[anchor = south] {$C$};
\fill(0, 0) \mycircle ;%node[anchor = east] {$A$};

\end{tikzpicture}

\caption{A feasible solution for the subtour LP of the instance  $T'_{24,18}$. The red lines correspond to variables with value~$1/2$ while the blue lines 
correspond to variables with value~1.}
\label{fig:OptSubtour}
\end{figure}
For the lower bound we observe that the nearest neighbor of a base vertex has distance~1 while a nearest neighbor of an internal vertex has distance
$\gamma$. Therefore, the total length of a feasible solution to the subtour LP must be at least 
$$ 3n + 3(n/ \sqrt 3 - 10 - \gamma) > 3n + 3 n/ \sqrt 3 - 33 .$$
\end{proof}

\begin{theorem} \label{thm:convergencerate}
For $n \le 3/2 \cdot m$ the integrality ratio of the instances $T'_{n,m}$ converges to $4/3$ for $n\to\infty$.
\end{theorem}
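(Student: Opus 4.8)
The plan is to combine the two preceding theorems directly; no new geometric work is needed. By definition the integrality ratio of $T'_{n,m}$ is the quotient of the length of an optimum TSP tour and the length of an optimum solution to the subtour LP. Under the standing hypothesis $n \le 3/2 \cdot m$, Theorem~\ref{thm:OptTourBound} sandwiches the numerator between $4n + 4n/\sqrt 3 - 69$ and $4n + 4n/\sqrt 3 - 17$, while Theorem~\ref{thm:SubTourBound} sandwiches the denominator between $3n + 3n/\sqrt 3 - 33$ and $3n + 3n/\sqrt 3$.

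First I would bound the integrality ratio from above and below by pairing the extreme values of the two intervals. Dividing the largest possible tour length by the smallest possible LP value gives the upper bound
$$\frac{4n + 4n/\sqrt 3 - 17}{3n + 3n/\sqrt 3 - 33},$$
and dividing the smallest possible tour length by the largest possible LP value gives the lower bound
$$\frac{4n + 4n/\sqrt 3 - 69}{3n + 3n/\sqrt 3}.$$
Both expressions have the shape $\bigl(4(n + n/\sqrt 3) + O(1)\bigr)\big/\bigl(3(n + n/\sqrt 3) + O(1)\bigr)$, where the $O(1)$ terms are fixed constants independent of $n$. Dividing numerator and denominator by $n\,(1 + 1/\sqrt 3)$ therefore sends every lower-order term to zero as $n\to\infty$, so each bound converges to $4/3$.

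Applying the squeeze theorem then yields that the integrality ratio of $T'_{n,m}$ converges to $4/3$. I expect no real obstacle in the estimate itself, since it is a routine limit of a ratio of two affine functions of $n$; the only point worth stating explicitly is that the family remains nonempty for arbitrarily large $n$, which holds for example by choosing $m = n$, since this satisfies both the constraint $n \le 3/2 \cdot m$ and the earlier requirement~(\ref{eq:nm}) once $n$ is large enough, while letting $n$ grow without bound.
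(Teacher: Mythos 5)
Your proposal is correct and matches the paper's proof essentially verbatim: the paper likewise divides the upper tour bound of Theorem~\ref{thm:OptTourBound} by the lower LP bound of Theorem~\ref{thm:SubTourBound} (and vice versa) to obtain exactly the two ratios $\frac{4n + 4n/\sqrt 3 - 17}{3n + 3n/\sqrt 3 - 33}$ and $\frac{4n + 4n/\sqrt 3 - 69}{3n + 3n/\sqrt 3}$, both of which converge to $4/3$. Your additional remark that the family is nonempty for arbitrarily large $n$ (e.g.\ via $m=n$) is a small point the paper leaves implicit, but it changes nothing in the argument.
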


\begin{proof}
Using Theorem~\ref{thm:OptTourBound} and Theorem~\ref{thm:SubTourBound} we conclude that the integrality ratio of $T'_{n,m}$
is at most
$ \frac{4n + 4 n / \sqrt 3 - 17}{3n + 3 n/ \sqrt 3 - 33}$ and at least 
$ \frac{ 4n + 4 n / \sqrt 3 - 69}{ 3n + 3 n / \sqrt 3}$. Both these values converge to $4/3$ for $n\to\infty$.
\end{proof}

The length of an optimum tour in $T_{n,m}$ is clearly at least as long as an optimum tour in $T'_{n,m}$ and at most by some constant value
larger if  $n \le 3/2 \cdot m$. The bounds for a feasible solution to the subtour LP of $T'_{n,m}$ carry over to $T_{n,m}$. Therefore we get:

\begin{corollary}
For $n \le 3/2 \cdot m$ the integrality ratio of the instances $T_{n,m}$ converges to $4/3$ for $n\to\infty$.
\end{corollary}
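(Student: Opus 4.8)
The plan is to transfer the two-sided bounds already proved for $T'_{n,m}$ in Theorem~\ref{thm:OptTourBound} and Theorem~\ref{thm:SubTourBound} to $T_{n,m}$ and then sandwich the integrality ratio of $T_{n,m}$ between two quotients that both tend to $4/3$. The structural point that makes this work is that for $n\le 3/2\cdot m$ we have $\gamma = n/(\sqrt 3\cdot m)\le \sqrt 3/2 < 1$, so $\max\{10,4+4\gamma\}=10$ and the vertices removed in passing from $T_{n,m}$ to $T'_{n,m}$ are exactly the internal vertices within distance $10$ of a corner. Since the whole segments $f$ and $g$ stay at distance more than $20$ from $A$ (and symmetrically), these deleted vertices form three short collinear chains $e_1,\dots,e_{i_0-1}$, $f_1,\dots,f_{i_0-1}$, $g_1,\dots,g_{i_0-1}$ lying on the internal segments, a fact I will exploit for both bounds.

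\textbf{Bounding the optimum tour.} Because $T'_{n,m}$ is obtained from $T_{n,m}$ by deleting vertices, shortcutting an optimum tour of $T_{n,m}$ yields a tour of $T'_{n,m}$ of no greater length, so the lower bound $4n+4n/\sqrt 3 - 69$ of Theorem~\ref{thm:OptTourBound} carries over unchanged. For the upper bound I would start from the optimum tour $T^*$ of $T'_{n,m}$ and reinstate each chain in a single monotone walk along its segment: replacing the edge $(c_{\lfloor (n-1)/2\rfloor},e_{i_0})$ by the path $c_{\lfloor (n-1)/2\rfloor}\to e_1\to\cdots\to e_{i_0}$ increases the length by at most $2(i_0-1)\gamma$ by the triangle inequality along the collinear chain, and since $(i_0-1)\gamma < \dist(A,e_{i_0}) < 10+\gamma$ this is less than $22$. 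Doing the same on all three segments adds only an absolute constant $c$, so an optimum tour of $T_{n,m}$ has length at most $4n+4n/\sqrt 3 - 17 + c$.

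\textbf{Bounding the subtour LP.} For the lower bound I would reuse the nearest-neighbor estimate of Theorem~\ref{thm:SubTourBound}: doubling the objective and summing the degree constraints $\sum_{e\in\delta(v)}x_e=2$ weighted by the nearest-neighbor distances shows that any feasible solution has length at least $\sum_v \mathrm{nn}(v)$. As $T_{n,m}$ has $3n$ base vertices at distance $1$ and $3m-2$ internal vertices at distance $\gamma$, this gives at least $3n+(3m-2)\gamma = 3n+3n/\sqrt 3 - 2\gamma \ge 3n+3n/\sqrt 3 - 2$. For the upper bound I would extend the explicit half-integral solution of Figure~\ref{fig:OptSubtour}, prolonging the three value-$1$ paths so that they run along the full internal segments and readjusting the corner gadgets; the three base sides contribute $3n$, the three complete internal segments contribute $3n/\sqrt 3$, and the corner gadgets contribute a bounded amount, so the optimum LP value is at most $3n+3n/\sqrt 3 + c'$.

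\textbf{Conclusion and main obstacle.} Combining these four estimates, the integrality ratio of $T_{n,m}$ lies between $\tfrac{4n+4n/\sqrt 3 - 69}{3n+3n/\sqrt 3 + c'}$ and $\tfrac{4n+4n/\sqrt 3 - 17 + c}{3n+3n/\sqrt 3 - 2}$, and both quotients converge to $4/3$ as $n\to\infty$. The only delicate step is the tour upper bound: a naive vertex-by-vertex reinsertion would pay roughly $10/\gamma$ separate detours per corner and could blow up as $\gamma\to 0$, so it is essential to reinsert each deleted chain as one monotone walk, where the cost telescopes to $(i_0-1)\gamma\approx\dist(A,e_{i_0})\le 10$ and stays bounded independently of $n$ and $m$.
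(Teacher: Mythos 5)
Your proof is correct and follows essentially the same route as the paper, whose entire argument for this corollary is that an optimum tour of $T_{n,m}$ is at least as long as one of $T'_{n,m}$ and at most a constant longer, and that the subtour-LP bounds carry over; your proposal just fills in the details the paper leaves implicit. Two harmless imprecisions: in $T^*$ no tour edge joins a base vertex to $g_{i_0}$, so the chain $g_1,\dots,g_{i_0-1}$ must be spliced into the long edge $(g_{i_0},f_{m-1})$ or into a base edge at $C$ (with the same telescoping cost bound), and for $\gamma>1/\sqrt{3}$ a few vertices of $T_{n,m}$ near the corners have nearest neighbors closer than the values $1$ and $\gamma$ you claim, which only worsens your additive constant $-2$ by a further $O(1)$ and does not affect the limit.
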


By comparing the rate of convergence obtained in the proof of Theorem~\ref{thm:convergencerate} with Theorem~13 in~\cite{Hou2014}
one easily sees that for a given number of vertices the integrality ratio of the instances $T_{n,m}'$ converges much faster to $4/3$
than the instances constructed in~\cite{Hou2014}. Theorem~\ref{thm:convergencerate} yields the integrality ratio $\frac43 -\Theta(N^{-1})$
for an $N$-vertex instance
while the instances constructed in~\cite{Hou2014} have integrality ratio $\frac43 -\Theta(N^{-2/3})$.

\section{Experimental Results}

\label{sec:Experiments}

All runtime experiments described in this section were performed using \concorde\ version 03.12.19~\cite{ABCC2003b}.
This is the fastest known code to solve large TSP instances exactly. 
The source code of \concorde\ can be downloaded at~\cite{ABCC2003b}.
We used \texttt{gcc~4.8.5} to compile \concorde\ and used \texttt{CPLEX~12.6} as the LP solver.
All experiments were performed on a 2.20GHz Intel Xeon~E5-2699~v4 using a single core for each job. 
Up to 16 jobs were run in parallel on the machine. We have switched off the turbo boost mode of the machine
to reduce the variation in runtimes.

Repeated runs of \concorde\ on the same instance may vary a lot in their runtime if different random seeds are used for \concorde. 
On some instances we observed
a runtime difference of more than a factor of~10 between runs with different random seeds. 
In our experiments we therefore took the average runtime of $k$ runs
where $k$ was chosen to be $10$ or $100$. We used the \concorde\ option \texttt{-s \#}
to set the random seed to $i$ in the $i$-th of the $k$ runs.

\subsection{TSPLIB results}
The TSPLIB is a very well known collection of TSP instances and can be downloaded at~\cite{TSPLIBRice}.
It contains 111 instances with sizes between 14 and 85900. For all instances optimum solutions are known~\cite{ABCC2006}.
For our experiments we used all TSPLIB instances with at most 2000 vertices except the 
instance \texttt{linhp318.tsp} which contains a fixed edge.
Thus, our TSPLIB testbed contains 93~instances with sizes between 16 and 1889.
In Figure~\ref{fig:TSPLIBruntimes} we show for each of the 93 instances the minimum, average and maximum runtime
taken over 100~runs of \concorde.

It can be seen that \concorde\ solves each TSPLIB instance with less than 1000 vertices within a minute.
The slowest run took 56\,834 seconds and was on the instance \texttt{u1817}.

\begin{figure}[ht]
\center
\includegraphics[width=0.9\textwidth]{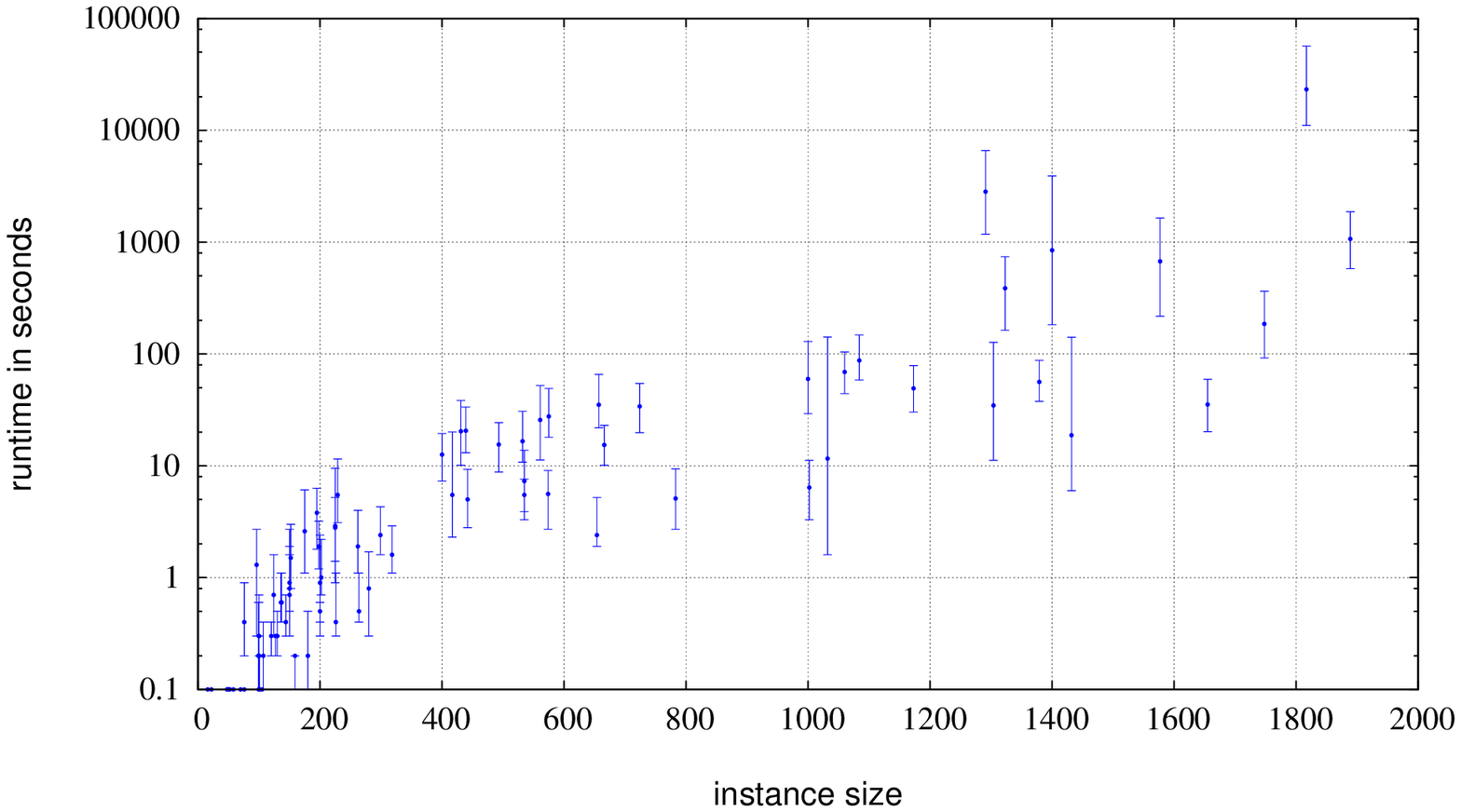}
\caption{\small Runtimes of \concorde\ on 93 TSPLIB instances with sizes between 16 and 1889.
The x-axis shows the instance size while the y-axis gives the log scaled runtime. 
For each instance a vertical line is drawn indicating the minimal and maximal runtime seen over 100
runs using different random seeds. A dot on this line marks the average runtime of these 100 runs.}
\label{fig:TSPLIBruntimes}
\end{figure}

\subsection{Results for points on three parallel lines}

In~\cite{Hou2014} a family of euclidean TSP instances is proposed that 
arises from three sets of $n$ equidistant points placed on three parallel lines.
It was shown that this family of instances has an integrality ratio converging to $4/3$
if the distances of the three lines are chosen appropriately. 
We denote by $P_{n,d}$ the instance whith $n$ points at distance~$1$ on each of the three parallel lines
which have distance $d$. We have generated these instances for all $n$ with $34 \le n \le 64$ and the   
distances $d$ ranging between $0.1$ and $10.0$ in increments of $0.1$. All point coordinates were scaled by $10000$.
Thus in total we generated 3300 instances. For each instance we measured the average runtime of 100 runs of \concorde.
From these results we chose for each $n$ the instance having the largest average runtime. 

Figure~\ref{Fig:P_runtimes} shows the minimum, average and maximum runtime for all these chosen instances. 
The red line shown in this figure is a 
least-squares fit of the average runtimes. 
 
\begin{figure}[ht]
\center
\includegraphics[width=0.9\textwidth]{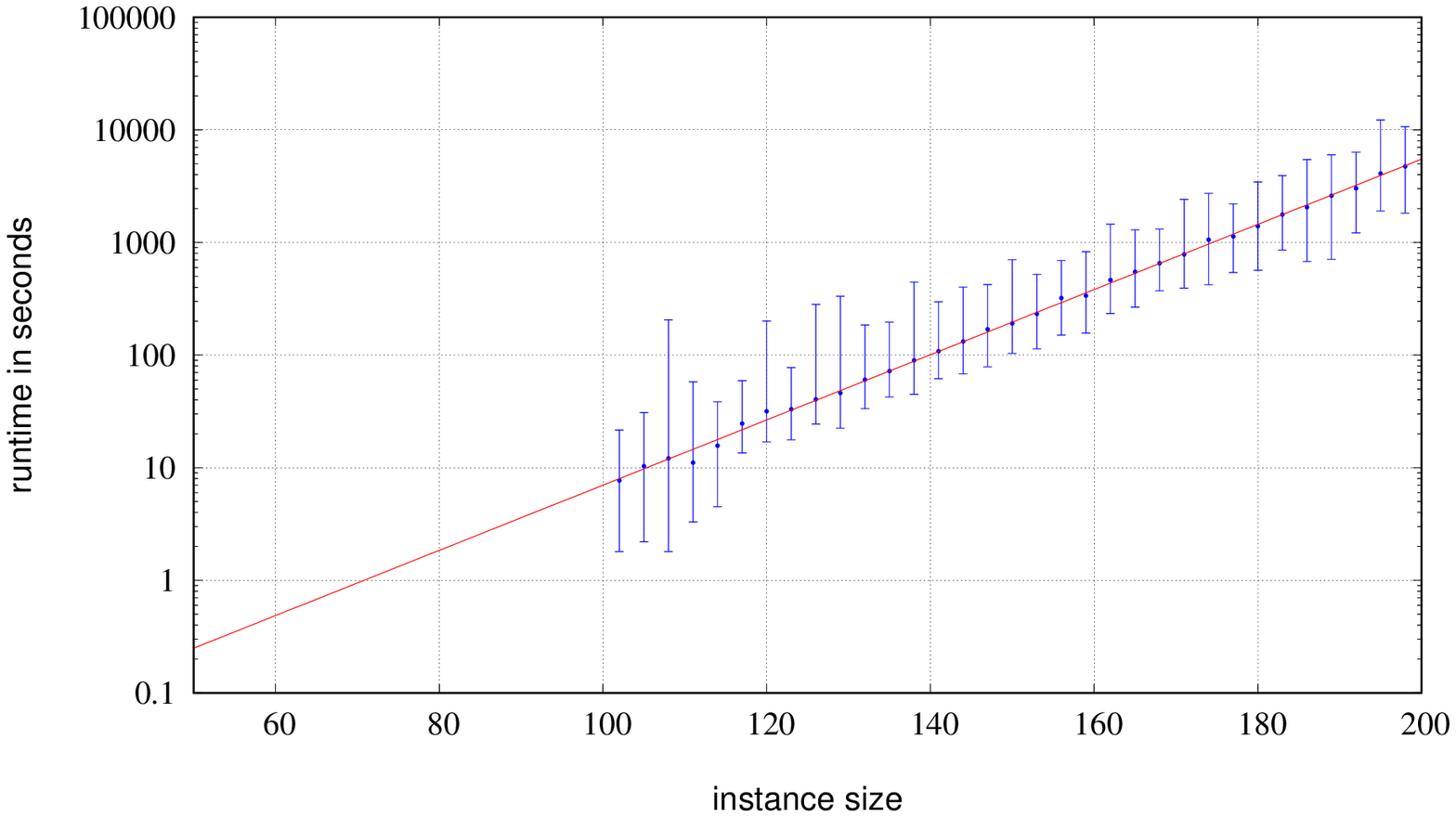}
\caption{\small Runtimes of \concorde\ on the $P_{n,d}$ instances with at least 100 and at most 200 vertices.
The distance $d$ was chosen to maximize the largest average runtime. 
The x-axis shows the instance size while the y-axis gives the log scaled runtime. 
For each instance a vertical line is drawn indicating the minimal and maximal runtime seen over 100
runs using different random seeds. A dot on this line marks the average runtime of these 100 runs. The red line is a least-squares fit 
of the average runtimes. }
\label{Fig:P_runtimes}
\end{figure}

\subsection{Results for the instances \boldmath $T_{n,m}$}

The instances $T_{n,m}$ have $N := 3(n+m) - 2$ vertices. 
We created these instances in TSPLIB format by scaling the point coordinates by 10000 and rounding to the nearest integer.
The distance between two points is defined as the rounded Euclidean distance called EUC\_2D in the TSPLIB format.
All these instances are available for download at~\cite{Tnmurl}.
We measured the runtime of \concorde\ on the instances $T_{n,m}$
for all $5 \le n \le 33$ and $5 \le m \le 33$ to get some idea for which choices of $n$ and $m$ the largest runtimes appear
if the number of vertices of the instance is fixed. Our conclusion from that experiment is that for a given number $N$ of vertices 
with $N\equiv 1 \bmod 3$ and $N \ge 50$ the following choices for $n$ and $m$ lead to high runtimes of \concorde:

\begin{equation}
n := \lfloor\frac{3 N - 40}{10}\rfloor  \mbox{~~~~~ and ~~~~~}   m := \frac{N + 2}{3} - n 
\label{eqn:nm}
\end{equation}

\begin{figure}[ht]
\center
\includegraphics[width=0.9\textwidth]{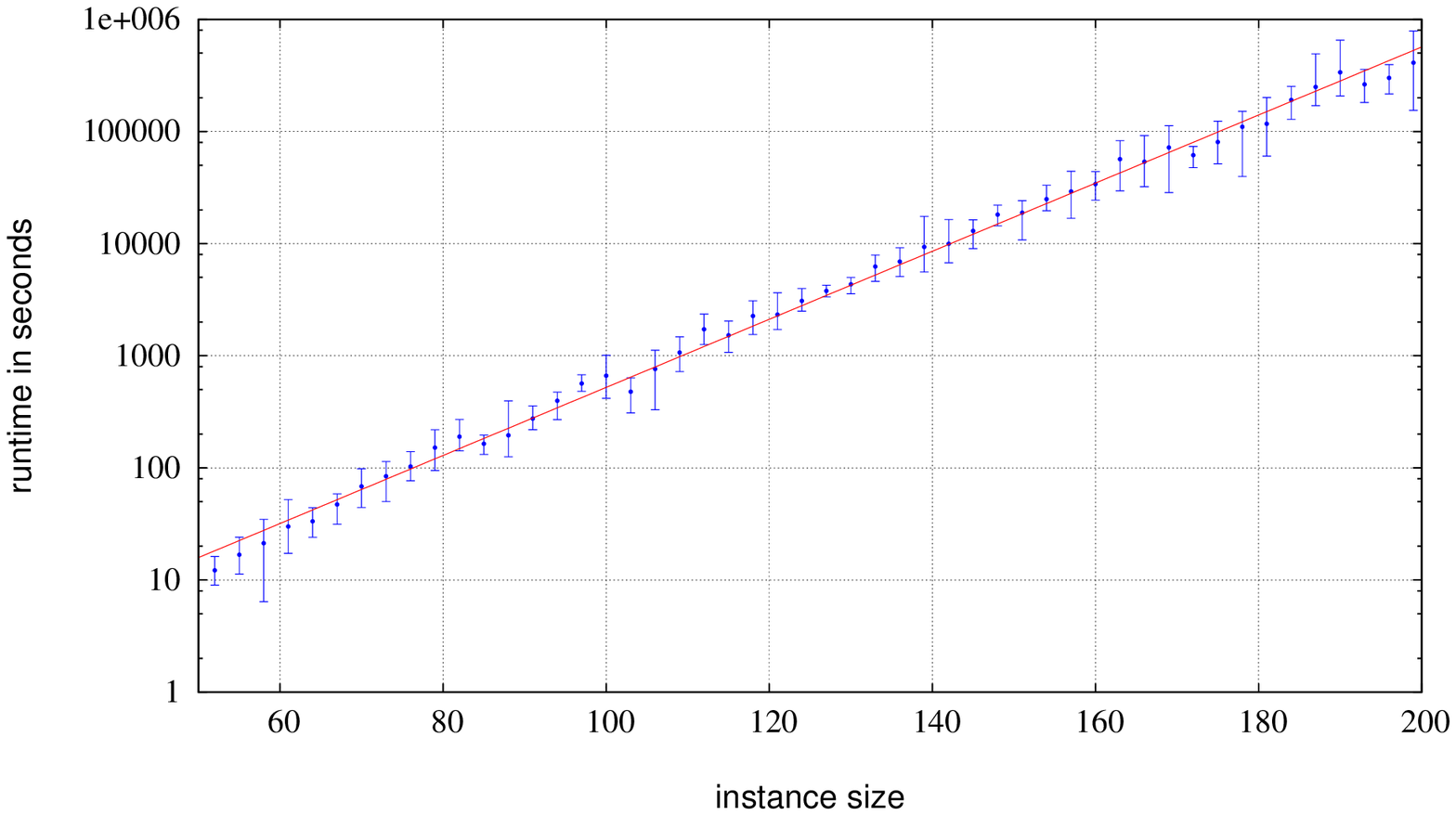}
\caption{\small Runtimes of \concorde\ on the $T_{n,m}$ instances with at least 52 and at most 200 vertices.
The values $n$ and $m$ were selected according to the equations~(\ref{eqn:nm}).
The x-axis shows the instance size while the y-axis gives the log scaled runtime. 
For each instance a vertical line is drawn indicating the minimal and maximal runtime seen over 10
runs using different random seeds. A dot on this line marks the average runtime of these 10 runs. The red line is a least-squares fit 
of the average runtimes. }
\label{fig:TnmRuntimes}
\end{figure}

Figure~\ref{fig:TnmRuntimes} shows the minimum, average and maximum runtime for all instances $T_{n,m}$ with at least 50
and at most 200 vertices and with $n$ and $m$ defined by the equations~(\ref{eqn:nm}). The red line shown in this figure is a 
least-squares fit of the average runtimes. It corresponds to the function 

\begin{equation}
\mbox{runtime in seconds} ~=~ 0.480 \cdot 1.0724^N
\label{eqn:RuntimeTnm}
\end{equation}

Thus, this function estimates the runtime in seconds needed by \concorde\ for the instances $T_{n,m}$ with 
$N = 3(n+m) - 2$ and $n$ and $m$ defined by the equations~(\ref{eqn:nm}).
From~(\ref{eqn:RuntimeTnm}) we get for example the following very rough runtime estimates:
\begin{itemize}
\item  $n=60$ and $m=12$ implies $N=214$ and runtime estimate 17 days.
\item  $n=71$ and $m=13$ implies $N=250$ and runtime estimate 216 days.
\item  $n=296$ and $m=38$ implies $N=1000$ and runtime estimate $3\cdot 10^{22}$ years.
% 10000 & 2996 & 338 & $6\cdot 10^{295}$ years
\end{itemize}

The largest runtime that we have measured for \concorde\ on a TSPLIB instance with at most 1000 vertices was 129.2 seconds
on the instance \texttt{dsj1000}. According to the above runtime estimates \concorde\ would need
for the 1000 vertex instance $T_{296,38}$ more than $10^{27}$ times as long.

\subsection{Comparison of Runtime Results}

In Figure~\ref{fig:RuntimeComparison} we compare the runtimes of all TSPLIB instances with up to 200 vertices with
the runtimes for the $T_{n,m}$  instances with $n$ and $m$ chosen according to~(\ref{eqn:nm}) and the $P_{n,d}$
instances with up to 200 points. As one can see already for quite small
instances the runtimes of \concorde\ on the $T_{n,m}$ instances are by several orders of magnitude larger than on the TSPLIB instances.
Moreover, \concorde's runtime on the $T_{n,m}$ instances is two orders of magnitude slower than for the $P_{n,d}$ instances.
Therefore, the $T_{n,m}$  instances with  $n$ and $m$ chosen according to~(\ref{eqn:nm}) may be useful benchmark instances for
TSP algorithms. All these instances are available for download at~\cite{Tnmurl}.  
It should be noted that there exist polynomial time algorithms for the $P_{n,d}$ instances and the $T_{n,m}$ instances if their special 
structure is exploited~\cite{RTW2001}.

\begin{figure}[ht]
\center
\includegraphics[width=0.9\textwidth]{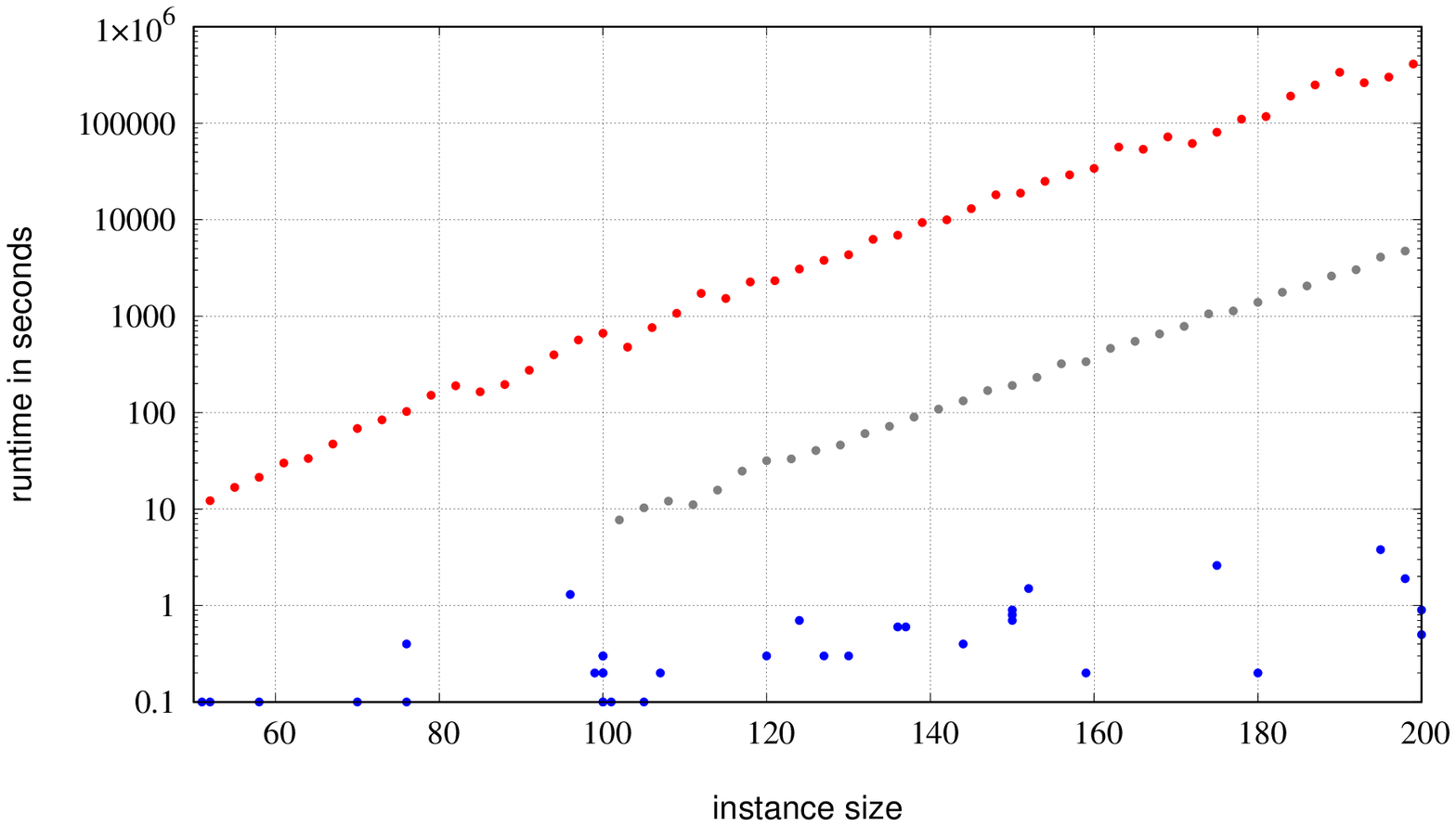}
\caption{\small Comparison of the runtimes of \concorde\ on the $T_{n,m}$ instances (red dots), 
on the $P_{n,d}$ instances (gray dots),
and on the TSPLIB instances (blue dots). The instances have sizes between 51 and at most 200 vertices.
The values $n$ and $m$ for the $T_{n,m}$ instances were chosen according to the equations~(\ref{eqn:nm}).
The x-axis shows the instance size while the y-axis gives the log scaled runtime in seconds. 
The runtime shown for the $T_{n,m}$ instances is the average taken over 10 independent runs of \concorde.
The runtime shown for the $P_{n,d}$ instances and the TSPLIB instances is the average taken over 100 independent runs of \concorde.}
\label{fig:RuntimeComparison}
\end{figure}

\bibliographystyle{plain}

\begin{thebibliography}{10}


\bibitem{ABCC2006} 
  David Applegate, Robert Bixby, Vasek Chv\'atal, and William Cook. The Traveling Salesman Problem. A Computational Study.
  Princeton University Press; 2006

\bibitem{ABCC2003b}
David Applegate, Robert Bixby, Vasek Chv\'atal, and William Cook.
\newblock Concorde-03.12.19.
\newblock available at:
  \url{http://www.math.uwaterloo.ca/tsp/concorde/downloads/downloads.htm}, 
last accessed 2020/2/26.

\bibitem{Aro1998}
Sanjeev Arora.
\newblock Polynomial time approximation schemes for {E}uclidean traveling
  salesman and other geometric problems.
\newblock {\em Journal of the ACM}, 45(5):753--782, September 1998.

\bibitem{Chr1976}
Nicos Christofides.
\newblock Worst-case analysis of a new heuristic for the travelling salesman
  problem.
\newblock Technical report, Carnegie-Mellon University, 1976.

\bibitem{DFJ1954}
G.~Dantzig, R.~Fulkerson, and S.~Johnson.
\newblock Solution of a large-scale traveling-salesman problem.
\newblock {\em Journal of the Operations Research Society of America},
  2(4):393--410, 1954.

\bibitem{DDR1994}
Vladimir~G. Deineko, Ren\'e van Dal, and G\"unter Rote.
\newblock The convex-hull-and-line {T}raveling {S}alesman {P}roblem: A solvable
  case.
\newblock {\em Information Processing Letters}, 51:141--148, 1994.

\bibitem{Flo1956}
Merrill~M. Flood.
\newblock The traveling-salesman problem.
\newblock {\em Operations Research}, 4(1):61--75, 1956.

\bibitem{GGJ1976}
M.~R. Garey, R.~L. Graham, and D.~S. Johnson.
\newblock Some {NP}-complete geometric problems.
\newblock In {\em Proceedings of the Eighth Annual ACM Symposium on Theory of
  Computing (STOC '76 )}, pages 10--22, 1976.

\bibitem{GLS1981}
M.~Gr\"otschel, L.~Lov\'asz, and A.~Schrijver.
\newblock The ellipsoid method and its consequences in combinatorial
  optimization.
\newblock {\em Combinatorica}, 1:169--197, 1981.

\bibitem{Hou2014}
Stefan Hougardy.
\newblock On the integrality ratio of the subtour {LP} for {E}uclidean {TSP}.
\newblock {\em Operations Research Letters}, 42(8):495--499, 2014.

\bibitem{Tnmurl}
Stefan Hougardy and Xianghui Zhong.
\newblock The ${T}_{n,m}$ instances in {TSPLIB} format.
\newblock available at \url{https://doi.org/10.5281/zenodo.3688446} and \url{http://www.or.uni-bonn.de/~hougardy/HardTSPInstances.html},
  2018, last accessed 2020/2/26.

\bibitem{Kar1972}
Richard~M. Karp.
\newblock Reducibility among combinatorial problems.
\newblock In Raymond~E. Miller and James~W. Thatcher, editors, {\em Complexity
  of Computer Computations}, New York -- London, 1972. Plenum Press.

\bibitem{KLS2015}
Marek Karpinski, Michael Lampis, and Richard Schmied.
\newblock New inapproximability bounds for {TSP}.
\newblock {\em Journal of Computer and System Sciences}, 81:1665--1677, 2015.

\bibitem{Pap1977}
Christos~H. Papadimitriou.
\newblock The euclidean traveling salesman problem is {NP}-complete.
\newblock {\em Theoretical Computer Science}, 4(3):237--244, June 1977.

\bibitem{Rei1995}
Gerhard Reinelt.
\newblock {TSPLIB} 95.
\newblock Technical report, Universit\"at Heidelberg, 1995.
\newblock available at:
  \url{http://comopt.ifi.uni-heidelberg.de/software/TSPLIB95/tsp95.pdf}, last accessed 2020/2/26.


\bibitem{RTW2001}
J.H. Rubinstein, D.A. Thomas, and N.C. Wormald.
\newblock A Polynomial Algorithm for a Constrained Traveling Salesman Problem.
\newblock {\em Networks}, 38: 68--75, 2001.


\bibitem{serdjukov}
A.~I. Serdjukov.
\newblock Some extremal bypasses in graphs [in {R}ussian].
\newblock {\em Upravlyaemye Sistemy}, 17:76--79, 1978.

\bibitem{TSPLIBRice}
TSPLIB 
\newblock available at:
\url{http://softlib.rice.edu/pub/tsplib/}, last accessed 2020/2/26.


\bibitem{Wil1990}
David~Paul Williamson.
\newblock Analysis of the {H}eld-{K}arp heuristic for the traveling salesman
  problem.
\newblock Master's thesis, Massachusetts Institute of Technology, 1990.

\bibitem{Wol1980}
Laurence~A. Wolsey.
\newblock Heuristic analysis, linear programming and branch and bound.
\newblock {\em Mathematical Programming Study}, 13:121--134, 1980.

\end{thebibliography}

\end{document}